\gdef\fps@figure{!htbp}}
\let\realbfseries=\bfseries
\def\bfseries{\realbfseries\boldmath}
\newtheorem{theorem}{Theorem}[section]
\newtheorem{lemma}[theorem]{Lemma}
\newtheorem{corollary}[theorem]{Corollary}
\newtheorem{definition}[theorem]{Definition}
\newtheorem{problem}{Problem}
 \gdef\xxxmark{%
   \expandafter\ifx\csname @mpargs\endcsname\relax % in minipage?
     \expandafter\ifx\csname @captype\endcsname\relax % in figure/caption?
       \marginpar{xxx}% not in a caption or minipage, can use marginpar
     \else
       xxx % notice trailing space
     \fi
   \else
     xxx % notice trailing space
   \fi}
 \gdef\xxx{\@ifnextchar[\xxx@lab\xxx@nolab}
 \long\gdef\xxx@lab[#1]#2{\textbf{[\xxxmark #2 ---{\sc #1}]}}
 \long\gdef\xxx@nolab#1{\textbf{[\xxxmark #1]}}
\def\GrabProofArgument[#1]{ #1: \egroup\ignorespaces}
\def\proof{\noindent\textbf\bgroup Proof%
           \@ifnextchar[{\GrabProofArgument}{: \egroup\ignorespaces}}
\begin{document}

\title{Solving the Rubik's Cube Optimally is NP-complete}
\author{
Erik D. Demaine%
    \thanks{MIT Computer Science and Artificial Intelligence Laboratory,
      32 Vassar St., Cambridge, MA 02139, USA,
      \protect\url{edemaine@mit.edu}}
\and
  Sarah Eisenstat\footnotemark[1]
\and
  Mikhail Rudoy%
  \thanks{MIT Computer Science and Artificial Intelligence Laboratory,
      32 Vassar St., Cambridge, MA 02139, USA,
      \protect\url{mrudoy@gmail.com}. Now at Google Inc.}
}
\date{}

\maketitle

\begin{abstract}
In this paper, we prove that optimally solving an $n \times n \times n$ Rubik's Cube is NP-complete by reducing from the Hamiltonian Cycle problem in square grid graphs. This improves the previous result that optimally solving an $n \times n \times n$ Rubik's Cube with missing stickers is NP-complete. We prove this result first for the simpler case of the Rubik's Square---an $n \times n \times 1$ generalization of the Rubik's Cube---and then proceed with a similar but more complicated proof for the Rubik's Cube case. Our results hold both when the goal is make the sides monochromatic and when the goal is to put each sticker into a specific location.
\end{abstract}

\section{Introduction}

The Rubik's Cube is an iconic puzzle in which the goal is to rearrange the stickers on the outside of a $3 \times 3 \times 3$ cube so as to make each face monochromatic by rotating $1 \times 3 \times 3$ (or $3 \times 1 \times 3$ or $3 \times 3 \times 1$) slices. In some versions where the faces show pictures instead of colors, the goal is to put each sticker into a specific location. The $3 \times 3 \times 3$ Rubik's Cube can be generalized to an $n \times n \times n$ cube in which a single move is a rotation of a $1 \times n \times n$ slice. We can also consider the generalization to an $n \times n \times 1$ figure. In this simpler puzzle, called the $n \times n$ Rubik's Square, the allowed moves are flips of $n \times 1 \times 1$ rows or $1 \times n \times 1$ columns. These two generalizations were introduced in \cite{demaine}.

The overall purpose of this paper is to address the computational difficulty of optimally solving these puzzles. In particular, consider the decision problem which asks for a given puzzle configuration whether that puzzle can be solved in a given number of moves. We show that this problem is NP-complete for the $n \times n$ Rubik's Square and for the $n \times n \times n$ Rubik's Cube under two different move models. These results close a problem that has been repeatedly posed as far back as 1984 \cite{cook, ratner, stackexchange} and has until now remained open \cite{kendall}.

In Section~\ref{section:puzzle_problems}, we formally introduce the decision problems regarding Rubik's Squares and Rubik's Cubes whose complexity we will analyze. Then in Section~\ref{section:promise_problems}, we introduce the variant of the Hamiltonicity problem that we will reduce from---Promise Cubical Hamiltonian Path---and prove this problem to be NP-hard. Next, we prove that the problems regarding the Rubik's Square are NP-complete in Section~\ref{section:rubiks_square} by reducing from Promise Cubical Hamiltonian Path. After that, we apply the same ideas in Section~\ref{section:rubiks_cube} to a more complicated proof of NP-hardness for the problems regarding the Rubik's Cube. Finally, we discuss possible next steps in Section~\ref{section:next_steps}.

\section{Rubik's Cube and Rubik's Square problems}
\label{section:puzzle_problems}

\subsection{Rubik's Square}

We begin with a simpler model based on the Rubik's Cube which we will refer to as the Rubik's Square. In this model, a puzzle consists of an $n \times n$ array of unit cubes, called \emph{cubies} to avoid ambiguity. Every cubie face on the outside of the puzzle has a colored (red, blue, green, white, yellow, or orange) sticker. The goal of the puzzle is to use a sequence of moves to rearrange the cubies such that each face of the puzzle is monochromatic in a different color. A \emph{move} consists of flipping a single row or column in the array through space via a rotation in the long direction as demonstrated in Figure~\ref{fig:square_move}. 

\begin{figure}[h]
\centering
\includegraphics[width=.15\textwidth]{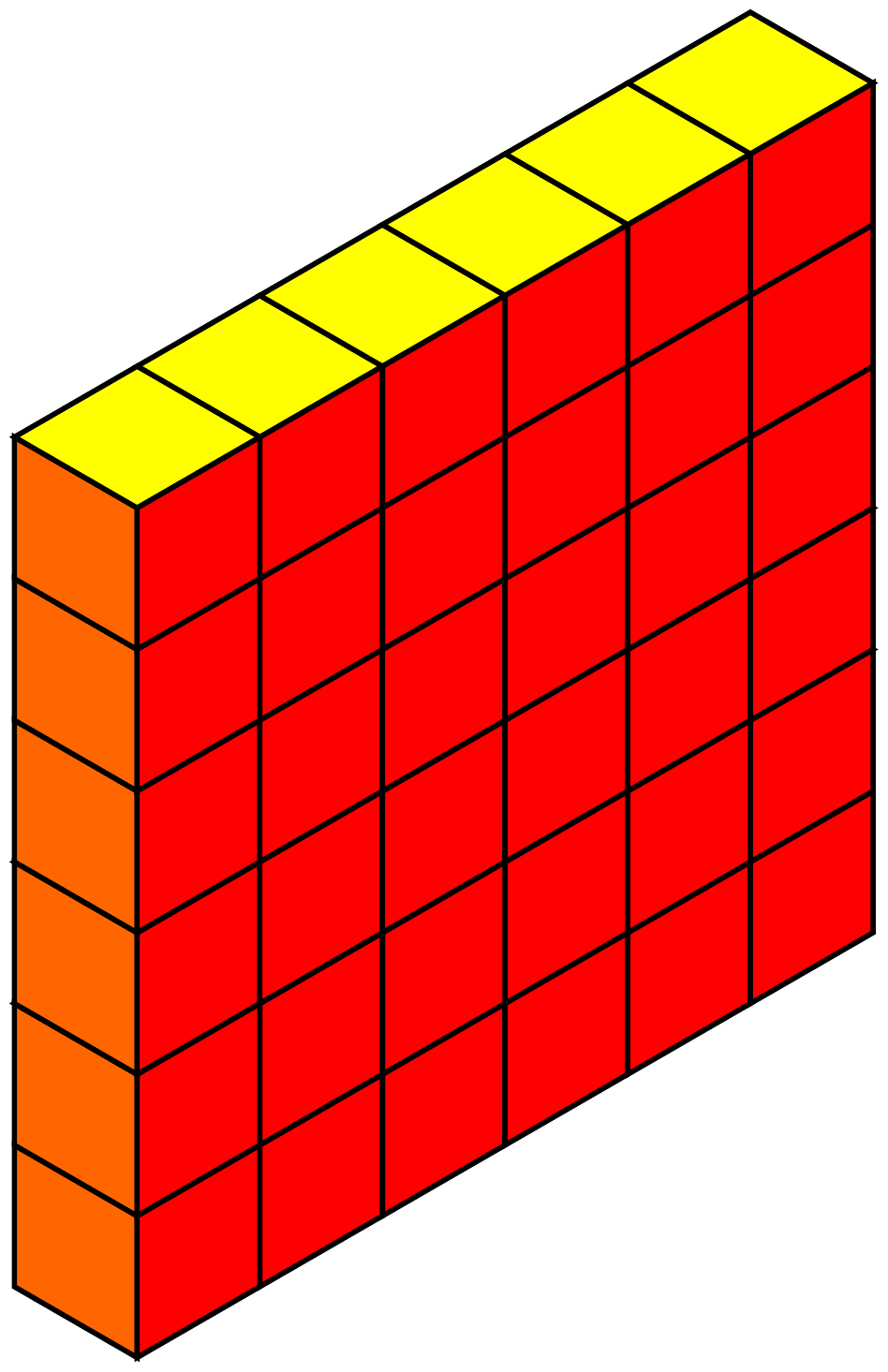}
\hfill
\includegraphics[width=.15\textwidth]{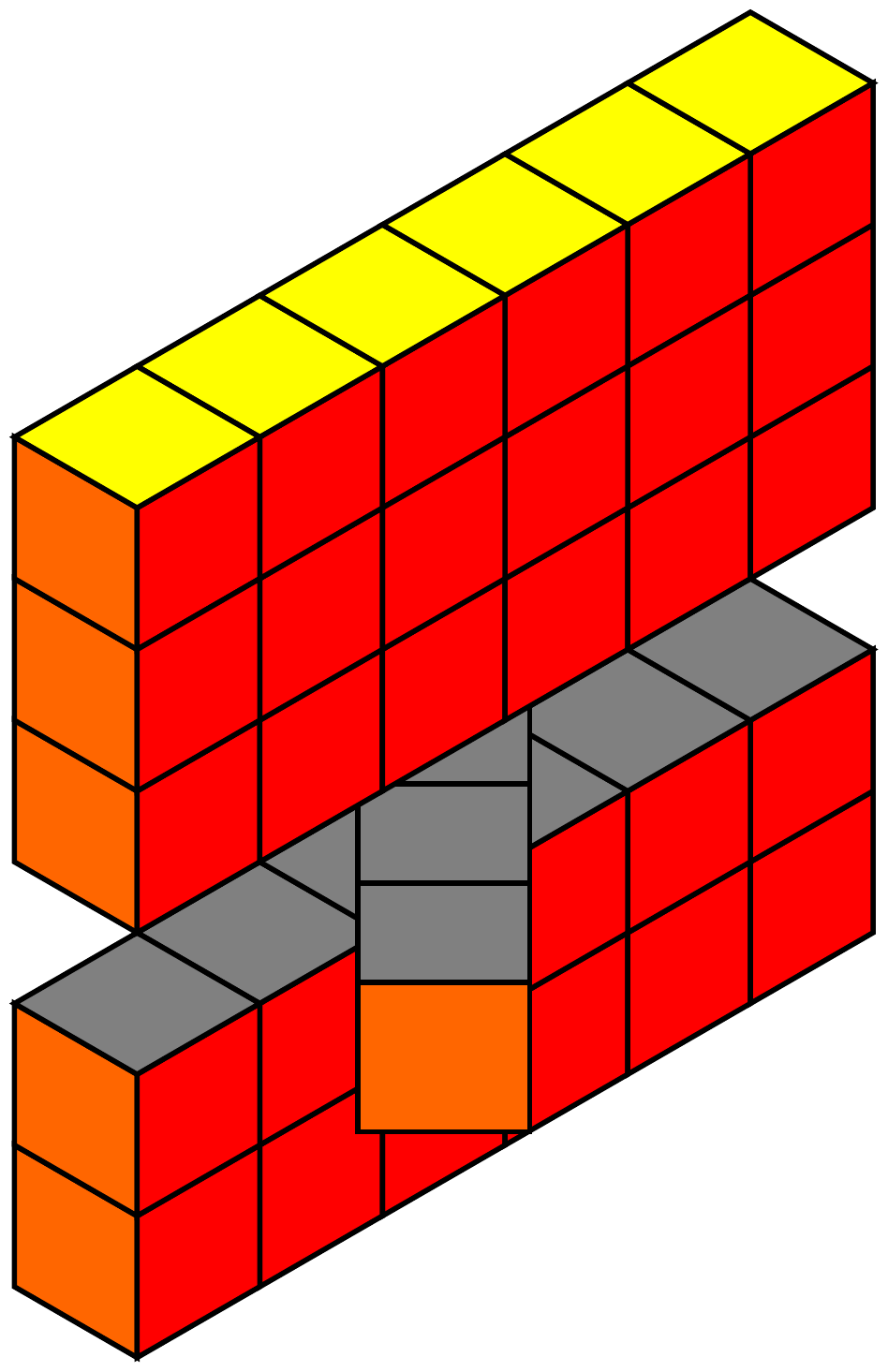}
\hfill
\includegraphics[width=.15\textwidth]{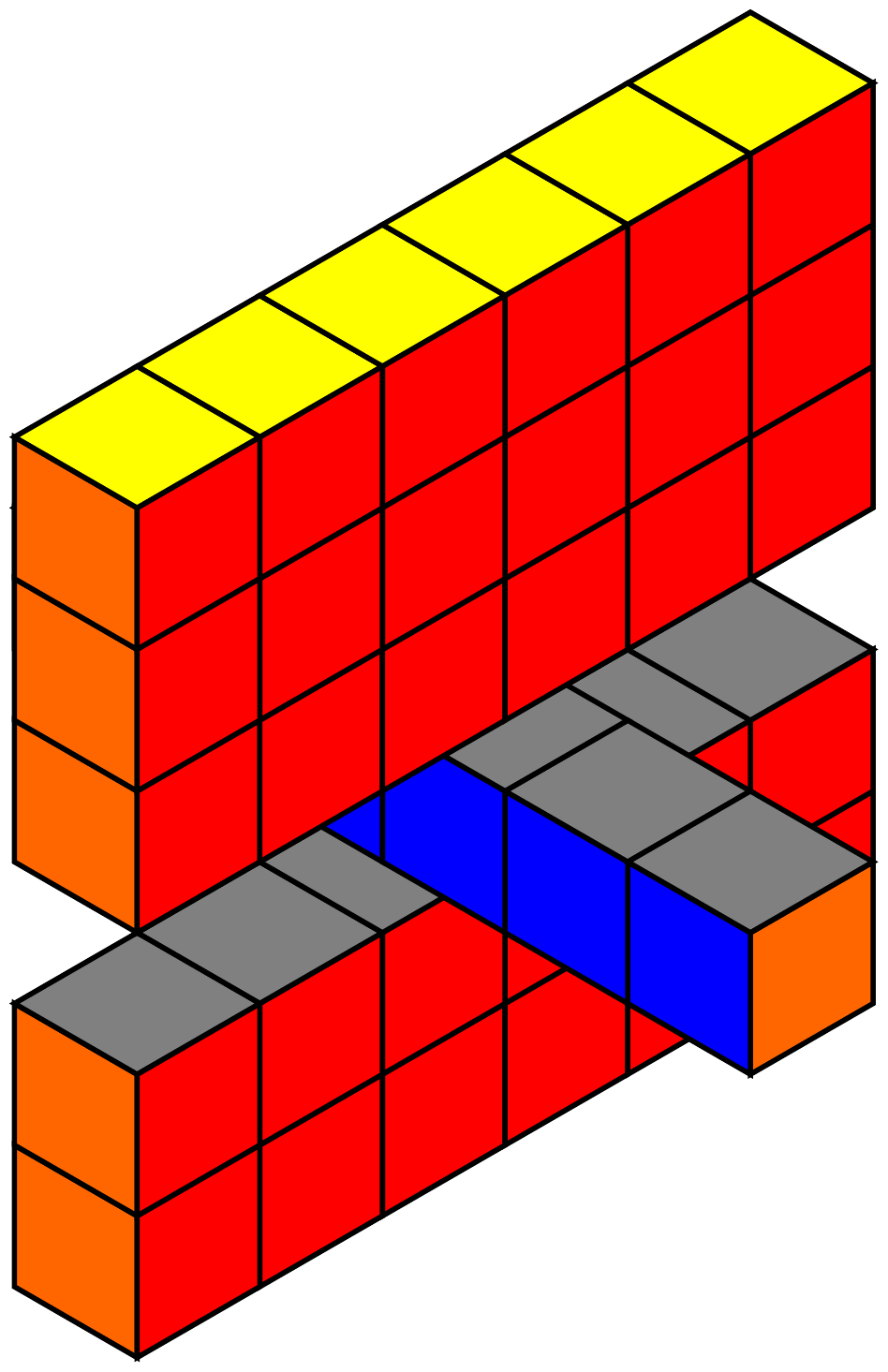}
\hfill
\includegraphics[width=.18\textwidth]{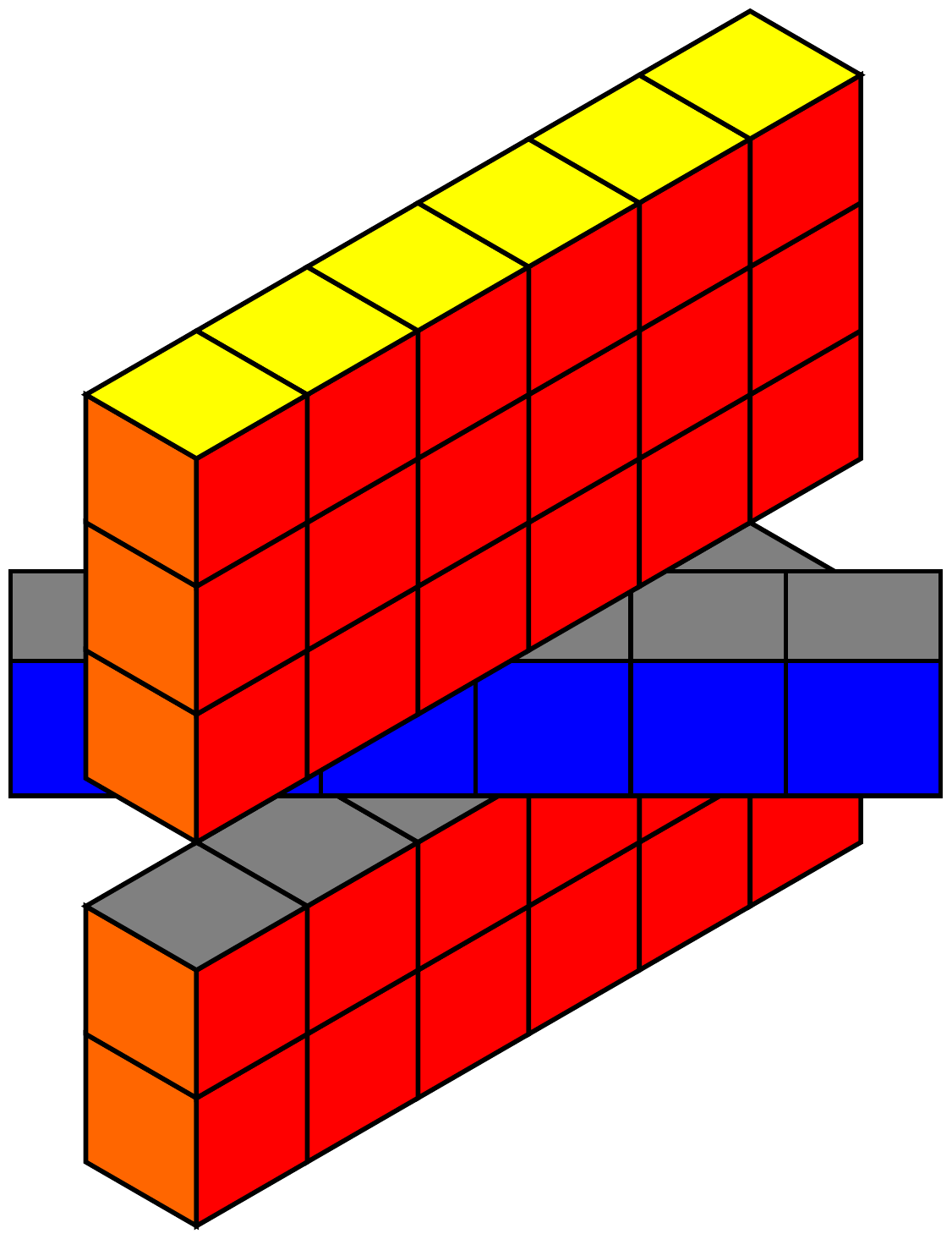}
\hfill
\includegraphics[width=.15\textwidth]{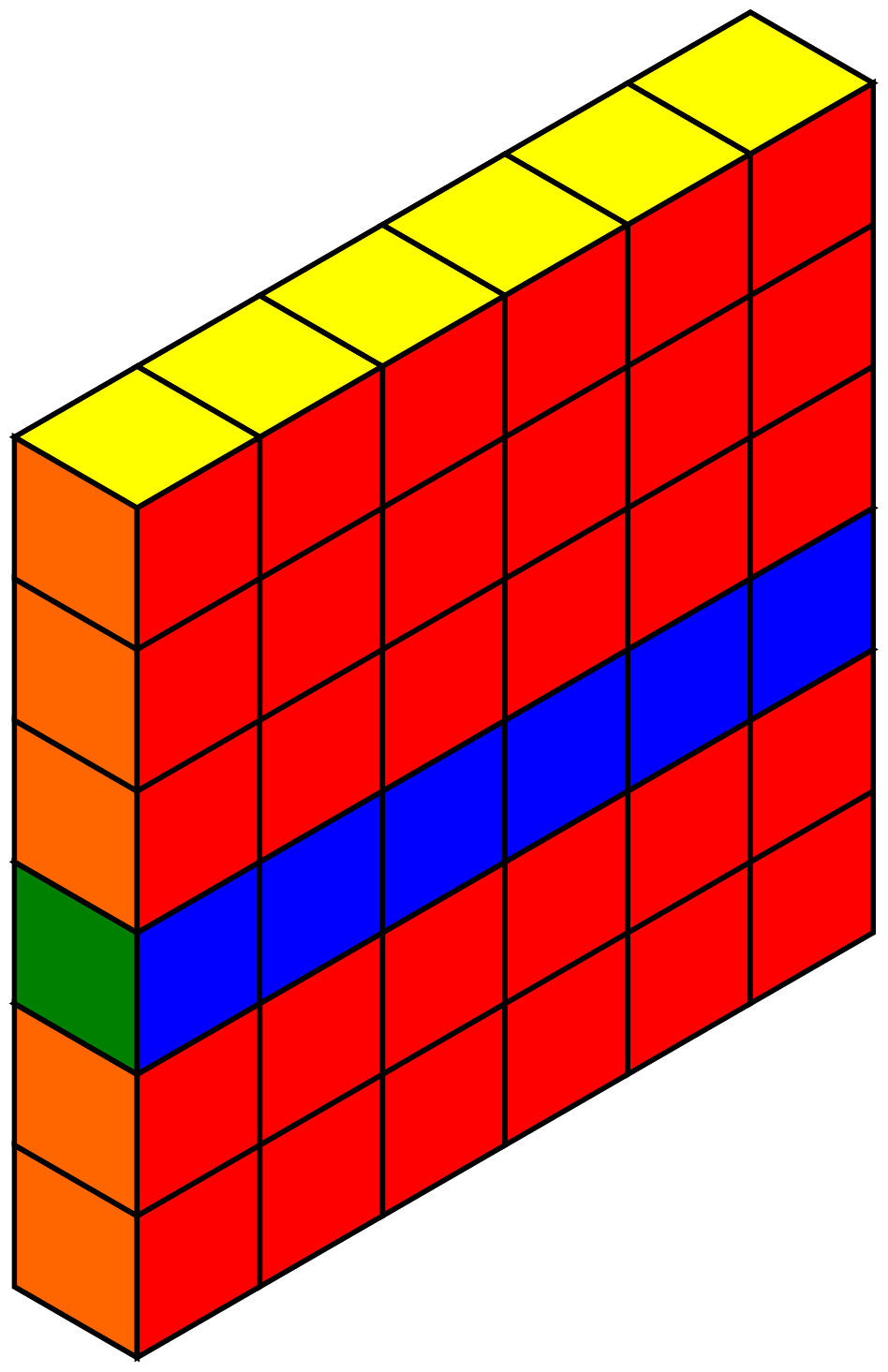}
\caption{A single move in an example $6 \times 6$ Rubik's Square.}
\label{fig:square_move}
\end{figure}

We are concerned with the following decision problem:

\begin{problem}
The \textbf{Rubik's Square} problem has as input an $n \times n$ Rubik's Square configuration and a value $k$. The goal is to decide whether a Rubik's Square in configuration $C$ can be solved in $k$ moves or fewer.
\end{problem}

Note that this type of puzzle was previously introduced in \cite{demaine} as the $n \times n \times 1$ Rubik's Cube. In that paper, the authors showed that deciding whether it is possible to solve the $n \times n \times 1$ Rubik's Cube in a given number of moves is NP-complete when the puzzle is allowed to have missing stickers (and the puzzle is considered solved if each face contains stickers of only one color). 

\subsection{Rubik's Cube}

Next consider the Rubik's Cube puzzle. An $n \times n \times n$ Rubik's Cube is a cube consisting of $n^3$ unit cubes called \emph{cubies}. Every face of a cubie that is on the exterior of the cube has a colored (red, blue, green, white, yellow, or orange) sticker. The goal of the puzzle is to use a sequence of moves to reconfigure the cubies in such a way that each face of the cube ends up monochromatic in a different color. A \emph{move count metric} is a convention for counting moves in a Rubik's Cube. Several common move count metrics for Rubik's Cubes are listed in \cite{wiki}. As discussed in \cite{forum}, however, many common move count metrics do not easily generalize to $n > 3$ or are not of any theoretical interest. In this paper, we will restrict our attention to two move count metrics called the Slice Turn Metric and the Slice Quarter Turn Metric. Both of these metrics use the same type of motion to define a move. Consider the subdivision of the Rubik's Cube's volume into $n$ \emph{slices} of dimension $1 \times n \times n$ (or $n \times 1 \times n$ or $n \times n \times 1$). In the Slice Turn Metric (STM), a \emph{move} is a rotation of a single slice by any multiple of $90^\circ$. Similarly, in the Slice Quarter Turn Metric (SQTM), a \emph{move} is a rotation of a single slice by an angle of $90^\circ$ in either direction. An example SQTM move is shown in Figure~\ref{fig:cube_move}.

\begin{figure}[h]
\centering
\includegraphics[width=.3\textwidth]{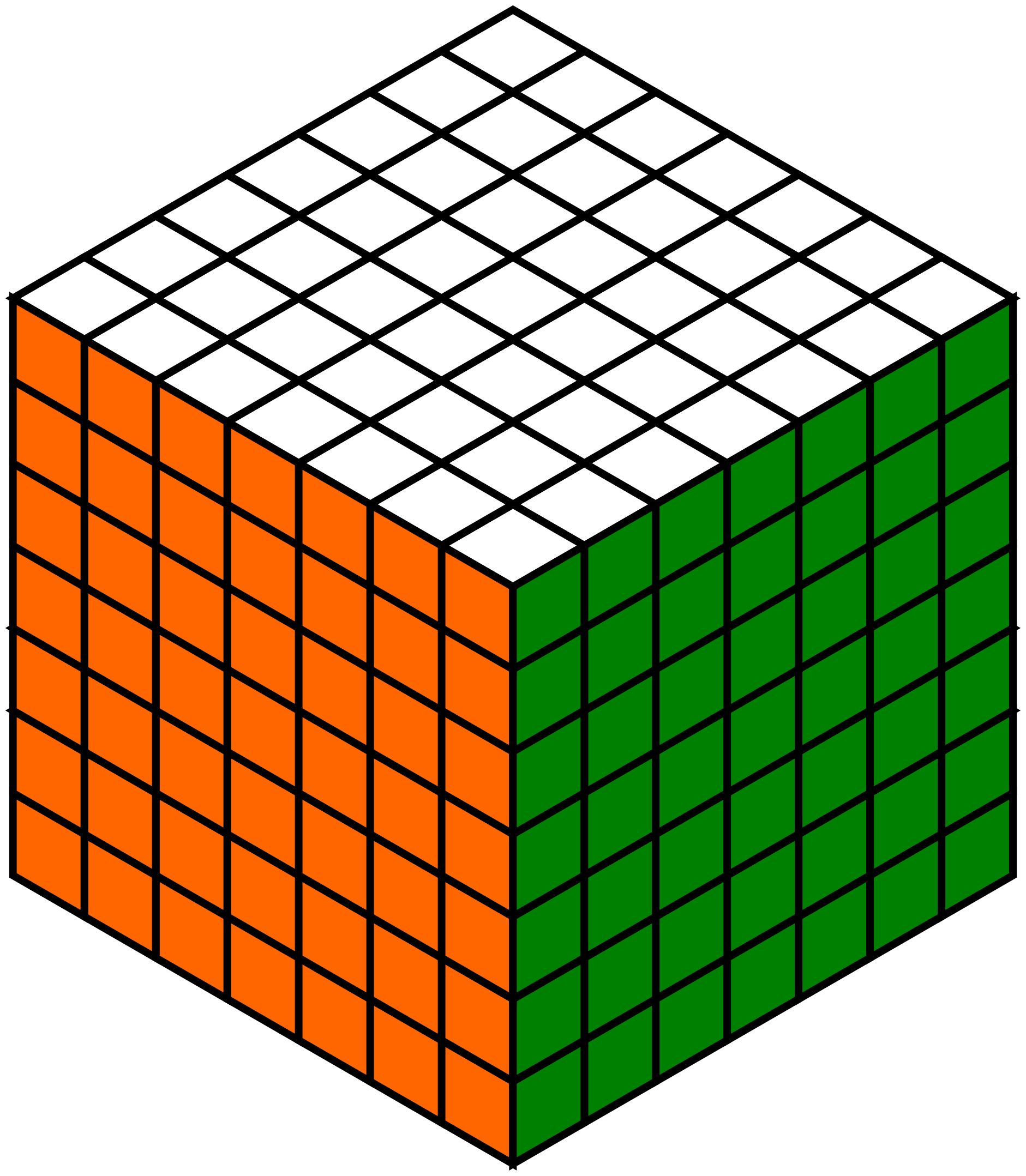}
\hfill
\includegraphics[width=.3\textwidth]{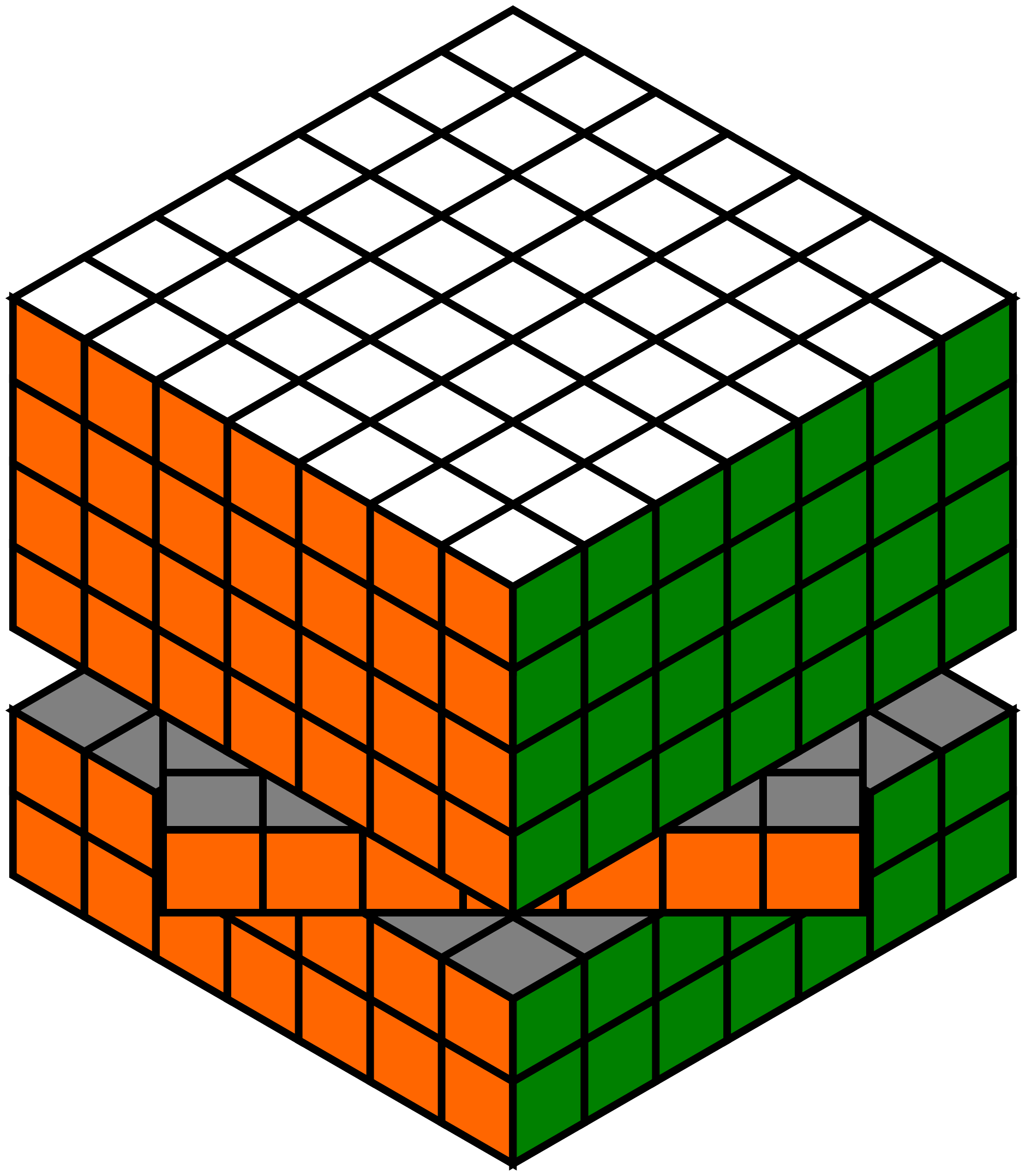}
\hfill
\includegraphics[width=.3\textwidth]{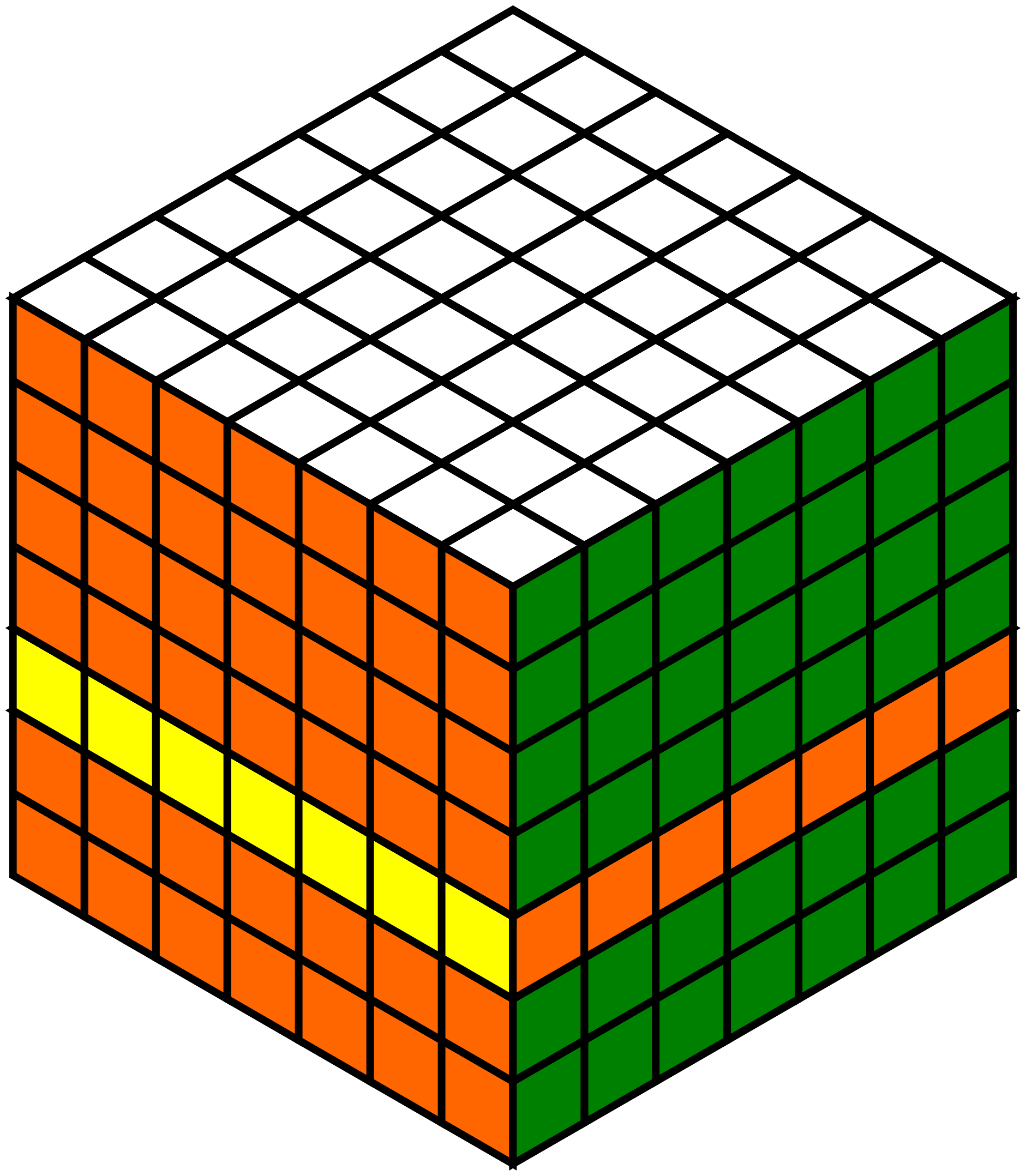}
\caption{A single slice rotation in an example $7 \times 7 \times 7$ Rubik's Cube.}
\label{fig:cube_move}
\end{figure}

We are concerned with the following decision problems:

\begin{problem}
The \textbf{STM/SQTM Rubik's Cube problem} takes as input a configuration of a Rubik's Cube together with a number $k$. The goal is to decide whether a Rubik's Cube in configuration $C$ can be solved in at most $k$ STM/SQTM moves.
\end{problem}

\subsection{Notation}

Next we define some notation for dealing with the Rubik's Cube and Rubik's Square problems.

To begin, we need a way to refer to cubies and stickers. For this purpose, we orient the puzzle to be axis-aligned. In the case of the Rubik's Square we arrange the $n \times n$ array of cubies in the $x$ and $y$ directions and we refer to a cubie by stating its $x$ and $y$ coordinates. In the case of the Rubik's Cube, we refer to a cubie by stating its $x$, $y$, and $z$ coordinates. To refer to a sticker in either puzzle, we need only specify the face on which that sticker resides (e.g. ``top'' or ``$+z$'') and also the two coordinates of the sticker along the surface of the face (e.g. the $x$ and $y$ coordinates for a sticker on the $+z$ face).

If $n = 2a+1$ is odd, then we will let the coordinates of the cubies in each direction range over the set $\{-a, -(a-1), \ldots, -1, 0, 1, \ldots, a-1, a\}$. This is equivalent to centering the puzzle at the origin. If, however, $n = 2a$ is even, then we let the coordinates of the cubies in each direction range over the set $\{-a, -(a-1), \ldots, -1\} \cup \{1, \ldots, a-1, a\}$. In this case, the coordinate scheme does not correspond with a standard coordinate sheme no matter how we translate the cube. This coordinate scheme is a good idea for the following reason: under this scheme, if a move relocates a sticker, the coordinates of that sticker remain the same up to permutation and negation.

Next, we need a way to distinguish the sets of cubies affected by a move from each other. 

In the Rubik's Square, there are two types of moves. The first type of move, which we will call a \emph{row move} or a \emph{$y$ move}, affects all the cubies with some particular $y$ coordinate. The second type of move, which we will call a \emph{column move} or an \emph{$x$ move} affects all the cubies with some particular $x$ coordinate. We will refer to the set of cubies affected by a row move as a \emph{row} and refer to the set of cubies affected by a column move as a \emph{column}. In order to identify a move, we must identify which row or column is being flipped, by specifying whether the move is a row or column move as well as the index of the coordinate shared by all the moved cubies (e.g. the index $-5$ row move is the move that affects the cubies with $y = -5$).

In the Rubik's Cube, each STM/SQTM move affects a single slice of $n^2$ cubies sharing some coordinate. If the cubies share an $x$ (or $y$ or $z$) coordinate, then we call the slice an \emph{$x$ (or $y$ or $z$) slice}. As with the Rubik's Square, we identify the slice by its normal direction together with its cubies' index in that direction (e.g. the $x = 3$ slice). We will also refer to the six slices at the boundaries of the Cube as \emph{face slices} (e.g. the $+x$ face slice). 

A move in a Rubik's Cube can be named by identifying the slice being rotated and the amount of rotation. We split this up into the following five pieces of information: the normal direction to the slice, the sign of the index of the slice, the absolute value of the index of the slice, the amount of rotation, and the direction of rotation. Splitting the information up in this way allows us not only to refer to individual moves (by specifying all five pieces of information) but also to refer to interesting sets of moves (by omitting one or more of the pieces of information).

To identify the normal direction to a slice, we simply specify $x$, $y$, or $z$; for example, we could refer to a move as an $x$ move whenever the rotating slice is normal to the $x$ direction. We will use two methods to identify the sign of the index of a moved slice. Sometimes we will refer to positive moves or negative moves, and sometimes we will combine this information with the normal direction and specify that the move is a $+x$, $-x$, $+y$, $-y$, $+z$, or $-z$ move. We use the term \emph{index-$v$ move} to refer to a move rotating a slice whose index has absolute value $v$. In the particular case that the slice rotated is a face slice, we instead use the term \emph{face move.} We refer to a move as a \emph{turn} if the angle of rotation is $90^\circ$ and as a \emph{flip} if the angle of rotation is $180^\circ$. In the case that the angle of rotation is $90^\circ$, we can specify further by using the terms \emph{clockwise turn} and \emph{counterclockwise turn}. We make the notational convention that clockwise and counterclockwise rotations around the $x$, $y$, or $z$ axes are labeled according to the direction of rotation when looking from the direction of positive $x$, $y$, or $z$.

We also extend the same naming conventions to the Rubik's Square moves. For example, a positive row move is any row move with positive index and an index-$v$ move is any move with index $\pm v$.

\subsection{Group-theoretic approach}

An alternative way to look at the Rubik's Square and Rubik's Cube problems is through the lens of group theory. The transformations that can be applied to a Rubik's Square or Rubik's Cube by a sequence of moves form a group with composition as the group operation. Define $RS_n$ to be the group of possible sticker permutations in an $n \times n$ Rubik's Square and define $RC_n$ to be the group of possible sticker permutations in an $n \times n \times n$ Rubik's Cube. 

Consider the moves possible in an $n\times n$ Rubik's Square or an $n \times n \times n$ Rubik's Cube. Each such move has a corresponding element in group $RS_n$ or $RC_n$. 

For the Rubik's Square, let $x_i \in RS_n$ be the transformation of flipping the column with index $i$ in an $n \times n$ Rubik's Square and let $y_i$ be the transformation of flipping the row with index $i$ in the Square. Then if $I$ is the set of row/column indices in an $n\times n$ Rubik's Square we have that $RS_n$ is generated by the set of group elements $\bigcup_{i\in I}\{x_i, y_i\}$.

Similarly, for the Rubik's Cube, let $x_i$, $y_i$, and $z_i$ in $RC_n$ be the transformations corresponding to clockwise turns of $x$, $y$, or $z$ slices with index $i$. Then if $I$ is the set of slice indices in an $n\times n \times n$ Rubik's Cube we have that $RC_n$ is generated by the set of group elements $\bigcup_{i\in I}\{x_i, y_i, z_i\}$. 

Using these groups we obtain a new way of identifying puzzle configurations. Let $C_0$ be a canonical solved configuration of a Rubik's Square or Rubik's Cube puzzle. For the $n \times n$ Rubik's Square, define $C_0$ to have top face red, bottom face blue, and the other four faces green, orange, yellow, and white in some fixed order. For the $n \times n \times n$ Rubik's Cube, let $C_0$ have the following face colors: the $+x$ face is orange, the $-x$ face is red, the $+y$ face is green, the $-y$ face is yellow, the $+z$ face is white, and the $-z$ face is blue. Then from any element of $RS_n$ or $RC_n$, we can construct a configuration of the corresponding puzzle by applying that element to $C_0$. In other words, every transformation $t \in RS_n$ or $t \in RC_n$ corresponds with the configuration $C_t = t(C_0)$ of the $n \times n$ Rubik's Square or $n \times n \times n$ Rubik's Cube that is obtained by applying $t$ to $C_0$. 

Using this idea, we define a new series of problems:

\begin{problem}
The \textbf{Group Rubik's Square} problem has as input a transformation $t \in RS_n$ and a value $k$. The goal is to decide whether the transformation $t$ can be reversed by a sequence of at most $k$ transformations corresponding to Rubik's Square moves. In other words, the answer is ``yes'' if and only if the transformation $t$ can be reversed by a sequence of at most $k$ transformations of the form $x_i$ or $y_i$.
\end{problem}

\begin{problem}
The \textbf{Group STM/SQTM Rubik's Cube} problem has as input a transformation $t \in RC_n$ and a value $k$. The goal is to decide whether the transformation $t$ can be reversed by a sequence of at most $k$ transformations corresponding with legal Rubik's Cube moves under move count metric STM/SQTM. 
\end{problem}

We can interpret these problems as variants of the Rubik's Square or Rubik's Cube problems. For example, the Rubik's Square problem asks whether it is possible (in a given number of moves) to unscramble a Rubik's Square configuration so that each face ends up monochromatic, while the Group Rubik's Square problem asks whether it is possible (in a given number of moves) to unscramble a Rubik's Square configuration so that each sticker goes back to its exact position in the originally solved configuration $C_0$. As you see, the Group Rubik's Square problem, as a puzzle, is just a more difficult variant of the puzzle: instead of asking the player to move all the stickers of the same color to the same face, this variant asks the player to move each stickers to the exact correct position. Similarly, the Group STM/SQTM Rubik's Cube problem as a puzzle asks the player to move each sticker to an exact position. These problems can have practical applications with physical puzzles. For example, some Rubik's Cubes have pictures split up over the stickers of each face instead of just monochromatic colors on the stickers. For these puzzles, as long as no two stickers are the same, the Group STM/SQTM Rubik's Cube problem is more applicable than the STM/SQTM Rubik's Cube problem (which can leave a face ``monochromatic'' but scrambled in image).

We formalize the idea that the Group version of the puzzle is a strictly more difficult puzzle in the following lemmas:

\begin{lemma}
\label{lemma:square_types}
If $(t, k)$ is a ``yes'' instance to the Group Rubik's Square problem, then $(t(C_0), k)$ is a ``yes'' instance to the Rubik's Square problem.
\end{lemma}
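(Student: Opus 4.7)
The plan is to unpack both problem definitions and observe that the solved state $C_0$ is itself monochromatic on every face, so any sequence of moves that achieves the stronger Group goal automatically achieves the weaker monochromatic goal.

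More concretely, I would begin by assuming $(t,k)$ is a ``yes'' instance of Group Rubik's Square. By definition this yields a sequence of at most $k$ move-transformations $s_1, s_2, \ldots, s_j \in RS_n$, each of the form $x_i$ or $y_i$, such that their composition reverses $t$, i.e.\ $s_j \circ \cdots \circ s_1 \circ t$ is the identity element of $RS_n$.

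Next, I would interpret what happens when the corresponding physical moves are applied to the configuration $t(C_0)$. Since applying the move whose group element is $s$ to a configuration of the form $u(C_0)$ produces the configuration $(s \circ u)(C_0)$, performing the moves $s_1, \ldots, s_j$ in order on $t(C_0)$ yields
\[
(s_j \circ \cdots \circ s_1 \circ t)(C_0) = \mathrm{id}(C_0) = C_0.
\]
Finally, I would invoke the definition of $C_0$ for the $n \times n$ Rubik's Square: by construction, each face of $C_0$ is monochromatic (top red, bottom blue, and the four lateral faces the four remaining colors). Hence the sequence of $j \leq k$ moves solves $t(C_0)$ in the sense required by the Rubik's Square problem, showing that $(t(C_0),k)$ is a ``yes'' instance.

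I do not expect any real obstacle here: the statement is essentially a sanity check that the Group problem is indeed at least as hard as the monochromatic problem. The only thing to be careful about is to make sure the convention for how group elements act on configurations is consistent with the composition order---i.e.\ whether $s_j \circ \cdots \circ s_1$ should be applied on the left of $t$ or on the right---but either way the argument reduces to a one-line substitution once $C_0$ is recognized as a solved configuration in the monochromatic sense.
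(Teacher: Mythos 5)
Your proposal is correct and follows essentially the same argument as the paper: apply the inverting sequence of at most $k$ move-transformations to $t(C_0)$, obtain $C_0$, and note that $C_0$ is a solved (monochromatic) configuration. The paper's version is just a more terse rendering of the same one-line substitution.
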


\begin{lemma}
\label{lemma:cube_types_1}
If $(t, k)$ is a ``yes'' instance to the Group STM/SQTM Rubik's Cube problem, then $(t(C_0), k)$ is a ``yes'' instance to the STM/SQTM Rubik's Cube problem.
\end{lemma}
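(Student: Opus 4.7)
The plan is to observe that the Group problem is strictly more demanding than the monochromatic one, because any sequence of moves that reverses $t$ in the group sense, when applied physically to the configuration $t(C_0)$, literally returns the cube to the canonical solved configuration $C_0$, which in particular has monochromatic faces.

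Concretely, I would unfold the definitions as follows. Suppose $(t,k)$ is a yes-instance to the Group STM/SQTM Rubik's Cube problem. Then there exist STM/SQTM move-transformations $g_1, g_2, \ldots, g_\ell \in RC_n$ with $\ell \le k$ such that their composition $s = g_\ell \circ \cdots \circ g_1$ satisfies $s \circ t = e$, the identity of $RC_n$. Each $g_i$ corresponds to an actual legal STM/SQTM slice rotation $m_i$. Apply the sequence $m_1, m_2, \ldots, m_\ell$ physically to the configuration $t(C_0)$. Since $RC_n$ acts on configurations through $C \mapsto g(C)$ compatibly with composition of moves, the resulting configuration is $s(t(C_0)) = (s \circ t)(C_0) = e(C_0) = C_0$.

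Now $C_0$ is by definition the canonical solved configuration, which has each face monochromatic in the prescribed color. Hence the sequence of at most $k$ physical STM/SQTM moves $m_1, \ldots, m_\ell$ solves the configuration $t(C_0)$ in the sense required by the STM/SQTM Rubik's Cube problem, so $(t(C_0), k)$ is a yes-instance of that problem.

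There is essentially no obstacle here; the only point that warrants a sentence of care is the compatibility of the group action with composition of physical moves, i.e.\ that the group element associated to performing $m_i$ then $m_j$ really is $g_j \circ g_i$. This is immediate from the way $RC_n$ was defined as the group generated by the move transformations. The proof of Lemma~\ref{lemma:square_types} is identical with $RC_n$ replaced by $RS_n$ and slice rotations replaced by row/column flips.
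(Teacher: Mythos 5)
Your proposal is correct and is essentially identical to the paper's own argument: the paper likewise observes that the at most $k$ move-transformations inverting $t$ can be applied physically to $t(C_0)$ to yield $C_0$, which is solved. You simply spell out the compatibility of the group action with composition more explicitly, which the paper leaves implicit.
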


The proof of each of these lemmas is the same. If $(t, k)$ is a ``yes'' instance to the Group variants of the puzzle problems, then $t$ can be inverted using at most $k$ elements corresponding to moves. Applying exactly those moves to $t(C_0)$ yields configuration $C_0$, which is a solved configuration of the cube. Thus it is possible to solve the puzzle in configuration $t(C_0)$ in at most $k$ moves. In other words, $(t(C_0), k)$ is a ``yes'' instance to the non-Group variant of the puzzle problem.

At this point it is also worth mentioning that the Rubik's Square with SQTM move model is a strictly more difficult puzzle than the Rubik's Square with STM move model:

\begin{lemma}
\label{lemma:cube_types_2}
If $(C, k)$ is a ``yes'' instance to the SQTM Rubik's Cube problem, then it is also a ``yes'' instance to the STM Rubik's Cube problem. Similarly, if $(t, k)$ is a ``yes'' instance to the Group SQTM Rubik's Cube problem, then it is also a ``yes'' instance to the Group STM Rubik's Cube problem.
\end{lemma}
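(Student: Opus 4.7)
The plan is to observe that the SQTM move set is literally a subset of the STM move set, so any SQTM solution sequence can be reinterpreted verbatim as an STM solution sequence of the same length. Concretely, an SQTM move is defined in the excerpt as a rotation of a single slice by $90^\circ$ in either direction, while an STM move is defined as a rotation of a single slice by any multiple of $90^\circ$. Since $\pm 90^\circ$ is a multiple of $90^\circ$, every SQTM move is already an STM move.

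For the first claim, I would start with a ``yes'' instance $(C, k)$ of the SQTM Rubik's Cube problem. By definition there exists a sequence of at most $k$ SQTM moves transforming $C$ into a solved configuration. I would then apply exactly the same sequence to $C$ as a sequence of STM moves; each individual move is legal under STM, the sequence has the same length, and it produces the same final configuration (since the physical effect of a move does not depend on which metric we are counting under). Hence $(C, k)$ is a ``yes'' instance to the STM Rubik's Cube problem.

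For the second claim, the argument is essentially identical at the group level. The generators of $RC_n$ available as single moves under SQTM are the $\pm 90^\circ$ slice rotations, i.e.\ the elements $x_i^{\pm 1}, y_i^{\pm 1}, z_i^{\pm 1}$, while those available under STM additionally include their squares ($180^\circ$ flips). Any length-at-most-$k$ factorization of $t^{-1}$ into SQTM generators is, verbatim, a length-at-most-$k$ factorization of $t^{-1}$ into STM generators, so a ``yes'' instance $(t,k)$ of the Group SQTM problem is a ``yes'' instance of the Group STM problem.

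There is no real obstacle here; the lemma is essentially a tautology once we recognize the inclusion of move sets. The only thing worth being careful about is verifying that what counts as ``one move'' in SQTM is indeed counted as ``one move'' in STM (rather than, say, being split into a pair of moves), but this is immediate from the definitions in Section~\ref{section:puzzle_problems}.
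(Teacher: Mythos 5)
Your proposal is correct and matches the paper's own argument, which likewise just observes that every SQTM move is a legal STM move and reuses the same solving/inverting sequence verbatim. Nothing further is needed.
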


To prove this lemma, note that every move in the SQTM move model is a legal move in the STM move model. Then if configuration $C$ can be solved in $k$ or fewer SQTM moves, it can certainly also be solved in $k$ or fewer STM moves. Similarly, if $t$ can be inverted using at most $k$ transformations corresponding to SQTM moves, then it can also be inverted using at most $k$ transformations corresponding to STM moves.

\subsection{Membership in NP}

Consider the graph whose vertices are transformations in $RS_n$ (or $RC_n$) and whose edges $(a,b)$ connect transformations $a$ and $b$ for which $a^{-1}b$ is the transformation corresponding to a single move (under the standard Rubik's Square move model or under the STM or SQTM move model). It was shown in \cite{demaine} that the diameter of this graph is $\Theta(\frac{n^2}{\log n})$. This means that any achievable transformation of the puzzle (any transformation in $RS_n$ or $RC_n$) can be reached using a polynomial $p(n)$ number of moves. 

Using this fact, we can build an NP algorithm solving the (Group) STM/SQTM Rubik's Cube and the (Group) Rubik's Square problems. In these problems, we are given $k$ and either a starting configuration or a transformation, and we are asked whether it is possible to solve the configuration/invert the transformation in at most $k$ moves. The NP algorithm can nondeterministically make $\min(k, p(n))$ moves and simply check whether this move sequence inverts the given transformation or solves the given puzzle configuration. 

If any branch accepts, then certainly the answer to the problem is ``yes'' (since that branch's chosen sequence of moves is a solving/inverting sequence of moves of length at most $k$). On the other hand, if there is a solving/inverting sequence of moves of length at most $k$, then there is also one that has length both at most $k$ and at most $p(n)$. This is because $p(n)$ is an upper bound on the diameter of the graph described above. Thus, if the answer to the problem is ``yes'', then there exists a solving/inverting sequence of moves of length at most $\min(k, p(n))$, and so at least one branch accepts. As desired, the algorithm described is correct. Therefore, we have established membership in NP for the problems in question.

\section{Hamiltonicity variants}
\label{section:promise_problems}

To prove the problems introduced above hard, we need to introduce several variants of the Hamiltonian cycle and path problems.

It is shown in \cite{itai} that the following problem is NP-complete. 

\begin{problem}
A \emph{square grid graph} is a finite induced subgraph of the infinite square lattice. The \emph{Grid Graph Hamiltonian Cycle} problem asks whether a given square grid graph with no degree-$1$ vertices has a Hamiltonian cycle.
\end{problem}

Starting with this problem, we prove that the following promise version of the grid graph Hamiltonian path problem is also NP-hard.

\begin{problem}
\label{prob:promise_grid_graph_hamiltonian_path}
The \emph{Promise Grid Graph Hamiltonian Path} problem takes as input a square grid graph $G$ and two specified vertices $s$ and $t$ with the promise that any Hamiltonian path in $G$ has $s$ and $t$ as its start and end respectively. The problem asks whether there exists a Hamiltonian path in $G$.
\end{problem}

The above problem is more useful, but it is still inconvenient in some ways. In particular, there is no conceptually simple way to connect a grid graph to a Rubik's Square or Rubik's Cube puzzle. It is the case, however, that every grid graph is actually a type of graph called a ``cubical graph''. Cubical graphs, unlike grid graphs, can be conceptually related to Rubik's Cubes and Rubik's Squares with little trouble. 

So what is a cubical graph? Let $H_m$ be the $m$ dimensional hypercube graph; in particular, the vertices of $H_m$ are the bitstrings of length $m$ and the edges connect pairs of bitstrings whose Hamming distance is exactly one. Then a \emph{cubical graph} is any induced subgraph of any hypercube graph $H_m$.

Notably, when embedding a grid graph into a hypercube, it is always possible to assign the bitstring label $00\ldots0$ to any vertex. Suppose we start with Promise Grid Graph Hamiltonian Path problem instance $(G, s, t)$; then by embedding $G$ into a hypercube graph, we can reinterpret this instance as an instance of the promise version of cubical Hamiltonian path:

\begin{problem}
\label{prob:promise_cubical_hamiltonian_path}
The \emph{Promise Cubical Hamiltonian Path} problem takes as input a cubical graph whose vertices are length-$m$ bitstrings $l_1, l_2, \ldots, l_n$ with the promise that (1) $l_n = 00\ldots0$ and (2) any Hamiltonian path in the graph has $l_1$ and $l_n$ as its start and end respectively. The problem asks whether there exists a Hamiltonian path in the cubical graph. In other words, the problem asks whether it is possible to rearrange bitstrings $l_1, \ldots, l_n$ into a new order such that each bitstring has Hamming distance one from the next.
\end{problem}

In the remainder of this section, we prove that Problems~\ref{prob:promise_grid_graph_hamiltonian_path} and~\ref{prob:promise_cubical_hamiltonian_path} are NP-hard.

\subsection{Promise Grid Graph Hamiltonian Path is NP-hard}

First, we reduce from the Grid Graph Hamiltonian Cycle problem to the Promise Grid Graph Hamiltonian Path problem. 

\begin{lemma}
The Promise Grid Graph Hamiltonian Path problem (Problem~\ref{prob:promise_grid_graph_hamiltonian_path}) is NP-hard.
\end{lemma}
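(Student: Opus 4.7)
The plan is to reduce the Grid Graph Hamiltonian Cycle problem, shown NP-hard in \cite{itai}, to Problem~\ref{prob:promise_grid_graph_hamiltonian_path}. Given an input grid graph $G$ with no degree-1 vertices, I will produce a grid graph $G'$ together with two specified vertices $s$ and $t$ such that (i) $G'$ has a Hamiltonian path if and only if $G$ has a Hamiltonian cycle, and (ii) $s$ and $t$ are the only degree-1 vertices of $G'$. Condition (ii) is the cleanest way to guarantee the promise: any Hamiltonian path must have its endpoints at the degree-1 vertices, forcing them to be $s$ and $t$.

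The core observation I would exploit is that the lexicographically smallest vertex $v$ of $G$ (smallest $x$-coordinate, then smallest $y$-coordinate) always has degree exactly two: because $v$ is lex-min, the cells $(x_v - 1, y_v)$ and $(x_v, y_v - 1)$ are unoccupied, leaving only the east neighbor $u_1 = (x_v + 1, y_v)$ and the north neighbor $u_2 = (x_v, y_v + 1)$ as candidates, and both must be present since $G$ has no degree-1 vertices. Because $v$ has degree exactly two, every Hamiltonian cycle of $G$ must traverse both of its incident edges, so Hamiltonian cycles of $G$ correspond bijectively with Hamiltonian paths from $u_1$ to $u_2$ in $G - v$. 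The reduction therefore constructs $G'$ by deleting $v$ and attaching one degree-1 pendant vertex $s$ to $u_1$ and one degree-1 pendant vertex $t$ to $u_2$. Under (ii), any Hamiltonian path in $G'$ must start at $s$, use the forced edge $s - u_1$, cross $G - v$ from $u_1$ to $u_2$, and use the forced edge $u_2 - t$ to reach $t$, yielding the equivalence in both directions.

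The main technical obstacle is proving that $s$ and $t$ can actually be realized as grid vertices so that $G'$ is a valid grid graph: each pendant must occupy an empty grid cell whose only occupied grid-neighbor is its intended attachment point. The pendant $s$ is easy to place: because the column $x = x_v - 1$ is entirely empty, putting $s$ at $(x_v - 1, y_v + 1)$ makes $u_2$ its only occupied neighbor. The pendant $t$ is more delicate, since none of the grid cells adjacent to $u_1$ is automatically empty --- the cell $(x_v, y_v)$ freed by deleting $v$ is itself adjacent to $u_2$, while $(x_v + 2, y_v)$ and $(x_v + 1, y_v \pm 1)$ may lie in $V(G)$. My plan is first to preprocess $G$ by translation, rotation, and reflection so that the chosen corner $v$ is extremal in both coordinates at once, i.e., $x_v = \min_{w \in V(G)} x_w$ and $y_v = \min_{w \in V(G)} y_w$; in that case the cell $(x_v + 1, y_v - 1)$ lies in the strictly empty row $y < y_v$ and $t$ can safely be placed there with $u_1$ as its only occupied neighbor. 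When no orientation of $G$ admits such a globally extremal corner, I will handle the obstruction by replacing the single pendant by a short ``forced tail'' of fresh degree-2 grid vertices routed strictly through cells guaranteed empty by lex-minimality of $v$, ending in a degree-1 vertex $t$; because every vertex of the tail has only its two sequential tail neighbors among $V(G')$, the tail must be traversed as a single contiguous segment by any Hamiltonian path, preserving the correspondence with Hamiltonian cycles of $G$. Showing that this tail neither creates spurious Hamiltonian paths nor destroys honest ones --- by local case analysis on the grid cells around $v$, $u_1$, and $u_2$ --- is expected to be the hardest part of the proof.
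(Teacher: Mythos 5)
Your high-level plan---reduce from Grid Graph Hamiltonian Cycle by exploiting a forced local structure at an extremal degree-2 vertex and pinning down the endpoints of the Hamiltonian path with degree-1 pendants---is the same strategy the paper uses, and the abstract correspondence (cycles of $G$ $\leftrightarrow$ $u_1$--$u_2$ paths of $G-v$ $\leftrightarrow$ $s$--$t$ paths of $G'$) is fine. But the geometric realization, which you correctly flag as the hard part, is where the proof actually lives, and your construction does not go through. The problem is your choice of extremal vertex: lex-minimality of $v$ only guarantees emptiness of the half-plane $x<x_v$ together with the cells $(x_v,y)$, $y<y_v$. The east neighbor $u_1=(x_v+1,y_v)$ borders exactly one cell of this safe region, namely $(x_v,y_v)$ itself, and that cell is also adjacent to $u_2$. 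So in your fallback case the ``forced tail'' from $u_1$ must begin at $(x_v,y_v)$, whose tail vertex then has degree $3$ (adjacent to $u_1$, $u_2$, and the next tail vertex)---it is not a chain of degree-2 vertices, and it reintroduces exactly the $u_1$--$u_2$ connection through $v$'s old cell that deleting $v$ was supposed to remove. Its successor must be $(x_v-1,y_v)$, which is adjacent to $(x_v-1,y_v+1)$, the only guaranteed-empty cell where $s$ can sit, or $(x_v,y_v-1)$, which may be adjacent to the unconstrained vertex $(x_v+1,y_v-1)$ of $G$. None of these options yields the clean pendant structure your correctness argument relies on. And the fallback case is not vacuous: a large diamond-shaped region of the lattice with its four corner cells deleted has no degree-1 vertices and no vertex extremal in both coordinates under any rotation or reflection.

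The paper sidesteps all of this by choosing the extremal vertex to be the \emph{leftmost vertex $u$ of the top row}. That vertex also has degree 2, but its forced edge $(u,u')$ is horizontal along the boundary of the completely empty half-plane above the top row, so \emph{both} endpoints of the forced edge have free cells directly above them. The paper keeps $u$ and $u'$ (rather than deleting anything) and attaches two short pendant paths from $u$ and $u'$ upward into that half-plane, ending at degree-1 vertices $v$ and $v'$; Hamiltonian cycles of $G$ then correspond to Hamiltonian paths of $G'$ from $v$ to $v'$ by deleting or adding the forced edge $(u,u')$. To repair your write-up, either switch to that corner (and still verify the pendant paths induce no stray adjacencies, which takes a small explicit gadget), or actually exhibit and analyze a gadget at the lex-min corner, possibly allowing the two attachments to be adjacent and arguing combinatorially that the Hamiltonian path is still forced through them correctly. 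As it stands, the step you defer is a genuine gap, not a routine verification.
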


\begin{proof}
Consider an instance $G$ of the Grid Graph Hamiltonian Cycle problem. Consider the vertices in the top row of $G$ and let the leftmost vertex in this row be $u$. $u$ has no neighbors on its left or above it, so it must have a neighbor to its right (since $G$ has no degree-$1$ vertices). Let that vertex be $u'$. We can add vertices to $G$ above $u$ and $u'$ as shown in figure~\ref{fig:cycle_to_path_addition} to obtain new grid graph $G'$ in polynomial time. Note that two of the added vertices are labeled $v$ and $v'$. Also note that the only edges that are added are those shown in the figure since no vertices in $G$ are above $u$.

\begin{figure}[h]
\centering
\includegraphics[width=.3\textwidth]{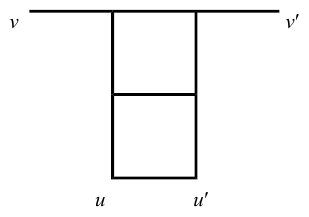}
\caption{The vertices added to $G$ to obtain $G'$.}
\label{fig:cycle_to_path_addition}
\end{figure}

First notice that $(G', v, v')$ is a valid instance of the Promise Grid Graph Hamiltonian Path problem. In particular, $(G', v, v')$ satisfies the promise---any Hamiltonian path in $G'$ must have $v$ and $v'$ as endpoints---since both $v$ and $v'$ have degree-$1$.

Below we show that $(G', v, v')$ is a ``yes'' instance to the Promise Grid Graph Hamiltonian Path problem (i.e., $G'$ has a Hamiltonian path) if and only if $G$ is a ``yes'' instance to the Grid Graph Hamiltonian Cycle problem (i.e., $G$ has a Hamiltonian cycle).

First suppose $G$ contains a Hamiltonian cycle. This cycle necessarily contains edge $(u, u')$ because $u$ has only two neighbors; removing this edge yields a Hamiltonian path from $u'$ to $u$ in $G$. This path can be extended by adding paths from $v'$ to $u'$ and from $u$ to $v$ into a Hamiltonian path in $G'$ from $v'$ to $v$.

On the other hand, suppose $G'$ has a Hamiltonian path. Such a path must have $v$ and $v'$ as the two endpoints, and it is easy to show that the two short paths between $u$ and $v$ and between $u'$ and $v'$ must be the start and end of this path. In other words, if $G'$ has a Hamiltonian path, then the central part of this path is a Hamiltonian path in $G'$ between $u$ and $u'$. Adding edge $(u,u')$, we obtain a Hamiltonian cycle in $G$.

By the above reduction, the Promise Grid Graph Hamiltonian Path problem is NP-hard.
\end{proof}

\subsection{Promise Cubical Hamiltonian Path is NP-hard}
\label{section:promise_cubical_hamiltonian_path}

Second, we reduce from the Promise Grid Graph Hamiltonian Path problem to the Promise Cubical Hamiltonian Path problem. 

\begin{theorem}
The Promise Cubical Hamiltonian Path problem (Problem~\ref{prob:promise_cubical_hamiltonian_path}) is NP-hard.
\end{theorem}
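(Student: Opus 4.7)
The plan is to reduce from Promise Grid Graph Hamiltonian Path (just shown NP-hard) by giving a polynomial-time hypercube embedding of the grid graph that turns the Manhattan metric into the Hamming metric, and in which the designated endpoint $t$ receives the all-zero label. Given an instance $(G, s, t)$, I will first place $G$ in the plane so that its vertices have integer coordinates, with $t$ at some position $(t_x, t_y)$. Let $[x_{\min}, x_{\max}]$ and $[y_{\min}, y_{\max}]$ be the coordinate ranges actually used by vertices of $G$. I will introduce one bit for each integer ``fence'' $i \in \{x_{\min}, \ldots, x_{\max}-1\}$ and $j \in \{y_{\min}, \ldots, y_{\max}-1\}$; the total length of the bitstrings is thus polynomial in $|V(G)|$. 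For a vertex at position $(x, y)$, set the bit at fence $i$ to $1$ iff $\min(t_x, x) \le i < \max(t_x, x)$, and similarly set the bit at fence $j$ to $1$ iff $\min(t_y, y) \le j < \max(t_y, y)$.

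Next I verify the three properties I need. First, for the vertex $t$ itself no fence lies strictly between $t_x$ and $t_x$ (and likewise in $y$), so $t$ gets the label $00\ldots0$, as required by the promise. Second, for any two vertices $(x_1, y_1)$ and $(x_2, y_2)$, the bits on which they disagree are exactly those fences lying strictly between $\min(x_k, t_x)$ and $\max(x_k, t_x)$ for exactly one of $k \in \{1,2\}$; a short case analysis on whether $t_x$ lies between $x_1$ and $x_2$ shows this count is exactly $|x_1 - x_2|$, and symmetrically for $y$. Hence the Hamming distance equals the Manhattan distance $|x_1 - x_2| + |y_1 - y_2|$. Third, because $G$ is a grid graph (i.e., an induced subgraph of the infinite square lattice), its edges are exactly the pairs of vertices at Manhattan distance $1$; combined with the previous point, this means the induced subgraph of the hypercube on our labels has exactly the edges of $G$, so it is a cubical graph isomorphic to $G$.

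Given this identification of edge sets, $G$ has a Hamiltonian path if and only if the constructed cubical graph does, and moreover the promise transfers verbatim: every Hamiltonian path in $G$ must have $s$ and $t$ as its endpoints, so every Hamiltonian path in the cubical graph must have the labels of $s$ and $t$ as its endpoints, with the $t$-endpoint equal to $00\ldots0$. Thus the construction is a valid polynomial-time many-one reduction from Promise Grid Graph Hamiltonian Path to Promise Cubical Hamiltonian Path, giving NP-hardness of the latter.

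The only real obstacle is the tension between two requirements: the label of $t$ must be $00\ldots0$, yet the embedding must be an \emph{isometry} from the Manhattan metric to the Hamming metric (otherwise extraneous hypercube edges could appear between non-adjacent vertices of $G$, and the cubical graph would acquire spurious Hamiltonian paths). A naive unary encoding anchored at the bounding box resolves the first requirement only if $t$ is already at a corner, and any single affine relabeling on top of it would destroy the isometry. Using one bit per fence, with the fence bit toggled precisely on the path from $t$ to the vertex, is what simultaneously centers the zero label at $t$ and preserves Hamming-equals-Manhattan for \emph{all} pairs of vertices, not just adjacent ones.
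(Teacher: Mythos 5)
Your proposal is correct and is essentially the paper's reduction: the paper assigns each row and column a unary prefix label ($000\ldots0, 100\ldots0, 110\ldots0, \ldots$), concatenates them, and then XORs every label with the label of $t$ to make it $00\ldots0$, which produces exactly your fence-indicator labels anchored at $t$. The only cosmetic difference is that you verify the full isometry (Hamming equals Manhattan) in one step rather than checking "Hamming distance one iff adjacent" after a distance-preserving translation.
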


\begin{proof}
Consider an instance $(G, s, t)$ of the Promise Grid Graph Hamiltonian Path problem. Suppose $G$ has $m_r$ rows and $m_c$ columns and $n$ vertices.

Assign a bitstring label to each row and a bitstring label to each column. In particular, let the row labels from left to right be the following length $m_r - 1$ bitstrings: $000\ldots0$, $100\ldots0$, $110\ldots0$, $\ldots$, and $111\ldots1$. Similarly, let the column labels from top to bottom be the following length $m_c - 1$ bitstrings: $000\ldots0$, $100\ldots0$, $110\ldots0$, $\ldots$, and $111\ldots1$. Then assign each vertex a bitstring label of length $m = m_r + m_c - 2$ consisting of the concatenation of its row label followed by its column label.

Consider any two vertices. Their labels have Hamming distance one if and only if the vertices' column labels are the same and their row labels have Hamming distance one, or visa versa. By construction, two row/column labels are the same if and only if the two rows/columns are the same and they have Hamming distance one if and only if the two rows/columns are adjacent. Thus two vertices' labels have Hamming distance one if and only if the two vertices are adjacent in $G$. 

In other words, we have expressed $G$ as a cubical graph by assigning these bitstring labels to the vertices of $G$. In particular, suppose the vertices of $G$ are $v_1, v_2, \ldots, v_n$ with $v_1 = s$ and $v_n = t$. Let $l_i'$ be the label of $v_i$. Then the bitstrings $l_1', l_2', \ldots, l_n'$ specify the cubical graph that is $G$. 

Define $l_i = l_i' \oplus l_n'$. Under this definition, the Hamming distance between $l_i$ and $l_j$ is the same as the Hamming distance between $l_i'$ and $l_j'$. Therefore $l_i$ has Hamming distance one from $l_j$ if and only if $v_i$ and $v_j$ are adjacent. Thus, the cubical graph specified by bitstrings $l_1, \ldots, l_n$ is also $G$. Note that the $l_i$ bitstrings can be computed in polynomial time.

We claim that $l_1, \ldots, l_n$ is a valid instance of Promise Cubical Hamiltonian Path, i.e., this instance satisfies the promise of the problem. The first promise is that $l_n = 00\ldots0$; by definition, $l_n = l_n' \oplus l_n' = 00\ldots0$. The second promise is that any Hamiltonian path in the cubical graph specified by $l_1, l_2, \ldots, l_n$ has $l_1$ and $l_n$ as its start and end. Note that the cubical graph specified by $l_1, l_2, \ldots, l_n$ is the graph $G$ with vertex $l_i$ in the cubical graph corresponding to vertex $v_i$ in $G$. In other words, the promise requested is that any Hamiltonian path in $G$ must start and end in vertices $v_1 = s$ and $v_n = t$. This is guaranteed by the promise of the Promise Grid Graph Hamiltonian Path problem. 

Since $G$ is the graph specified by $l_1, l_2, \ldots, l_n$, the answer to the Promise Cubical Hamiltonian Path instance $l_1, l_2, \ldots, l_n$ is the same as the answer to the Promise Grid Graph Hamiltonian Path instance $(G, s, t)$. Thus, the procedure converting $(G, s, t)$ into $l_1, l_2, \ldots, l_n$, which runs in polynomial time, is a reduction proving that Promise Cubical Hamiltonian Path is NP-hard.
\end{proof}

\section{(Group) Rubik's Square is NP-complete}
\label{section:rubiks_square}

\subsection{Reductions}

To prove that the Rubik's Square and Group Rubik's Square problems are NP-complete, we reduce from the Promise Cubical Hamiltonian Path problem of Section~\ref{section:promise_cubical_hamiltonian_path}.

Suppose we are given an instance of the Promise Cubical Hamiltonian Path problem consisting of $n$ bitstrings $l_1, \ldots, l_n$ of length $m$ (with $l_n = 00\ldots0$). To construct a Group Rubik's Square instance we need to compute the value $k$ indicating the allowed number of moves and construct the transformation $t \in RS_s$.

The value $k$ can be computed directly as $k = 2n - 1$.

The transformation $t$ will be an element of group $RS_s$ where $s = 2(\max(m,n) + 2n)$. Define $a_i$ for $1\le i \le n$ to be $(x_1)^{(l_i)_1}\circ(x_2)^{(l_i)_2}\circ\cdots\circ(x_m)^{(l_i)_m}$ where $(l_i)_1, (l_i)_2, \ldots, (l_i)_m$ are the bits of $l_i$. Also define $b_i = (a_i)^{-1} \circ y_i \circ a_i$ for $1\le i \le n$. Then we define $t$ to be $a_1\circ b_1 \circ b_2 \circ\cdots\circ b_n$.

Outputting $(t, k)$ completes the reduction from the Promise Cubical Hamiltonian Path problem to the Group Rubik's Square problem. To reduce from the Promise Cubical Hamiltonian Path problem to the Rubik's Square problem we simply output $(C_t, k) = (t(C_0), k)$. These reductions clearly run in polynomial time.

\subsection{Intuition}

The key idea that makes this reduction work is that the transformations $b_i$ for $i \in \{1, \ldots, n\}$ all commute. This allows us to rewrite $t = a_1\circ b_1 \circ b_2 \circ\cdots\circ b_n$ with the $b_i$s in a different order. If the order we choose happens to correspond to a Hamiltonian path in the cubical graph specified by $l_1, \ldots, l_n$, then when we explicitly write the $b_i$s and $a_1$ in terms of $x_j$s and $y_i$s, most of the terms cancel. In particular, the number of remaining terms will be exactly $k$. Since we can write $t$ as a combination of exactly $k$ $x_j$s and $y_i$s, we can invert $t$ using at most $k$ $x_j$s and $y_i$s. In other words, if there is a Hamiltonian path in the cubical graph specified by $l_1, \ldots, l_n$, then $(t, k)$ is a ``yes'' instance to the Group Rubik's Square problem.

In order to more precisely describe the cancellation of terms in $t$, we can consider just one local part: $b_{i} \circ b_{i'}$. We can rewrite this as $(a_{i})^{-1} \circ y_{i} \circ a_{i} \circ (a_{i'})^{-1} \circ y_{i'} \circ a_{i'}$. The interesting part is that $a_{i} \circ (a_{i'})^{-1}$ will cancel to become just one $x_j$. Note that 
$$a_{i} \circ (a_{i'})^{-1} = (x_1)^{(l_{i})_1}\circ(x_2)^{(l_{i})_2}\circ\cdots\circ(x_m)^{(l_{i})_m} \circ (x_1)^{-(l_{i'})_1}\circ(x_2)^{-(l_{i'})_2}\circ\cdots\circ(x_m)^{-(l_{i'})_m},$$
which we can rearrange as 
$$(x_1)^{(l_{i})_1 - (l_{i'})_1}\circ(x_2)^{(l_{i})_2 - (l_{i'})_2}\circ\cdots\circ(x_m)^{(l_{i})_m - (l_{i'})_m}.$$
Next, if $b_{i}$ and $b_{i'}$ correspond to adjacent vertices $l_{i}$ and $l_{i'}$, then $(l_{i})_j - (l_{i'})_j$ is zero for all $j$ except one for which $(l_{i})_j - (l_{i'})_j = \pm 1$. Thus the above can be rewritten as $(x_j)^{1}$ or $(x_j)^{-1}$ for some specific $j$. Since $x_j = (x_j)^{-1}$ this shows that $(a_{i_1})^{-1} \circ a_{i_2}$ simplifies to $x_j$ for some $j$.

This intuition is formalized in a proof in the following subsection.

\subsection{Promise Cubical Hamiltonian Path solution $\to$ (Group) Rubik's Square solution}
\label{section:square_first_direction}

\begin{lemma}
The transformations $b_i$ all commute.
\label{lemma:square_b_i_commute}
\end{lemma}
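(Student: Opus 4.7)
The plan is to show that each $b_i$, viewed as a permutation of stickers, is supported entirely on stickers whose cubies lie in rows $+i$ and $-i$. Commutativity of $b_i$ and $b_{i'}$ for distinct $i, i' \in \{1, \ldots, n\}$ then follows because disjoint-support permutations commute.

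First, I would observe that a column flip $x_j$ affects only cubies in column $j$; such cubies remain in column $j$ (the $x$-coordinate is preserved), and any change to the $y$-coordinate is just a sign flip ($y \mapsto \pm y$). Since $a_i = (x_1)^{(l_i)_1} \circ \cdots \circ (x_m)^{(l_i)_m}$ is a composition of column flips on pairwise disjoint columns, these factors commute with each other, and $a_i$ preserves every cubie's $x$-coordinate while changing the $y$-coordinate by at most a sign: concretely, $a_i$ maps $(x, y)$ to $(x, \pm y)$.

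Next, I would trace a sticker whose cubie has $|y| \ne i$ through $b_i = a_i^{-1} \circ y_i \circ a_i$. After the first application of $a_i$, the cubie is at $(x, \pm y)$, still with absolute $y$-coordinate different from $i$, so the row-$i$ flip $y_i$ leaves it fixed. The final $a_i^{-1}$ reverses the first step, returning the sticker to its original position and orientation. Hence $b_i$ fixes every sticker on a cubie outside rows $\pm i$. A symmetric check shows that stickers on cubies in rows $\pm i$ are sent by $b_i$ to stickers still on cubies in rows $\pm i$, since both $a_i$ and $y_i$ preserve the set $\{y : |y| = i\}$ (the former by possibly sending $i \leftrightarrow -i$ within each flipped column, the latter trivially).

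Finally, for distinct $i, i' \in \{1, \ldots, n\}$ the sets $\{+i, -i\}$ and $\{+i', -i'\}$ are disjoint, so $b_i$ and $b_{i'}$ act on disjoint sets of stickers and therefore commute. The only substantive step is the trace argument; once one notes that column flips cannot change a cubie's $x$-coordinate and can only negate its $y$-coordinate, the confinement of $b_i$'s support to two rows is essentially immediate.
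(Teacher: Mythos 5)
Your proof is correct and follows essentially the same route as the paper's: both arguments confine the support of $b_i$ to cubies in rows $\pm i$ (the paper by noting that any cubie not moved by the middle $y_i$ term sees only $(a_i)^{-1}\circ a_i = 1$, you by the equivalent explicit trace through $a_i$, $y_i$, $a_i^{-1}$) and then conclude commutativity from disjointness of supports. Your extra remark that $b_i$ maps the set of rows $\pm i$ into itself is a point the paper leaves implicit, but it is the standard fact that a permutation preserves its own support, so the two proofs match in substance.
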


\begin{proof}
Consider any such transformation $b_i$. The transformation $b_i$ can be rewritten as $(a_i)^{-1} \circ y_i \circ a_i$. For any cubie not moved by the $y_i$ middle term, the effect of this transformation is the same as the effect of transformation $(a_i)^{-1} \circ a_i = 1$. In other words, $b_i$ only affects cubies that are moved by the $y_i$ term. But $y_i$ only affects cubies with $y$ coordinate $i$. In general in a Rubik's Square, cubies with $y$ coordinate $i$ at some particular time will have $y$ coordinate $\pm i$ at all times. Thus, all the cubies affected by $b_i$ start in rows $\pm i$. 

This is enough to see that the cubies affected by $b_i$ are disjoint from those affected by $b_j$ (for $j \ne i$). In other words, the transformations $b_i$ all commute.
\end{proof}

\begin{theorem}
If $l_1, \ldots, l_n$ is a ``yes'' instance to the Promise Cubical Hamiltonian Path problem, then $(t, k)$ is a ``yes'' instance to the Group Rubik's Square problem.
\label{thm:square_first_direction}
\end{theorem}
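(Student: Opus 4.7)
The plan is to exploit Lemma~\ref{lemma:square_b_i_commute} to reorder the $b_i$'s according to the given Hamiltonian path, and then show that the resulting expression collapses to a product of exactly $k = 2n-1$ generators. Since every generator $x_j$ and $y_i$ is an involution (a flip is a $180^\circ$ rotation), writing $t$ as such a product of length $k$ immediately yields an inverse sequence of the same length, which is exactly what the Group Rubik's Square problem requires.

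Concretely, let $\pi$ be the permutation of $\{1,\ldots,n\}$ given by the Hamiltonian path, so that the path visits $l_{\pi(1)}, l_{\pi(2)}, \ldots, l_{\pi(n)}$. The promise of the Promise Cubical Hamiltonian Path problem guarantees that $\pi(1) = 1$ and $\pi(n) = n$. Because Lemma~\ref{lemma:square_b_i_commute} says the $b_i$'s commute, I would rewrite
$$t \;=\; a_1 \circ b_{\pi(1)} \circ b_{\pi(2)} \circ \cdots \circ b_{\pi(n)}.$$
Expanding each $b_{\pi(i)} = (a_{\pi(i)})^{-1} \circ y_{\pi(i)} \circ a_{\pi(i)}$ and grouping adjacent $a$-factors gives
$$t \;=\; \bigl(a_1 \circ (a_{\pi(1)})^{-1}\bigr) \circ y_{\pi(1)} \circ \bigl(a_{\pi(1)} \circ (a_{\pi(2)})^{-1}\bigr) \circ y_{\pi(2)} \circ \cdots \circ \bigl(a_{\pi(n-1)} \circ (a_{\pi(n)})^{-1}\bigr) \circ y_{\pi(n)} \circ a_{\pi(n)}.$$

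Now I would simplify each parenthesized block. The leftmost block $a_1 \circ (a_{\pi(1)})^{-1}$ is the identity because $\pi(1)=1$. The rightmost factor $a_{\pi(n)} = a_n$ is the identity because $l_n = 00\ldots 0$, so every $(x_j)^{(l_n)_j}$ is trivial. For each intermediate block $a_{\pi(i)} \circ (a_{\pi(i+1)})^{-1}$, I would invoke the reasoning already sketched in the Intuition subsection: the generators $x_j$ pairwise commute (they act on disjoint columns) and satisfy $(x_j)^2 = 1$, so this composition rearranges to $(x_1)^{(l_{\pi(i)})_1 - (l_{\pi(i+1)})_1} \circ \cdots \circ (x_m)^{(l_{\pi(i)})_m - (l_{\pi(i+1)})_m}$. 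Since $l_{\pi(i)}$ and $l_{\pi(i+1)}$ are adjacent in the cubical graph (consecutive vertices on the Hamiltonian path), their labels differ in exactly one coordinate $j_i$, and the block collapses to a single generator $x_{j_i}$.

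Putting this together, $t$ is expressible as
$$t \;=\; y_{\pi(1)} \circ x_{j_1} \circ y_{\pi(2)} \circ x_{j_2} \circ \cdots \circ x_{j_{n-1}} \circ y_{\pi(n)},$$
a product of $n$ $y$-generators and $n-1$ $x$-generators, totalling $2n-1 = k$ moves. Inverting both sides and using that each generator is an involution gives $t^{-1}$ as an explicit product of $k$ generators, so $(t,k)$ is a ``yes'' instance of the Group Rubik's Square problem, as required. The main obstacle will be stating carefully that the $x_j$'s pairwise commute and square to the identity so that the block-by-block cancellation is legitimate; once that is laid out, the rest is bookkeeping and a count that matches $k$ exactly.
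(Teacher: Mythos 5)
Your proposal is correct and follows essentially the same argument as the paper's proof: reorder the commuting $b_i$'s along the Hamiltonian path, telescope the adjacent $a$-factors, use $i_1=1$ and $l_n = 00\ldots0$ to kill the boundary terms, and collapse each intermediate block to a single $x_{j}$ via the Hamming-distance-one property, yielding exactly $2n-1$ generators. No gaps; the details you flag (commutativity and involutivity of the $x_j$'s) are handled the same way in the paper.
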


\begin{proof}
Suppose $l_1, \ldots, l_n$ is a ``yes'' instance to the Promise Cubical Hamiltonian Path problem. Let $m$ be the length of $l_i$ and note that $l_n = 00\ldots0$ by the promise of the Promise Cubical Hamiltonian Path problem. Furthermore, since $l_1, \ldots, l_n$ is a ``yes'' instance to the Promise Cubical Hamiltonian Path problem, there exists an ordering of these bitstrings $l_{i_1}, l_{i_2}, \ldots, l_{i_n}$ such that each consecutive pair of bitstrings is at Hamming distance one, $i_1 = 1$, and $i_n = n$ (with the final two conditions coming from the promise).

By Lemma~\ref{lemma:square_b_i_commute}, we know that $t = a_1 \circ b_1 \circ b_2 \circ\cdots \circ b_n$ can be rewritten as 
$$t = a_1 \circ b_{i_1} \circ b_{i_2} \circ\cdots \circ b_{i_n}.$$ 
Using the definition of $b_i$, we can further rewrite this as 
$$t = a_1 \circ ((a_{i_1})^{-1} \circ y_{i_1} \circ a_{i_1}) \circ ((a_{i_2})^{-1} \circ y_{i_2} \circ a_{i_2}) \circ\cdots \circ ((a_{i_n})^{-1} \circ y_{i_n} \circ a_{i_n}),$$
or as 
$$t = (a_1 \circ (a_{i_1})^{-1}) \circ y_{i_1} \circ (a_{i_1} \circ (a_{i_2})^{-1}) \circ y_{i_2} \circ (a_{i_2} \circ (a_{i_3})^{-1}) \circ\cdots \circ (a_{i_{n-1}} \circ (a_{i_n})^{-1}) \circ y_{i_n} \circ (a_{i_n}).$$

We know that $i_1 = 1$, and therefore that $a_1 \circ (a_{i_1})^{-1} = a_1 \circ (a_1)^{-1} = 1$ is the identity element. Similarly, we know that $i_n = n$ and therefore that $a_{i_n} = a_n = (x_1)^{(l_n)_1}\circ(x_2)^{(l_n)_2}\circ\cdots\circ(x_m)^{(l_n)_m} =  (x_1)^{0}\circ(x_2)^{0}\circ\cdots\circ(x_m)^{0} = 1$ is also the identity.

Thus we see that $$t = y_{i_1} \circ (a_{i_1} \circ (a_{i_2})^{-1}) \circ y_{i_2} \circ (a_{i_2} \circ (a_{i_3})^{-1}) \circ\cdots \circ (a_{i_{n-1}} \circ (a_{i_n})^{-1}) \circ y_{i_n}.$$

Consider the transformation $a_{i_p} \circ (a_{i_{p+1}})^{-1}$. This transformation can be written as $$a_{i_p} \circ (a_{i_{p+1}})^{-1} = (x_1)^{(l_{i_p})_1} \circ (x_2)^{(l_{i_p})_2} \circ\cdots \circ (x_m)^{(l_{i_p})_m} \circ (x_1)^{-(l_{i_{p+1}})_1} \circ (x_2)^{-(l_{i_{p+1}})_2} \circ\cdots \circ (x_m)^{-(l_{i_{p+1}})_m}.$$

Because $x_u$ always commutes with $x_v$, we can rewrite this as $$a_{i_p} \circ (a_{i_{p+1}})^{-1} = (x_1)^{(l_{i_p})_1 - (l_{i_{p+1}})_1} \circ (x_2)^{(l_{i_p})_2-(l_{i_{p+1}})_2} \circ\cdots \circ (x_m)^{(l_{i_p})_m-(l_{i_{p+1}})_m}.$$

Since $l_{i_p}$ differs from $l_{i_{p+1}}$ in only one position, call it $j_p$, we see that $(l_{i_p})_j-(l_{i_{p+1}})_j$ is zero unless $j = j_p$, and is $\pm 1$ in that final case. This is sufficient to show that $a_{i_p} \circ (a_{i_{p+1}})^{-1} = (x_{j_p})^{\pm 1} = x_{j_p}$.

Thus we see that $$t = y_{i_1} \circ x_{j_1} \circ y_{i_2} \circ x_{j_2} \circ\cdots \circ x_{j_{n-1}} \circ y_{i_n},$$
or (by left multiplying) that 
$$1 = y_{i_n}^{-1} \circ x_{j_{n-1}}^{-1} \circ\cdots \circ x_{j_2}^{-1} \circ y_{i_2}^{-1} \circ x_{j_1}^{-1} \circ y_{i_1}^{-1} \circ t = y_{i_n} \circ x_{j_{n-1}} \circ\cdots \circ x_{j_2} \circ y_{i_2} \circ x_{j_1} \circ y_{i_1} \circ t.$$ 
We see that $t$ can be reversed by $k = 2n-1$ moves of the form $x_j$ or $y_i$, or in other words that $(t, k)$ is a ``yes'' instance to the Group Rubik's Square problem.
\end{proof}

\begin{corollary}
If $l_1, \ldots, l_n$ is a ``yes'' instance to the Promise Cubical Hamiltonian Path problem, then $(C_t, k)$ is a ``yes'' instance to the Rubik's Square problem.
\label{corollary:square_first_direction}
\end{corollary}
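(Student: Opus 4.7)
The plan is to obtain this corollary essentially for free by chaining two results already established in the paper. Theorem~\ref{thm:square_first_direction} gives that whenever $l_1, \ldots, l_n$ is a ``yes'' instance to Promise Cubical Hamiltonian Path, the pair $(t,k)$ is a ``yes'' instance to the Group Rubik's Square problem; that is, $t$ can be inverted by a sequence of at most $k$ generators of the form $x_j$ or $y_i$. Separately, Lemma~\ref{lemma:square_types} converts any such ``yes'' instance of the Group variant into a ``yes'' instance $(t(C_0), k) = (C_t, k)$ of the ordinary Rubik's Square problem.

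So the proof I would write is one line: apply Theorem~\ref{thm:square_first_direction} to conclude $(t,k)$ is a ``yes'' instance to Group Rubik's Square, then apply Lemma~\ref{lemma:square_types} to conclude $(C_t, k)$ is a ``yes'' instance to Rubik's Square. The conceptual content is that exactly the same $k$-move sequence that sends $t$ to the identity in $RS_n$ also, when applied physically to the configuration $C_t = t(C_0)$, yields the monochromatic configuration $C_0$; a fortiori it solves the (weaker) monochromatic-face version of the puzzle.

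There is no real obstacle here because the Group version of the puzzle is strictly more demanding than the monochromatic version: any solution that places every sticker in its exact $C_0$ position in particular makes every face monochromatic. The move bound $k = 2n-1$ is inherited unchanged. Thus the corollary requires no new argument beyond citing the theorem and the lemma.
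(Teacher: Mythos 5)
Your proposal is correct and is exactly the paper's proof: the paper also derives this corollary by citing Theorem~\ref{thm:square_first_direction} followed by Lemma~\ref{lemma:square_types}. Nothing further is needed.
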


\begin{proof}
This follows immediately from Theorem~\ref{thm:square_first_direction} and Lemma~\ref{lemma:square_types}.
\end{proof}

\subsection{\boldmath Coloring of $C_t$}

In order to show the other direction of the proof, it will be helpful to consider the coloring of the stickers on the top and bottom faces of the Rubik's Square. In particular, if we define $b = b_1 \circ\cdots \circ b_n$ (so that $t = a_1 \circ b$), then it will be very helpful for us to know the colors of the top and bottom stickers in configuration $C_b = b(C_0)$.

Consider for example the instance of Promise Cubical Hamiltonian Path with $n = 5$ and $m = 3$ defined below:
\begin{align*}
l_1 &= 011 \\
l_2 &= 110 \\
l_3 &= 111 \\
l_4 &= 100 \\
l_5 &= 000
\end{align*}
For this example, $C_0$ is an $s \times s$ Rubik's Square with $s = 2(\max(m,n)+2n) = 30$.

To describe configuration $C_b$, we need to know the effect of transformation $b_i$. For example, Figure~\ref{fig:square_example_1} shows the top face of a Rubik's Square in configurations $C_0$, $a_2(C_0)$, $(y_2 \circ a_2)(C_0)$, and $b_2(C_0) = ((a_2)^{-1} \circ y_2 \circ a_2)(C_0)$ where $a_2$ and $y_2$ are defined in terms of $l_2 = 110$ as in the reduction.

\begin{figure}[h]
\centering
\begin{subfigure}[b]{0.48\textwidth}
\includegraphics[width=\textwidth]{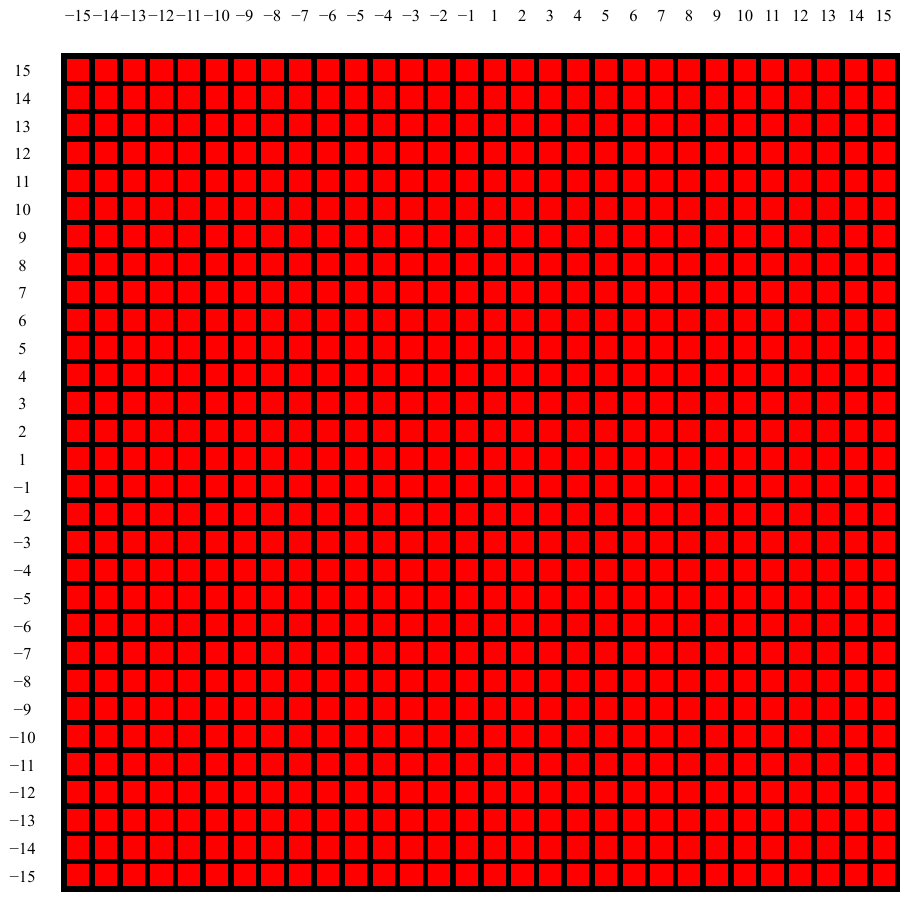}
\caption{\label{fig:square_example_1a}}
\end{subfigure}
\hfill
\begin{subfigure}[b]{0.48\textwidth}
\includegraphics[width=\textwidth]{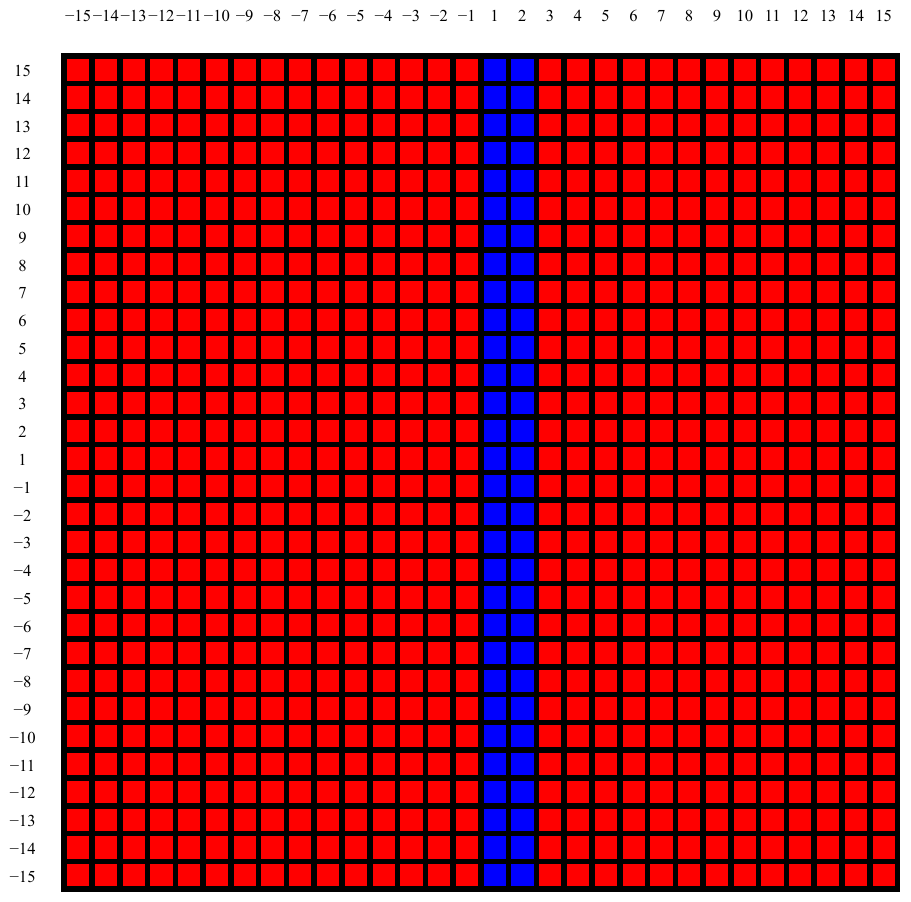}
\caption{\label{fig:square_example_1b}}
\end{subfigure}

\begin{subfigure}[b]{0.48\textwidth}
\includegraphics[width=\textwidth]{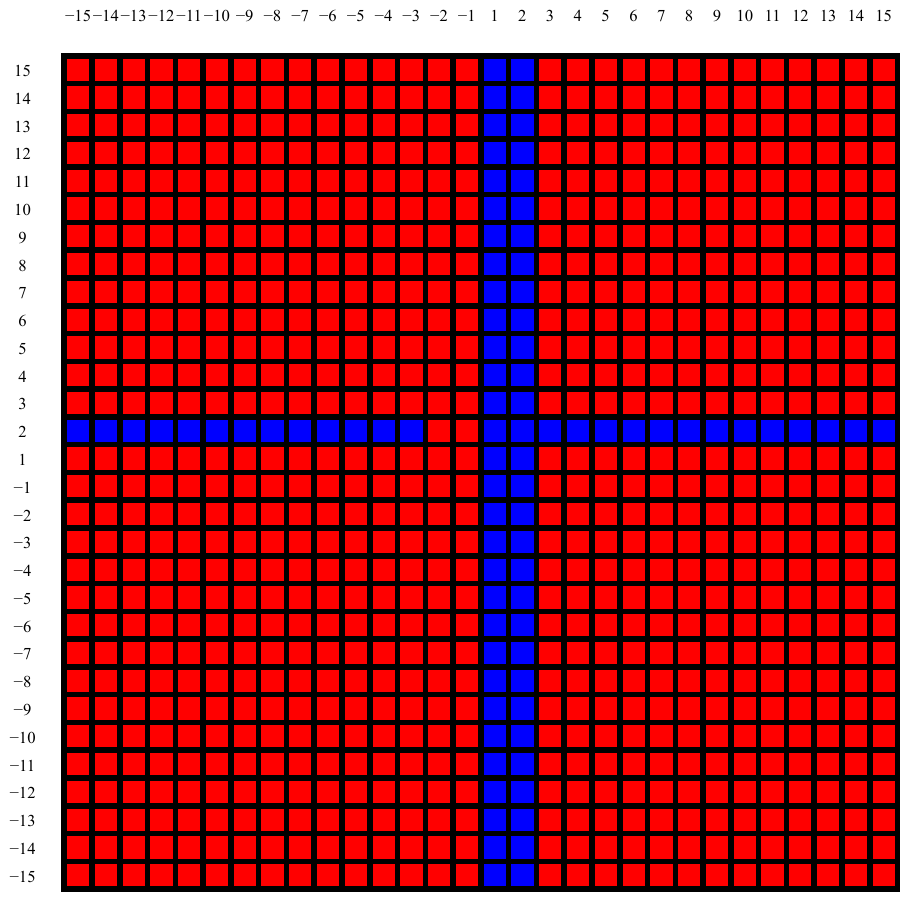}
\caption{\label{fig:square_example_1c}}
\end{subfigure}
\hfill
\begin{subfigure}[b]{0.48\textwidth}
\includegraphics[width=\textwidth]{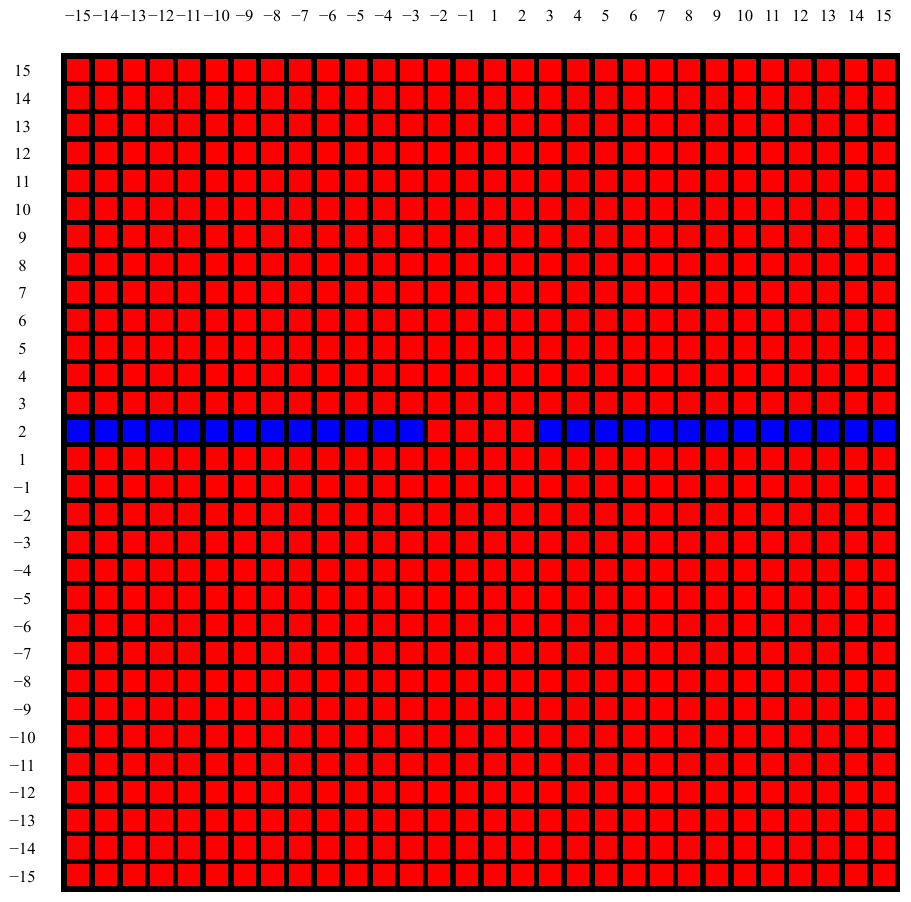}
\caption{\label{fig:square_example_1d}}
\end{subfigure}
\caption{Applying $b_2$ to $C_0$ step by step (only top face shown).}
\label{fig:square_example_1}
\end{figure}

The exact behavior of a Rubik's Square due to $b_i$ is described by the following lemma:

\begin{lemma}
Suppose $i \in \{1,\ldots,n\}$, and $c, r \in \{1, \ldots, s/2\}$. Then 
\begin{enumerate}
\item if $r = i$ and $c \le m$ such that bit $c$ of $l_i$ is $1$, then $b_i$ swaps the cubies in positions $(c, -r)$ and $(-c, r)$ without flipping either;
\item if $r = i$ and either $c > m$ or $c \le m$ and bit $c$ of $l_i$ is $0$, then $b_i$ swaps the cubies in positions $(c, r)$ and $(-c, r)$ and flips them both;
\item all other cubies are not moved by $b_i$.
\end{enumerate}
\label{lemma:square_b_i_effect}
\end{lemma}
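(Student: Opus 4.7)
The plan is to prove the lemma by explicit computation: track a single cubie through the three stages of $b_i = (a_i)^{-1} \circ y_i \circ a_i$ and observe precisely when and how it moves.

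First I would record a few elementary facts about the generators in the Rubik's Square. A column move $x_c$ with $c \in \{1,\ldots,s/2\}$ affects only the cubies with $x$-coordinate exactly $c$; it swaps the cubies at $(c,r)$ and $(c,-r)$ for each $r$ and flips the orientation of each. The row move $y_r$ acts analogously on the cubies at $y = r$, swapping $(c,r)$ with $(-c,r)$ and flipping both orientations. Consequently each $x_c$ is an involution, and two column moves with distinct indices act on disjoint sets of cubies and therefore commute. Writing $S := \{\,c : 1 \le c \le m,\ (l_i)_c = 1\,\}$, the transformation $a_i$ is then a commuting product of involutions indexed by $S$, so $(a_i)^{-1}$ induces exactly the same permutation of cubies (and the same orientation effect) as $a_i$. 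A crucial bookkeeping remark is that $S \subseteq \{1,\ldots,m\}$ contains only positive indices, so $a_i$ only moves cubies whose $x$-coordinate is a positive element of $S$; it never touches a cubie at a negative column.

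From here the argument is a case analysis on the starting position $(c_0,r_0)$ of a cubie. The middle stage $y_i$ moves a cubie only if it sits at row $i$ after $a_i$ has been applied, and $(a_i)^{-1}$ again moves only cubies that end up in a column with index in $S$ after $y_i$. Tracking position and orientation for each combination of the sign of $r_0$ and the membership of $|c_0|$ in $S$ reveals exactly two families of cubies that actually get displaced by $b_i$. A cubie that starts at $(c,-i)$ with $c \in S$ is raised by $a_i$ to $(c,i)$ with one orientation flip, then swapped by $y_i$ to $(-c,i)$ with a second flip (net zero), and then left alone by $(a_i)^{-1}$ because $-c < 0$ lies outside $S$; this is clause 1. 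A cubie that starts at $(c,i)$ with either $c > m$ or $c \le m$ and $(l_i)_c = 0$ is untouched by $a_i$, swapped by $y_i$ to $(-c,i)$ with a single flip, and again untouched by $(a_i)^{-1}$ since $-c \notin S$; this is clause 2. Every other cubie either sits in a row different from $\pm i$ (so $y_i$ does nothing and any motion from $a_i$ is undone verbatim by $(a_i)^{-1}$) or sits at row $-i$ in a column outside $S$ (and is untouched by all three stages), giving clause 3.

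The only slightly delicate part of the argument is the orientation bookkeeping: a cubie can be flipped by each of the three stages that act on it, and the lemma demands specific parities (an even number of flips for the cubies in clause 1, and an odd number for those in clause 2). Once the generators' actions are nailed down as above, checking parity in each case is straightforward, and this is the main (routine) obstacle in the proof.
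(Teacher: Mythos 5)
Your proposal is correct and follows essentially the same route as the paper: decompose $b_i$ into its three stages, use that $(a_i)^{-1}=a_i$ and that $a_i$ touches only positive columns indexed by $S$, observe that a cubie is displaced by $b_i$ exactly when the middle $y_i$ term moves it, and finish with a case analysis tracking position and flip parity. One small point to tighten when writing it out: your two explicitly traced families only cover the cubies starting at $(c,-i)$ with $c\in S$ and at $(c,i)$ with $c>0$, $c\notin S$, so the cubies starting at $(-c,i)$ (which are the partners completing each swap, moved to $(c,-i)$ or $(c,i)$ respectively) must also be traced explicitly --- as written they fall through the dichotomy in your clause-3 argument, since they sit in row $i$ rather than row $-i$ or a row different from $\pm i$.
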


\begin{proof}
As noted in the proof of Lemma~\ref{lemma:square_b_i_commute}, a cubie is affected by $b_i = (a_i)^{-1} \circ y_i \circ a_i$ if and only if it is moved by the $y_i$ term.

Note also that $(a_i)^{-1} = a_i$ only moves cubies within their columns and only for columns $c$ for which bit $c$ of $l_i$ is $1$. One consequence is that a cubie can only be moved by $a_i$ if its column index is positive. Any cubie moved by the $y_i$ term will have a column index of different signs before and after the $y_i$ move, so as a consequence such a cubie cannot be moved by both $a_i$ and $(a_i)^{-1}$. 

Thus there are three possibilities for cubies that are moved by $b_i$: (1) the cubie is moved only by $y_i$, (2) the cubie is moved by $a_i$ and then by $y_i$, and (3) the cubie is moved by $y_i$ and then by $(a_i)^{-1}$.

Consider any cubie of type (1) whose coordinates have absolute values $c$ and $r$. Since the cubie is moved by $y_i$, we know that $r = i$. Since it is not moved by either $a_i$ or $(a_i)^{-1}$, we know that the cubie's column index both before and after the move is not one of the column indices affected by $a_i$. But these two column indices are $c$ and $-c$ (in some order). Therefore it must not be the case that bit $c$ of $l_i$ is $1$. Also note that cubies of this type are flipped exactly once. Putting that together, we see that if $c \in \{1, \ldots, s/2\}$, $r = i$, and it is not the case that bit $c$ of $l_i$ exists and is $1$, then $b_i$ swaps the cubies in positions $(c, r)$ and $(-c, r)$ and flips them both.

Consider any cubie of type (2) whose coordinates have absolute values $c$ and $r$. Since the cubie is first moved by $a_i$ and then by $y_i$, we know that $r = i$ and that $c \le m$ with bit $c$ of $l_i$ equal to $1$. Furthermore, the cubie must have started in position $(c, -r)$, then moved to position $(c, r)$ by $a_i$, and then moved to position $(-c, r)$ by $y_i$. Since this cubie is flipped twice, it is overall not flipped. 

Consider on the other hand any cubie of type (3) whose coordinates have absolute values $c$ and $r$. Since the cubie is first moved by $y_i$ and then by $(a_i)^{-1} = a_i$, we know that $r = i$ and that $c \le m$ with bit $c$ of $l_i$ equal to $1$. Furthermore, the cubie must have started in position $(-c, r)$, then moved to position $(c, r)$ by $y_i$, and then moved to position $(c, -r)$ by $a_i$. Since this cubie is flipped twice, it is overall not flipped. 

Putting that together, we see that if $r = i$, and bit $c$ of $l_i$ is $1$, then $b_i$ swaps the cubies in positions $(c, -r)$ and $(-c, r)$ without flipping either.

This covers the three types of cubies that are moved by $b_i$. All other cubies remain in place.
\end{proof}

We can apply the above to figure out the effect of transformation $b_1 \circ b_2 \circ\cdots \circ b_n$ on configuration $C_0$. In particular, that allows us to learn the coloring of configuration $C_b$.

\begin{theorem}
\label{thm:square_coloring}
In $C_b$, a cubie has top face blue if and only if it is in position $(c, r)$ such that $1 \le r \le n$ and either $|c| > m$ or $|c| \le m$ and bit $|c|$ of $l_r$ is $0$.
\end{theorem}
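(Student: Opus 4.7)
The plan is to track the top stickers of $C_b = b(C_0)$ by applying $b = b_1 \circ \cdots \circ b_n$ row by row. In $C_0$ every cubie has a red top sticker and a blue bottom sticker, so a top sticker in $C_b$ is blue precisely when the cubie now at that position has been flipped an odd number of times on its way there. By Lemma~\ref{lemma:square_b_i_commute}, the $b_i$ act on disjoint sets of cubies, and from its proof the cubies moved by $b_i$ all lie in rows $\pm i$. Thus $b$ decomposes as independent actions of $b_r$ on the row pair $\{r, -r\}$ for each $r \in \{1, \ldots, n\}$, while every cubie whose row index $r$ satisfies $|r| > n$ is untouched and retains its red top.

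Next I would fix $r$ with $1 \le r \le n$ and determine the top colors in rows $r$ and $-r$ by invoking Lemma~\ref{lemma:square_b_i_effect} with $i = r$. For a position $(c, r)$ with $1 \le c \le s/2$: if $c \le m$ and bit $c$ of $l_r$ is $1$, then Case 1 of the lemma applies to the pair $\{(c, -r), (-c, r)\}$ but leaves $(c, r)$ untouched, so its red top is preserved; otherwise ($c > m$, or $c \le m$ with bit $c$ of $l_r$ equal to $0$) Case 2 swaps $(c, r)$ with $(-c, r)$ and flips both, so the cubie arriving at $(c, r)$ came from $(-c, r)$ with a flip, exposing its originally blue bottom sticker. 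The position $(-c, r)$ admits the symmetric analysis: under Case 1 it receives an unflipped cubie from $(c, -r)$ (red top preserved), and under Case 2 it receives a flipped cubie from $(c, r)$ (blue bottom now on top). Consequently, in row $r$ the blue tops occur at exactly those positions $(c, r)$ with $|c| > m$ or with $|c| \le m$ and bit $|c|$ of $l_r$ equal to $0$.

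It remains to rule out blue tops in row $-r$. The position $(c, -r)$ is touched by $b_r$ only under Case 1 (when bit $c$ of $l_r$ is $1$), where it receives an unflipped cubie from $(-c, r)$ and therefore keeps its red top; the position $(-c, -r)$ is never moved at all. Either way, every top sticker in row $-r$ is red in $C_b$. Combining the three zones (rows with $|r| > n$, row $r$, and row $-r$ for $1 \le r \le n$) yields exactly the characterization stated in the theorem. The main obstacle, and the only genuine work, is the bookkeeping that keeps straight which of the two symmetric positions in a row serves as source versus target under each of the two cases of Lemma~\ref{lemma:square_b_i_effect}; once that is set up carefully, the coloring conclusion is a direct read-off from the original red-on-top configuration $C_0$ together with the parity of flips dictated by each case.
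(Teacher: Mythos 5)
Your proposal is correct and follows essentially the same route as the paper: decompose $b$ into the disjointly-acting $b_i$, invoke Lemma~\ref{lemma:square_b_i_effect} for each row pair $\{r,-r\}$, and read off blue tops as the positions receiving an odd number of flips. Your version is just slightly more explicit than the paper's about which positions serve as sources versus targets in each case of that lemma, which is harmless extra care rather than a different argument.
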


\begin{proof}
$C_b$ is obtained from $C_0$ by applying transformation $b_1 \circ b_2 \circ\cdots \circ b_n$. A cubie has top face blue in $C_b$ if and only if transformation $b_1 \circ b_2 \circ\cdots \circ b_n$ flips that cubie an odd number of times. Each $b_i$ affects a disjoint set of cubies. Thus, among the cubies affected by some particular $b_i$, the only ones that end up blue face up are the ones that are flipped by $b_i$. By Lemma~\ref{lemma:square_b_i_effect}, these are the cubies in row $i$ with column $c$ such that it is not the case that bit $|c|$ of $l_i$ is $1$.  Tallying up those cubies over all the $b_i$s yields exactly the set of blue-face-up cubies given in the theorem statement.
\end{proof}

This concludes the description of $C_b$ in terms of colors. The coloring of configuration $C_t$---the configuration that is actually obtained by applying the reduction to $l_1, \ldots, l_n$---can be obtained from the coloring of configuration $C_b$ by applying transformation $a_1$. 

Applying Theorem~\ref{thm:square_coloring} to the previously given example, we obtain the coloring of the Rubik's Square in configuration $C_b$ as shown in Figure~\ref{fig:square_example_2a}. Note that the $n\times m$ grid of bits comprising $l_1, \ldots, l_n$ is actually directly encoded in the coloring of a section of the Rubik's Square. In addition, the coloring of the Rubik's Square in configuration $C_t$ is shown for the same example in Figure~\ref{fig:square_example_2b}.

\begin{figure}[h]
\centering
\begin{subfigure}[b]{0.48\textwidth}
\includegraphics[width=\textwidth]{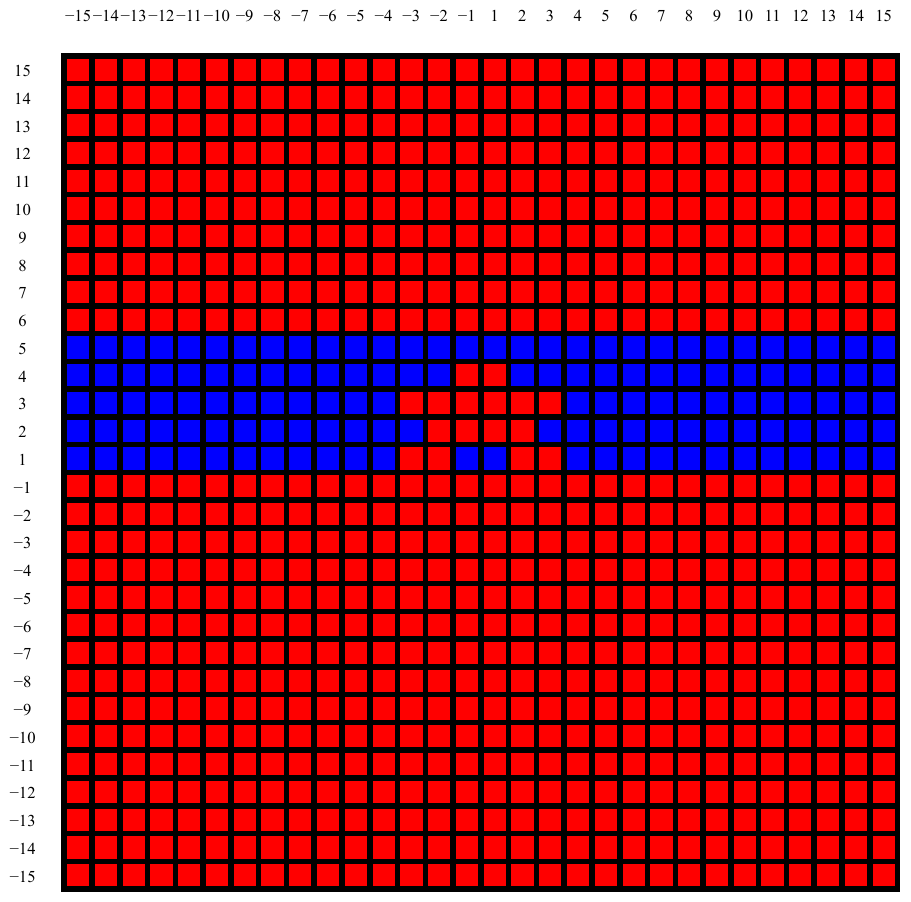}
\caption{The top face of $C_b$ for the example input $l_1, \ldots, l_n$.}
\label{fig:square_example_2a}
\end{subfigure}
\hfill
\begin{subfigure}[b]{0.48\textwidth}
\includegraphics[width=\textwidth]{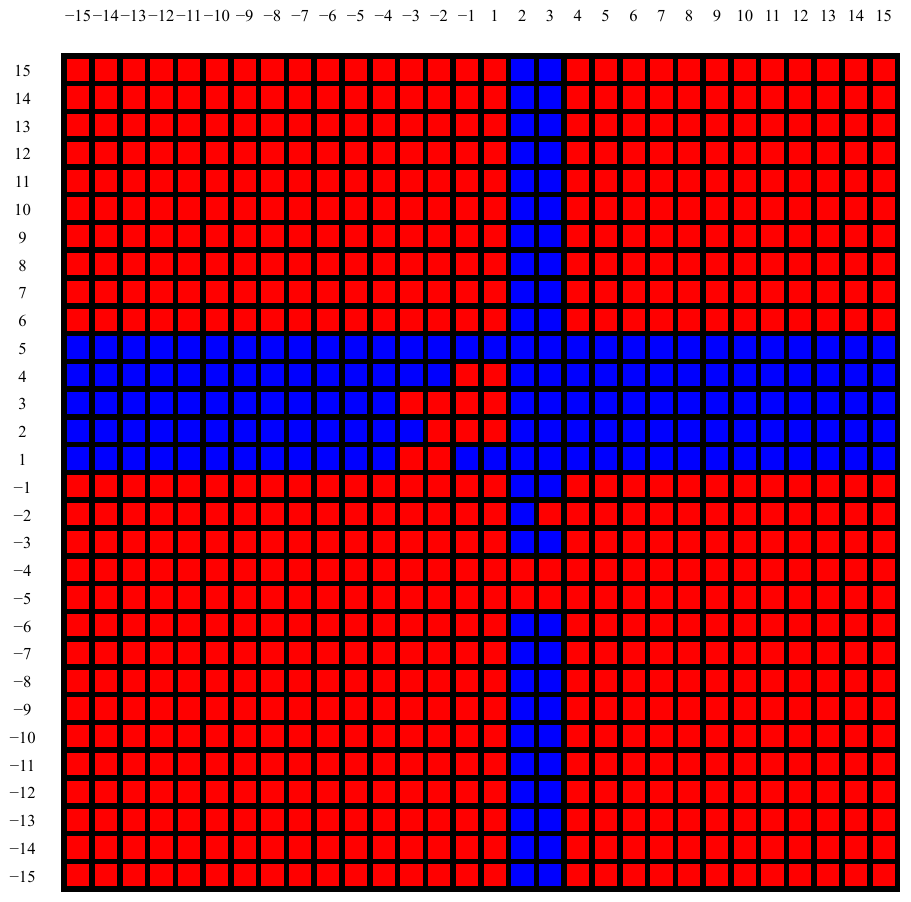}
\caption{The top face of $C_t$ for the example input $l_1, \ldots, l_n$.}
\label{fig:square_example_2b}
\end{subfigure}
\caption{The coloring of the Rubik's Square for the example input $l_1, \ldots, l_n$.}
\end{figure}

\subsection{(Group) Rubik's Square solution $\to$ Promise Cubical Hamiltonian Path solution}

Below, we prove the following theorem:

\begin{theorem}
If $(C_t, k)$ is a ``yes'' instance to the Rubik's Square problem, then $l_1, \ldots, l_n$ is a ``yes'' instance to the Promise Cubical Hamiltonian Path problem.
\label{thm:square_second_direction}
\end{theorem}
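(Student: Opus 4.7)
The plan is to reverse-engineer the algebraic identity derived in the forward direction (Theorem~\ref{thm:square_first_direction}). Given a solving sequence of length $l \le 2n-1$ for $C_t$, I want to extract from it an alternating sequence of $y$- and $x$-moves whose indices encode a Hamiltonian path through $l_1, \ldots, l_n$.

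First, I would use the large side length $s = 2(\max(m,n) + 2n)$ to argue that any solved configuration reachable from $C_t$ in at most $2n-1$ moves must in fact equal $C_0$ itself. The intuition is that $C_t$ differs from $C_0$ only in a small central region (rows with $|r| \le n$ and columns with $|c| \le m$), while reaching any other solved configuration would amount to ``reorienting'' the puzzle as a whole, which is impossible in so few moves when $s$ greatly exceeds the move budget. Thus any solving sequence must compose to $t^{-1}$, effectively reducing the Rubik's Square problem to its Group variant.

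Next, by analyzing the coloring of $C_t$ (using Theorem~\ref{thm:square_coloring} together with the action of $a_1$), I would argue that any inverting sequence of length at most $2n - 1$ must have a rigid structure: it must consist of exactly $n$ row ($y$) moves at indices forming a permutation of $\{1, \ldots, n\}$, exactly $n-1$ column ($x$) moves, and these two move types must strictly alternate, starting and ending with a $y$-move. The key local observation is that for each $r \in \{1, \ldots, n\}$, the top face of row $r$ in $C_t$ carries a distinctive ``bit pattern'' imprint of $l_r$ that cannot be cleaned up without a $y$-move at index $\pm r$, and each such row must be touched essentially exactly once given the tight move budget.

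Finally, given such an alternating inverting sequence $y_{i_n} \circ x_{j_{n-1}} \circ \cdots \circ x_{j_1} \circ y_{i_1} = t^{-1}$, I would reverse the algebra of Theorem~\ref{thm:square_first_direction}: left-multiplying by $t = a_1 \circ b_1 \circ \cdots \circ b_n$, using the commutativity of the $b_i$'s (Lemma~\ref{lemma:square_b_i_commute}), and unfolding the definitions of $b_i$ and $a_i$, conclude that consecutive $l_{i_p}$ and $l_{i_{p+1}}$ must differ in exactly the single bit $j_p$, and that $i_1 = 1$ and $i_n = n$. This yields the desired Hamiltonian path in the cubical graph. The main obstacle will be the second step: rigorously ruling out deviations from the alternating form (two $y$-moves in a row, $y$-moves at negative indices, $x$-moves clustered together, and so on) by exploiting the very specific coloring structure of $C_t$ and showing that any such deviation inflates the move count beyond $2n-1$.
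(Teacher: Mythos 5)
Your overall architecture --- force the solving sequence into a rigid alternating form and then read off a Hamiltonian path --- matches the paper's, but two of your three steps have genuine gaps. Step one fails as stated: from the many cubies that are never touched by the $\le 2n-1$ moves you can conclude that the solved configuration $C'$ has the same \emph{face colors} as $C_0$, but not that $C'=C_0$ sticker-for-sticker. The Rubik's Square problem only requires monochromatic faces, and same-colored stickers are interchangeable, so a priori the solving sequence composes to $t^{-1}$ only up to an arbitrary color-preserving permutation. This undermines step three, which leans on the group identity $y_{i_n}\circ\cdots\circ y_{i_1}=t^{-1}$ to ``reverse the algebra.'' The paper never leaves the world of colors: it computes the coloring of $C_b$ (Theorem~\ref{thm:square_coloring}) and shows that two same-column cubies with different top colors in $C_b$ must be separated by an index-$j$ column move between the relevant row moves, since otherwise they are flipped equally often and cannot both end up red. (Even granted the group identity, ``unfolding and matching terms'' is not automatic --- $RS_s$ has many relations --- so one must track the action on specific cubies anyway.)

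The larger gap is in step two. The move budget alone does not force each row $r\in\{1,\dots,n\}$ to be moved exactly once: one row moved exactly twice, the other $n-1$ rows once each, and $n-2$ separating column moves costs exactly $(n+1)+(n-2)=2n-1=k$, so ``tightness'' does not exclude it, nor does it exclude the single move of some row being at index $-r$. The missing idea is a parity argument: among the $2n=k+1$ indices $\max(m,n)+1,\dots,\max(m,n)+2n$ some index $u$ is never moved, and tracking the never-relocated cubies $(u,u)$ and $(u,r)$ shows the solution must flip row $r$ an odd number of times exactly when $r\in\{1,\dots,n\}$ and an even number of times otherwise; hence a row moved more than once is moved at least three times, and a row moved once is moved at positive index. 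Only with that lemma, plus the Hamming-distance lower bound on column moves between consecutive single-row moves, does the counting close. As written, your proposal asserts the rigid structure rather than proving it, and the one step you justify in detail proves less than you need.
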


By Lemma~\ref{lemma:square_types}, this will immediately also imply the following corollary:

\begin{corollary}
If $(t, k)$ is a ``yes'' instance to the Group Rubik's Square problem, then $l_1, \ldots, l_n$ is a ``yes'' instance to the Promise Cubical Hamiltonian Path problem.
\label{corollary:square_second_direction}
\end{corollary}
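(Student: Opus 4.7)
The plan is to obtain this corollary as a direct two-step consequence of results already established, with no new combinatorial work required. Starting from a ``yes'' instance $(t, k)$ of the Group Rubik's Square problem, the first step is to apply Lemma~\ref{lemma:square_types}, which converts this hypothesis into the statement that $(C_t, k) = (t(C_0), k)$ is a ``yes'' instance to the (non-Group) Rubik's Square problem. The content there is essentially bookkeeping: any sequence of at most $k$ generators $x_i, y_i$ that inverts $t$ in $RS_n$, when applied physically to the scrambled configuration $t(C_0)$, returns it to $C_0$, which is in particular monochromatic on every face, so the physical puzzle is solved in at most $k$ moves.

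The second step is to feed this intermediate conclusion into Theorem~\ref{thm:square_second_direction}, which immediately yields that $l_1, \ldots, l_n$ is a ``yes'' instance to the Promise Cubical Hamiltonian Path problem. Chaining the two implications gives exactly the statement of the corollary, and both invocations use the same $k$ and the same $t$ that were handed to us, so no reparametrization is needed.

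The main obstacle is not in this proof at all but upstream, in Theorem~\ref{thm:square_second_direction}, which has to do the actual extraction of a Hamiltonian ordering from a short physical solution, leaning on the coloring analysis encapsulated in Theorem~\ref{thm:square_coloring} and Lemma~\ref{lemma:square_b_i_effect}. For the corollary itself the only subtlety worth flagging is that the word of generators that witnesses the Group instance is exactly the move sequence that witnesses the non-Group instance under Lemma~\ref{lemma:square_types}; since both problems count moves the same way and share the same bound $k = 2n - 1$, the witness transfers verbatim and no further argument is required.
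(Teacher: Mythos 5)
Your proposal is correct and matches the paper's own argument exactly: the paper derives Corollary~\ref{corollary:square_second_direction} by composing Lemma~\ref{lemma:square_types} (Group ``yes'' instance $\Rightarrow$ $(C_t,k)=(t(C_0),k)$ is a ``yes'' instance to the Rubik's Square problem) with Theorem~\ref{thm:square_second_direction}, with all the substantive work residing in that theorem. Nothing further is needed.
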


To prove the theorem, we consider a hypothetical solution to the $(C_t, k)$ instance of the Rubik's Square problem. A solution consists of a sequence of Rubik's Square moves $m_1, \ldots, m_{k'}$ with $k' \le k$ such that $C' = (m_{k'} \circ\cdots \circ m_1)(C_t)$ is a solved configuration of the Rubik's Square. Throughout the proof, we will use only the fact that move sequence $m_1, \ldots, m_{k'}$ solves the top and bottom faces of the Rubik's Square in configuration $C_t$.

The main idea of the proof relies on three major steps. In the first step, we show that $m_1, \ldots, m_{k'}$ must flip row $i$ an odd number of times if $i \in \{1, \ldots, n\}$, and an even number of times otherwise.

We then define set $O\subseteq\{1, \ldots,n\}$ (where $O$ stands for ``one'') to be the set of indices $i$ such that there is exactly one index-$i$ row move. Clearly, in order to satisfy the parity constraints, every $i \in O$ must have one row $i$ move and zero row $-i$ moves in $m_1, \ldots, m_{k'}$. The second step of the proof is to show that, if $i_1, i_2 \in O$, then the number of column moves in $m_1, \ldots, m_{k'}$ between the single flip of row $i_1$ and the single flip of row $i_2$ is at least the Hamming distance between $l_{i_1}$ and $l_{i_2}$.

The final step of the proof is a counting argument. There are four types of moves in $m_1, \ldots, m_{k'}$: 
\begin{enumerate}
\item index-$i$ row moves with $i \in O$ (all of which are positive moves as shown above),
\item index-$i$ row moves with $i \in \{1, \ldots, n\} \setminus O$,
\item column moves, and
\item index-$i$ row moves with $i \not\in \{1, \ldots, n\}$.
\end{enumerate}

For each $i \in O$, there is exactly one index-$i$ move by definition of $O$. Therefore the number of type-$1$ moves is exactly $|O|$.

For each $i$ in $\{1, \ldots, n\} \setminus O$, the number of index-$i$ row moves is odd by the parity constraint. Furthermore, by the definition of $O$, this number is not one. Thus each $i$ in $\{1, \ldots, n\} \setminus O$ contributes at least three moves. Therefore the number of type-$2$ moves is at least $3(|\{1, \ldots, n\} \setminus O|) = 3(n - |O|)$. 

Consider the moves of rows $i$ with $i \in O$. Since the $l_i$s are all distinct, there must be at least one column move between every consecutive pair of such moves. Thus the total number of type-$3$ moves (column moves) is at least $|O| - 1$. Furthermore, the number of type-$3$ moves is $|O| - 1$ if and only if the consecutive pairs of row $i \in O$ moves have exactly one column move between them. Such a pair of $i$s has exactly one column move between the two row-$i$ moves only if the corresponding pair of $l_i$s is at Hamming distance one. Therefore, if we consider the $l_i$s for $i \in O$ in the order in which row-$i$ moves occur in $m_1, \ldots, m_{k'}$, then the number of type-$3$ moves is exactly $|O| - 1$ if and only if those $l_i$s in that order have each $l_i$ at Hamming distance exactly one from the next (and more otherwise).

The number of type-$4$ moves is at least $0$.

Adding these bounds up, we see that there are at least $(|O|) + 3(n - |O|) + (|O| - 1) + 0 = 3n - 1 - |O| = k + (n - |O|)$ moves. Since $n - |O| \ge 0$ and the number of moves is at most $k$, we can conclude that (1) $|O| = n$ and (2) the number of moves of each type is exactly the minimum possible computed above. Since $|O| = n$ we know that $O = \{1, \ldots, n\}$. But then looking at the condition for obtaining the minimum possible number of type-$3$ moves, we see that the $l_i$s for $i \in O = \{1, \ldots, n\}$ in the order in which row-$i$ flips occur in $m_1, \ldots, m_{k'}$ are each at Hamming distance exactly one from the next. Thus, there is a reordering of $l_1, \ldots, l_n$ in which each $l_i$ is Hamming distance one from the next; in other words, the cubical graph specified by bitstrings $l_1, \ldots, l_n$ has a Hamiltonian path and $l_1, \ldots, l_n$ is a ``yes'' instance to the Promise Cubical Hamiltonian Path problem.

All that's left is to complete the first two steps of the proof. We prove these two steps in the lemmas below:

\begin{lemma}
Move sequence $m_1, \ldots, m_{k'}$ must flip row $i$ an odd number of times if $i \in \{1, \ldots, n\}$, and an even number of times otherwise.
\end{lemma}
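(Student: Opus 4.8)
The plan is to locate a column index that the whole solution leaves untouched and then use a single cubie sitting in it as a ``counter'' for the number of moves flipping a given row.

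First I would record the geometric facts about Rubik's Square moves that I need. Every move --- a row flip or a column flip --- is a $180^\circ$ rotation, so it swaps the $+z$ and $-z$ stickers of each cubie it affects and never carries a $+z$ or $-z$ sticker onto a side face; moreover a row move negates the $x$-coordinate of each affected cubie but fixes its $y$-coordinate, while a column move negates the $y$-coordinate but fixes the $x$-coordinate (this also re-proves the invariance of $|x|$ and $|y|$ used in Lemma~\ref{lemma:square_b_i_commute}). In particular a cubie's $y$-coordinate can change only when a column move affects it. Since in every reachable configuration each cubie carries exactly one red and one blue $z$-sticker, and since $C'$ is solved, the top face of $C'$ is monochromatic; let $\epsilon = 0$ if that color is red (so $C'$ agrees with $C_0$ on the $z$-stickers) and $\epsilon = 1$ if it is blue. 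Then a cubie shows top color $\epsilon$ in $C'$ if and only if the number of moves of the solution that affect it is congruent mod $2$ to $[\text{its top color in }C_t]\oplus\epsilon$.

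Next I would exploit the slack in $s = 2(\max(m,n)+2n)$: the number of column index-pairs $\{c,-c\}$ with $|c| > m$ is $s/2 - m \ge 2n$, whereas the solution has length $k' \le k = 2n-1$ and hence contains at most $2n-1$ column moves, each touching a single column index-pair. So there is a pair $\{c^*,-c^*\}$ with $|c^*| > m$ that no move of the solution flips. Fix such a $c^*$ and a row index $i$, and let $X$ be the cubie at position $(c^*,i)$ in $C_t$. Row moves keep $X$ inside columns $\{c^*,-c^*\}$ (they only negate its $x$-coordinate), and no move of the solution flips column $c^*$ or $-c^*$, so $X$ is never affected by a column move and its $y$-coordinate never changes: $X$ stays in row $i$ throughout. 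Consequently the only moves of the solution affecting $X$ are the moves flipping row $i$, each flipping $X$ exactly once, so $X$ is flipped exactly as many times as the solution flips row $i$. On the other hand $|c^*| > m$, so by Theorem~\ref{thm:square_coloring} the top color of $X$ in $C_b$ --- hence in $C_t$, since $a_1$ neither moves nor flips anything in column $c^*$ --- is blue precisely when $1 \le i \le n$, and its top color in $C'$ is $\epsilon$. Therefore the number of moves flipping row $i$ is congruent mod $2$ to $[1\le i\le n]\oplus\epsilon$.

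Finally I would pin down $\epsilon$. If $\epsilon = 1$, the congruence just derived forces an odd, hence positive, number of moves flipping row $i$ for every row index $i\notin\{1,\ldots,n\}$; there are $s-n$ such indices, so the solution would need at least $s - n > 2n - 1 \ge k'$ moves, a contradiction. Hence $\epsilon = 0$, and the congruence reads exactly as the lemma. The key idea is the existence of an untouched far column; the one step that calls for care is verifying that the chosen cubie really is immune to every move except the flips of row $i$, which is precisely where that choice is used, and the rest is routine bookkeeping.
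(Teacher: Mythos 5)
Your proof is correct and follows essentially the same strategy as the paper's: pigeonhole a column index untouched by the solution (the paper takes $u \in \{\max(m,n)+1,\ldots,\max(m,n)+2n\}$, you take $|c^*|>m$), and use the cubie at the intersection of that column with row $i$ as a parity counter that is flipped exactly once per row-$i$ move and by nothing else. The only real divergence is how the top-face color of the solved configuration $C'$ is pinned down --- the paper tracks the completely fixed cubie at position $(u,u)$ starting from $C_0$, while you rule out a blue top face by observing it would force at least $s-n > k \ge k'$ row moves; both arguments are valid.
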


\begin{proof}
Consider the transformation 
$$m_{k'}\circ\cdots \circ m_1 \circ t = m_{k'} \circ\cdots \circ m_1 \circ a_1 \circ b_1 \circ b_2 \circ\cdots \circ b_n.$$
This transformation, while not necessarily the identity transformation, must transform $C_0$ into another solved Rubik's Square configuration $C'$. 

Consider the $2n = k+1$ indices $\max(m, n) + 1, \ldots, \max(m,n) + 2n$. At least one such index $i$ must exist for which no move in $m_1, \ldots, m_{k'}$ is an index-$i$ move. Let $u$ be such an index.

Consider the effect of transformation $m_{k'} \circ\cdots \circ m_1 \circ a_1 \circ b_1 \circ b_2 \circ\cdots \circ b_n$ on the cubie in position $(u, u)$. If we write $t = a_1 \circ b_1 \circ b_2 \circ\cdots \circ b_n$ as a sequence of $x_j$s and $y_{i'}$s (using the definitions of $a_1$ and $b_i$), then every move in $t$ flips rows and columns with indices of absolute value at most $\max(m, n)$. Thus no term in the transformation ($m_{k'} \circ\cdots \circ m_1 \circ a_1 \circ b_1 \circ b_2 \circ\cdots \circ b_n$) flips row or column $u$. We conclude that the cubie in position $(u,u)$ is unmoved by this transformation. Applying this transformation to $C_0$ yields $C'$. So since this cubie starts with top sticker red in configuration $C_0$, the final configuration $C'$ also has this cubie's top sticker red. Since $C'$ is a solved configuration, the entire top face in $C'$ must be red.

Next consider the cubie in position $(u, r)$ for any $r$. Since no row or column with index $\pm u$ is ever flipped in transformation $m_{k'} \circ\cdots \circ m_1 \circ a_1 \circ b_1 \circ b_2 \circ\cdots \circ b_n$, this cubie is only ever affected by flips of row $r$. Furthermore, every flip of row $r$ flips this cubie and therefore switches the color of its top face. Since the transformation in question converts configuration $C_0$ into configuration $C'$, both of which have every cubie's top face red, the row in question must be flipped an even number of times.

For $i \in \{1, \ldots, n\}$, the transformation $a_1 \circ b_1 \circ b_2 \circ\cdots \circ b_n$, when written out fully in terms of $y_{i'}$s and $x_j$s, includes exactly one flip of row $y_i$. Thus move sequence $m_1, \ldots, m_{k'}$ must flip each of these rows an odd number of times. Similarly, for $i \not\in \{1, \ldots, n\}$, the transformation $a_1 \circ b_1 \circ b_2 \circ\cdots \circ b_n$, when written out fully in terms of $y_{i'}$s and $x_j$s, does not include any flips of row $y_i$ at all. Thus move sequence $m_1, \ldots, m_{k'}$ must flip each of these rows an even number of times. 
\end{proof}

\begin{lemma}
If $i_1, i_2 \in O$ (with $i_1\ne i_2$), then the number of column moves $x_j$ between the unique $y_{i_1}$ and $y_{i_2}$ moves in sequence $m_1, \ldots, m_{k'}$ is at least the Hamming distance between $l_{i_1}$ and $l_{i_2}$.
\end{lemma}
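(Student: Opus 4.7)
First I would reduce the claim to a parity count. Assume WLOG that $y_{i_1}$ precedes $y_{i_2}$ at steps $p_1 < p_2$. Since $i_1, i_2 \in O$, these are the only row moves anywhere in $m_1, \ldots, m_{k'}$ whose indices have absolute value in $\{i_1, i_2\}$, so between steps $p_1$ and $p_2$ no row move touches rows $\pm i_1$ or $\pm i_2$---only column moves can. Distinct column moves commute and each $x_c$ is an involution, so on these four rows the net effect of the $h$ column moves in this window equals $\prod_{c \in J} x_c$ for the set $J$ of columns flipped an odd number of times; hence $h \geq |J|$, and it suffices to prove $|J| \geq d := \mathrm{Hamming}(l_{i_1}, l_{i_2})$.

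To bound $|J|$, for each $c \in \{1, \ldots, m\}$ I would pair $c$ with $-c$ and track the $4$-tuple $V(c, S) := (T_{i_1}(c), T_{-i_1}(c), T_{i_2}(c), T_{-i_2}(c))$ of top-sticker color bits at state $S$; because every involved cubie is interior (since $s > 2\max(m, n)$), its bottom sticker is just the complement of its top, so these four bits carry all the color information we need. Under $x_c$, $V(c, \cdot)$ transforms by the involution $(A,B,C,D) \mapsto (1{-}B, 1{-}A, 1{-}D, 1{-}C)$ while $V(-c, \cdot)$ is unchanged; the move $y_{i_1}$ swap-and-flips the $0$-th coordinate across the pair $\{V(c, \cdot), V(-c, \cdot)\}$, and $y_{i_2}$ does the analogous thing on the $2$-nd coordinate. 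Using Theorem~\ref{thm:square_coloring}, together with the fact that $a_1$ just prepends one extra involution to $V(c, \cdot)$ at positive $c \leq m$ with $(l_1)_c = 1$, I can write $V(c, C_t)$ and $V(-c, C_t)$ in closed form in terms of $a := (l_{i_1})_{|c|}$ and $b := (l_{i_2})_{|c|}$. Propagating forward through the pre-$y_{i_1}$ column moves (parameterized by parities at $c$ and $-c$) and then through $y_{i_1}$ gives an explicit expression for $V(c, S_a)$, where $S_a$ is the state just after $y_{i_1}$. Dually, running the analysis backward from $C'$---whose $V$-value at every column has the uniform form $(\alpha, \alpha, \alpha, \alpha)$ because the solved top face is monochromatic---through the post-$y_{i_2}$ column moves and then through $y_{i_2}$ forces $V(c, S_b) = (\alpha_c, \alpha_c, 1{-}\alpha_{-c}, \alpha_c)$ for certain bits $\alpha_c$ determined by the solution.

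The crux is then to equate $V(c, S_b)$ with $\mathrm{involution}^{A_1(c)} V(c, S_a)$ and $V(-c, S_b)$ with $\mathrm{involution}^{A_1(-c)} V(-c, S_a)$, obtaining a small system of equations in $(a, b, \alpha_c, \alpha_{-c}, A_1(c), A_1(-c))$ and the pre-$y_{i_1}$ parities. A case analysis on $(a, b)$ shows that when $a = b$ the system is consistent with $(A_1(c), A_1(-c)) = (0, 0)$, contributing nothing to $|J|$; but when $a \ne b$, both $(0, 0)$ and $(1, 1)$ are inconsistent---each leads to contradictory requirements on $\alpha_c$ or $\alpha_{-c}$ derived from the $V(c, \cdot)$ and $V(-c, \cdot)$ equations---so $A_1(c) + A_1(-c) = 1$. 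Summing over the $d$ coordinates on which $l_{i_1}$ and $l_{i_2}$ disagree then gives $|J| \geq d$, completing the proof. I expect this final case analysis to be the main obstacle: although only four combinations of $(A_1(c), A_1(-c))$ need to be examined, each subcase requires carefully combining the forward expression from $C_t$, the backward expression from $C'$, and the cross-column coupling that $y_{i_1}$ and $y_{i_2}$ introduce between $c$ and $-c$.
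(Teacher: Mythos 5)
Your argument is correct, and its core coincides with the paper's: for each bit $j$ where $l_{i_1}$ and $l_{i_2}$ differ, the color mismatch (via Theorem~\ref{thm:square_coloring}) between the cubies at $(j,i_1)$ and $(j,i_2)$ in $C_b$ forces an index-$j$ column move into the window between the two row moves, and distinct $j$ give distinct moves. But your execution is genuinely different and heavier. The paper tracks only the two mismatched cubies and argues qualitatively by contradiction: if no index-$j$ column move lies in the window, the moves affecting these cubies factor into pre-window column moves (both cubies share column $j$, so each move flips both or neither), the two row moves (which flip both or neither, since the cubies' row signs always agree), and post-window column moves (again sharing a column); hence the two cubies are flipped equally often and cannot both end up red. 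This yields ``at least one index-$j$ move in the window'' with no parity bookkeeping, and the Hamming bound follows by counting one move per differing $j$. Your route instead needs the sharper parity statement, because your count is $h\ge|J|$ with $J$ the set of signed columns of odd multiplicity: you must rule out, say, $x_j$ occurring twice and $x_{-j}$ never. That sharper statement is true, but note it is \emph{not} visible from the two mismatched cubies alone --- one $x_j$ plus one $x_{-j}$ in the window is perfectly consistent with those two cubies both ending red --- so your choice to track all eight cubies at rows $\pm i_1,\pm i_2$ in columns $\pm c$ is essential rather than merely convenient. Carrying out your case analysis (over the pre-window parities at $c$ and $-c$ and the two values of $(a,b)$ with $a\ne b$) does confirm both halves of your claim: $(A_1(c),A_1(-c))=(0,0)$ and $(1,1)$ are each inconsistent, so the sum is exactly $1$; though for the lemma you only need to exclude $(0,0)$, since $(1,1)$ would still contribute $2\ge 1$ to $|J|$. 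The trade-off is that the paper's proof is much shorter and needs no explicit state formulas, while yours establishes the strictly stronger fact that the number of index-$j$ column moves strictly between the two row moves is odd.
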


\begin{proof}
We will prove the following useful fact below: if $i_1, i_2 \in O$ (with $i_1\ne i_2$) and $j \in \{1, 2, \ldots, m\}$ such that the top colors of the cubies in locations $(j, i_1)$ and $(j, i_2)$ are different in configuration $C_b$, then there must be at least one index-$j$ column move in between the unique $y_{i_1}$ and $y_{i_2}$ moves in sequence $m_1, \ldots, m_{k'}$.

We know from Theorem~\ref{thm:square_coloring} that, if $i \in \{1, 2, \ldots, n\}$ and $j \in \{1, 2, \ldots, m\}$, then the top color of the cubie in location $(j, i)$ of configuration $C_b$ is red if and only if $(l_i)_j = 1$. Thus, if $l_{i_1}$ and $l_{i_2}$ differ in bit $j$, then in configuration $C_b$ one of the two cubies in positions $(j, i_1)$ and $(j, i_2)$ will have top face red and the other will have top face blue. Applying the above useful fact, we see that at least one index-$j$ column move will occur in sequence $m_1, \ldots, m_{k'}$ between the unique $y_{i_1}$ and $y_{i_2}$ moves. Since this column move has index $\pm j$, every difference in $l_{i_1}$ and $l_{i_2}$ will contribute at least one distinct column move between the unique $y_{i_1}$ and $y_{i_2}$ moves. Assuming the useful fact, we can conclude that the number of column moves between the unique $y_{i_1}$ and $y_{i_2}$ moves is at least the Hamming distance between $l_{i_1}$ and $l_{i_2}$, as desired.

We now prove the useful fact by contradiction. Assume that the useful fact is false, i.e., that there exists some $i_1, i_2 \in O$ and $j \in \{1, 2, \ldots, m\}$ such that the top colors of the cubies in locations $(j, i_1)$ and $(j, i_2)$ are different in $C_b$ and such that no index-$j$ column move is made between the unique $y_{i_1}$ and $y_{i_2}$ moves in sequence $m_1, \ldots, m_{k'}$.

Consider these two cubies. Starting in configuration $C_b$, we can reach configuration $C'$ by applying transformation $m_{k'} \circ\cdots \circ m_1 \circ a_1 = m_{k'} \circ\cdots \circ m_1 \circ (x_1)^{(l_1)_1} \circ (x_2)^{(l_1)_2} \circ\cdots \circ (x_3)^{(l_1)_m}$. Note that this transformation consists of some (but not necessarily all) of the moves $x_1, x_2, \ldots, x_m$ followed by the move sequence $m_1, \ldots, m_{k'}$. We will consider the effect of this transformation on the two cubies. 

Since the two cubies start in locations $(j, i_1)$ and $(j, i_2)$, the only moves that could ever affect these cubies are of the forms $x_j$, $x_{-j}$, $y_{i_1}$, $y_{-i_1}$, $y_{i_2}$, and $y_{-i_2}$. Furthermore, by the definition of $O$, no moves of the form $y_{-i_1}$ or $y_{-i_2}$ occur and the moves $y_{i_1}$ and $y_{i_2}$ each occur exactly once. Finally, we have by assumption that no moves of the form $x_j$ or $x_{-j}$ (index-$j$ column moves) occur between moves $y_{i_1}$ and $y_{i_2}$. 

Putting these facts together, we see that the effect of transformation $m_{k'} \circ\cdots \circ m_1 \circ a_1$ on these two cubies is exactly the same as the effect of some transformation of the following type: (1) some number of moves of the form $x_j$ or $x_{-j}$, followed by (2) the two moves $y_{i_1}$ and $y_{i_2}$ in some order, followed by (3) some number of moves of the form $x_j$ or $x_{-j}$.

Consider the effect of any such transformation on the two cubies. In step (1), each move of the form $x_j$ or $x_{-j}$ either flips both cubies (since they both start in column $j$) or flips neither, so the two cubies are each flipped an equal number of times. Furthermore, the row index of the two cubies is either positive for both or negative for both at all times throughout step (1). In step (2), either each of the two cubies is flipped exactly once (if their row indices at the start of step (2) are both positive) or neither of the two cubies is flipped at all (if their row indices at the start of step (2) are negative); again, the number of flips is the same. Finally, in step (3), both cubies are in the same column (column $j$ if they were not flipped in step (2) and column $-j$ if they were), so each move of the form $x_j$ or $x_{-j}$ either flips both cubies or flips neither; the two cubies are flipped an equal number of times. Thus we see that the two cubies are flipped an equal number of times by such a transformation. 

We can conclude that the two cubies are flipped an equal number of times by transformation $m_{k'} \circ\cdots \circ m_1 \circ a_1$. In configuration $C_b$, the two cubies have different colors on their top faces, so after transformation $m_{k'} \circ\cdots \circ m_1 \circ a_1$ flips each of the two cubies an equal number of times, the resulting configuration still has different colors on the top faces of the two cubies. But the resulting configuration is $C'$, which has red as the top face color of every cubie. Thus we have our desired contradiction. Therefore the useful fact is true and the desired result holds.
\end{proof}

\subsection{Conclusion}

Theorems~\ref{thm:square_first_direction} and~\ref{thm:square_second_direction} and Corollaries~\ref{corollary:square_first_direction} and~\ref{corollary:square_second_direction} show that the polynomial-time reductions given are answer preserving. As a result, we conclude that 

\begin{theorem}
The Rubik's Square and Group Rubik's Square problems are NP-complete.
\end{theorem}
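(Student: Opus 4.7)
The plan is to assemble the concluding theorem from the pieces that have been developed throughout Section~\ref{section:rubiks_square}, together with the NP-hardness of Promise Cubical Hamiltonian Path and the NP-membership argument from Section~\ref{section:puzzle_problems}. Since the statement is a wrap-up, no new technical work is required; the proof is essentially bookkeeping, and the main subtlety is simply to make sure that both the Group and non-Group versions, as well as both reduction directions, are each covered by one of the four previously established results.

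First I would recall that in the ``Reductions'' subsection we gave a polynomial-time procedure converting any Promise Cubical Hamiltonian Path instance $l_1, \ldots, l_n$ into a pair $(t, k)$ for the Group Rubik's Square problem, and into the pair $(C_t, k) = (t(C_0), k)$ for the Rubik's Square problem. Both constructions are clearly polynomial in $n$, $m$, and $s = 2(\max(m,n)+2n)$. Thus to establish NP-hardness of both problems it remains only to verify that each reduction is answer-preserving.

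Next I would combine the forward and reverse implications. For the Group Rubik's Square problem, Theorem~\ref{thm:square_first_direction} gives the forward direction (a Hamiltonian path yields a length-$k$ inversion of $t$) and Corollary~\ref{corollary:square_second_direction} gives the reverse direction (a length-$k$ inversion of $t$ yields a Hamiltonian path). For the Rubik's Square problem, Corollary~\ref{corollary:square_first_direction} gives the forward direction and Theorem~\ref{thm:square_second_direction} gives the reverse direction. Together with the NP-hardness of Promise Cubical Hamiltonian Path established in Section~\ref{section:promise_cubical_hamiltonian_path}, this shows that both the Rubik's Square and Group Rubik's Square problems are NP-hard.

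Finally I would cite the ``Membership in NP'' subsection, which used the $\Theta(n^2/\log n)$ diameter bound from~\cite{demaine} to place both problems in NP via a nondeterministic algorithm that guesses a move sequence of length $\min(k, p(n))$ and verifies it. Combining NP-hardness with NP-membership yields NP-completeness of both problems, completing the proof. The only place where one might hesitate is checking that the four results really do cover all four combinations of (Group vs.\ non-Group) $\times$ (forward vs.\ reverse); since Lemma~\ref{lemma:square_types} cleanly transfers ``yes'' instances from the Group problem to the non-Group problem, this verification is routine.
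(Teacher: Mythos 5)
Your proposal is correct and matches the paper's own argument: the paper likewise concludes NP-completeness by combining Theorems~\ref{thm:square_first_direction} and~\ref{thm:square_second_direction} with Corollaries~\ref{corollary:square_first_direction} and~\ref{corollary:square_second_direction} to show the polynomial-time reductions are answer preserving, together with the NP-membership argument from the earlier section. No gaps.
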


\section{(Group) STM/SQTM Rubik's Cube is NP-complete}
\label{section:rubiks_cube}

\subsection{Reductions}

Below, we introduce the reductions used for the Rubik's Cube case. These reductions very closely mirror the Rubik's Square case, and the intuition remains exactly the same: the $b_i$ terms commute, and so if the input Promise Cubical Hamiltonian Path instance is a ``yes'' instance then the $b_i$s can be reordered so that all but $k$ moves in the definition of $t$ will cancel; therefore in that case $t$ can be both enacted and reversed in $k$ moves. 

There are, however, several notable differences from the Rubik's Square case. The first difference is that in a Rubik's Cube, the moves $x_i$, $y_i$, and $z_i$ are all quarter turn rotations rather than self-inverting row or column flips. One consequence is that unlike in the Rubik's Square case, the term $a_i$ does not have the property that $(a_i)^{-1} = a_i$. A second difference is that in a Rubik's Square, the rows never become columns or visa versa. In a Rubik's Cube on the other hand, rotation of the faces can put rows of stickers that were once aligned parallel to one axis into alignment with another axis. To avoid allowing a solution of the puzzle due to this fact in the absence of a solution to the input Promise Cubical Hamiltonian Path instance, the slices in this construction which take the role of rows $1$ through $n$ in the Rubik's Square case and the slices which take the role of columns $1$ through $m$ in the Rubik's Square case will be assigned entirely distinct indices.

To prove that the STM/SQTM Rubik's Cube and Group STM/SQTM Rubik's Cube problems are NP-complete, we reduce from the Promise Cubical Hamiltonian Path problem of Section~\ref{section:promise_cubical_hamiltonian_path} as described below.

Suppose we are given an instance of the Promise Cubical Hamiltonian Path problem consisting of $n$ biststrings $l_1, \ldots, l_n$ of length $m$ (with $l_n = 00\ldots0$). To construct a Group STM/SQTM Rubik's Square instance we need to compute the value $k$ indicating the allowed number of moves and construct the transformation $t$ in $RC_s$.

The value $k$ can be computed directly as $k = 2n - 1$.

The transformation $t$ will be an element of group $RC_s$ where $s = 6n+2m$. Define $a_i$ for $1\le i \le n$ to be $(x_1)^{(l_i)_1}\circ(x_2)^{(l_i)_2}\circ\cdots\circ(x_m)^{(l_i)_m}$ where $(l_i)_1, (l_i)_2, \ldots, (l_i)_m$ are the bits of $l_i$. Also define $b_i = (a_i)^{-1} \circ z_{m+i} \circ a_i$ for $1\le i \le n$. Then we define $t$ to be $a_1\circ b_1 \circ b_2 \circ\cdots\circ b_n$.

Outputting $(t, k)$ completes the reduction from the Promise Cubical Hamiltonian Path problem to the Group STM/SQTM Rubik's Cube problem. To reduce from the Promise Cubical Hamiltonian Path problem to the STM/SQTM Rubik's Cube problem we simply output $(C_t, k) = (t(C_0), k)$. As with the Rubik's Square case, these reductions are clearly polynomial-time reductions.

\subsection{Promise Cubical Hamiltonian Path solution $\to$ (Group) STM/SQTM Rubik's Cube solution}

In this section, we prove one direction of the answer preserving property of the reductions. This proof is not substantively different from the proof of the first direction for the Rubik's Square problems (in Section~\ref{section:square_first_direction}). The differences in these proofs are all minor details that are only present to account for the differences (listed above) between the Rubik's Square and Rubik's Cube reductions.

\begin{lemma}
The transformations $b_i$ all commute.
\label{lemma:cube_b_i_commute}
\end{lemma}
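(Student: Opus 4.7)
My plan is to adapt the strategy of Lemma~\ref{lemma:square_b_i_commute} by showing that $b_i$ and $b_j$ act as disjoint permutations (in $RC_s$) and therefore commute. The new wrinkle is that $|z|$ is \emph{not} a move-invariant in the Rubik's Cube: a quarter turn of an $x$-slice swaps the $y$- and $z$-coordinates (up to sign) of the cubies in that slice. So the argument from the Square case does not carry over verbatim, and the proof will need to exploit the slack built into the choice $s = 6n + 2m$.

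First, I would identify the set $S_i$ of cubies moved by $b_i = (a_i)^{-1} \circ z_{m+i} \circ a_i$. Exactly as in the Square case, a cubie is affected only if it lies in the $z = m+i$ slice immediately after $a_i$ acts. Since $a_i$ consists only of quarter turns of $x$-slices with indices in $\{1, \ldots, m\}$, the cubies in $S_i$ split into two types: \emph{type I}, cubies at $z_0 = m+i$ whose $x$-slice is not rotated by $a_i$ (i.e., $x_0 \notin \{1, \ldots, m\}$ or $(l_i)_{x_0} = 0$), which $a_i$ leaves in place; and \emph{type II}, cubies with $x_0 \in \{1, \ldots, m\}$, $(l_i)_{x_0} = 1$, and $|y_0| = m+i$, whose $y$-coordinate rotates into the $z = m+i$ slot under the quarter turn.

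Second, I would point out that, unlike in the Square case, $S_i \cap S_j$ may be nonempty for $i \ne j$: a cubie with $z_0 = m+i$, $|y_0| = m+j$, and $x_0 \in \{1, \ldots, m\}$ satisfying $(l_i)_{x_0} = 0$ and $(l_j)_{x_0} = 1$ is a type-I member of $S_i$ and a type-II member of $S_j$. However, any such cubie has $|x_0| \le m$ and $|y_0|, |z_0| \le m + n$, all strictly less than $s/2 = 3n + m$. Hence every cubie in $S_i \cap S_j$ sits strictly in the interior of the Cube and carries no stickers, so in the sticker-permutation group $RC_s$ these overlap cubies contribute nothing.

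Finally, restricting to exterior cubies (those with some coordinate of absolute value $s/2$), I would run a short case analysis: an exterior type-I member of $S_i$ has $|z_0| = m+i < s/2$, so $|x_0| = s/2$ or $|y_0| = s/2$; an exterior type-II member of $S_i$ has $|x_0|, |y_0| < s/2$, forcing $|z_0| = s/2$. Matching these four cases across $i \ne j$ yields a contradiction in each pairing (I--I fails on $z_0 = m+i = m+j$, II--II fails on $|y_0| = m+i = m+j$, and the mixed pairings fail because one side has $|z_0| = m+i < s/2$ while the other demands $|z_0| = s/2$). Thus $b_i$ and $b_j$ permute disjoint sets of stickers and commute in $RC_s$. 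The main obstacle is precisely the non-invariance of $|z|$ that allows the cubie-level overlap, and it is overcome by using the slack in $s = 6n + 2m$ to confine all such overlap to the sticker-free interior.
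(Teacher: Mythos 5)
Your proof is correct and follows essentially the same route as the paper's: reduce to the cubies moved by the $z_{m+i}$ term, observe that the sticker-bearing ones are either on a $\pm z$ face with $|y|=m+i$ or on a side face with $z=m+i$, and conclude these sets are disjoint across distinct $i$. Your explicit treatment of the cubie-level overlap being confined to the sticker-free interior is a careful elaboration of a point the paper handles implicitly by only ever discussing cubies lying on faces, not a different argument.
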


\begin{proof}
Consider any such transformation $b_i$. The transformation $b_i$ can be rewritten as $(a_i)^{-1} \circ z_{m+i} \circ a_i$. For any cubie not moved by the $z_{m+i}$ middle term, the effect of this transformation is the same as the effect of transformation $(a_i)^{-1} \circ a_i = 1$. In other words, $b_i$ only affects cubies that are moved by the $z_{m+i}$ term. 

A cubie affected by this term was either moved into the $z$ slice with index $(m+i)$ by $a_i$ or was already there. $a_i$ consists of some number of clockwise $x$ turns. Thus, in order to be moved into a position with $z = (m+i)$, a cubie would have to start in a position with $y = -(m+i)$ on the $+z$ face or in a position with $y = (m+i)$ on the $-z$ face.

Thus, the cubies affected by $b_i$ must either have $y$ coordinate $\pm(m+i)$ and lie on one of the $\pm z$ faces or have $z$ coordinate $(m+i)$ and lie on one of the other four faces. This is enough to see that the cubies affected by $b_i$ are disjoint from those affected by $b_j$ (for $j \ne i$). In other words, the transformations $b_i$ all commute.
\end{proof}

\begin{theorem}
If $l_1, \ldots, l_n$ is a ``yes'' instance to the Promise Cubical Hamiltonian Path problem, then $(t, k)$ is a ``yes'' instance to the Group SQTM Rubik's Cube problem.
\label{thm:cube_first_direction}
\end{theorem}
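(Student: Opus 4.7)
The plan is to mirror the Rubik's Square proof in Section~\ref{section:square_first_direction}, using Lemma~\ref{lemma:cube_b_i_commute} in place of Lemma~\ref{lemma:square_b_i_commute}, and then exploiting the Hamming-distance-one condition to collapse the product $t$ down to exactly $2n-1$ SQTM moves. First I would invoke the promise to extract a Hamiltonian ordering $l_{i_1},\ldots,l_{i_n}$ with $i_1=1$ and $i_n=n$, and apply Lemma~\ref{lemma:cube_b_i_commute} to rewrite
$$t \;=\; a_1 \circ b_{i_1} \circ b_{i_2} \circ \cdots \circ b_{i_n}.$$
Expanding each $b_{i_p} = a_{i_p}^{-1} \circ z_{m+i_p} \circ a_{i_p}$ and regrouping consecutive $a$-factors, I would write
$$t \;=\; (a_1 \circ a_{i_1}^{-1}) \circ z_{m+i_1} \circ (a_{i_1} \circ a_{i_2}^{-1}) \circ z_{m+i_2} \circ \cdots \circ (a_{i_{n-1}} \circ a_{i_n}^{-1}) \circ z_{m+i_n} \circ a_{i_n}.$$

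The second step is to kill the outer $a$-factors using the promise: $i_1 = 1$ makes $a_1 \circ a_{i_1}^{-1} = 1$, while $l_n = 00\cdots0$ makes $a_{i_n} = a_n = 1$. For each interior factor $a_{i_p} \circ a_{i_{p+1}}^{-1}$, I would use the fact that rotations of parallel $x$-slices commute (they act on disjoint cubie sets, or are powers of the same generator) to merge exponents coordinate-wise:
$$a_{i_p} \circ a_{i_{p+1}}^{-1} \;=\; (x_1)^{(l_{i_p})_1 - (l_{i_{p+1}})_1} \circ (x_2)^{(l_{i_p})_2 - (l_{i_{p+1}})_2} \circ \cdots \circ (x_m)^{(l_{i_p})_m - (l_{i_{p+1}})_m}.$$
Since $l_{i_p}$ and $l_{i_{p+1}}$ are adjacent in the cubical graph, exactly one exponent is nonzero, and that exponent equals $\pm 1$. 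This collapses $a_{i_p} \circ a_{i_{p+1}}^{-1}$ to $(x_{j_p})^{\pm 1}$, which is a single SQTM move.

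Putting these pieces together gives $t = z_{m+i_1} \circ (x_{j_1})^{\pm 1} \circ z_{m+i_2} \circ (x_{j_2})^{\pm 1} \circ \cdots \circ (x_{j_{n-1}})^{\pm 1} \circ z_{m+i_n}$, an expression in $n + (n-1) = 2n - 1 = k$ SQTM moves. Inverting this expression and left-multiplying onto $t$ yields the identity, so $t$ can be reversed by $k$ SQTM moves, proving $(t,k)$ is a ``yes'' instance of Group SQTM Rubik's Cube.

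The main conceptual difference from the square case, and the one spot that needs care, is that quarter turns are not involutions: in the square case the authors could absorb the $\pm 1$ exponent since $x_j = x_j^{-1}$, but here $(x_{j_p})^{+1}$ and $(x_{j_p})^{-1}$ are genuinely different moves. Fortunately both are single SQTM moves (a clockwise or counterclockwise quarter turn of the same slice), so the move count is unaffected. The only other subtle point is justifying that the various $x_u$ generators commute with each other even in the cube setting, which follows because parallel slice rotations either act on disjoint cubies or are powers of the same rotation; this is what makes the coordinate-wise exponent merging in the second step legitimate. I do not expect either issue to require more than a sentence of justification, so the proof should be essentially a direct translation of Theorem~\ref{thm:square_first_direction}.
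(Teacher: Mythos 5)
Your proposal is correct and follows the paper's own proof essentially verbatim: the same commuting-and-regrouping of the $b_{i_p}$, the same cancellation of the outer $a$-factors via $i_1=1$ and $l_n=00\cdots0$, and the same collapse of $a_{i_p}\circ a_{i_{p+1}}^{-1}$ to $(x_{j_p})^{\pm1}$. You also correctly flag the one genuine difference from the square case—that $(x_j)^{+1}\ne(x_j)^{-1}$ here but both are single SQTM moves—which is exactly how the paper handles it.
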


\begin{proof}
Suppose $l_1, \ldots, l_n$ is a ``yes'' instance to the Promise Cubical Hamiltonian Path problem. Let $m$ be the length of $l_i$ and note that $l_n = 00\ldots0$ by the promise of the Promise Cubical Hamiltonian Path problem. Furthermore, since $l_1, \ldots, l_n$ is a ``yes'' instance to the Promise Cubical Hamiltonian Path problem, there exists an ordering of these bitstrings $l_{i_1}, l_{i_2}, \ldots, l_{i_n}$ such that each consecutive pair of bitstrings is at Hamming distance one, $i_1 = 1$, and $i_n = n$ (with the final two conditions coming from the promise).

By Lemma~\ref{lemma:cube_b_i_commute}, we know that $t = a_1 \circ b_1 \circ b_2 \circ\cdots \circ b_n$ can be rewritten as 
$$t = a_1 \circ b_{i_1} \circ b_{i_2} \circ\cdots \circ b_{i_n}.$$ 
Using the definition of $b_i$, we can further rewrite this as 
$$t = a_1 \circ ((a_{i_1})^{-1} \circ z_{m+i_1} \circ a_{i_1}) \circ ((a_{i_2})^{-1} \circ z_{m+i_2} \circ a_{i_2}) \circ\cdots \circ ((a_{i_n})^{-1} \circ z_{m+i_n} \circ a_{i_n}),$$
or as 
$$t = (a_1 \circ (a_{i_1})^{-1}) \circ z_{m+i_1} \circ (a_{i_1} \circ (a_{i_2})^{-1}) \circ z_{m+i_2} \circ (a_{i_2} \circ (a_{i_3})^{-1}) \circ\cdots \circ (a_{i_{n-1}} \circ (a_{i_n})^{-1}) \circ z_{m+i_n} \circ (a_{i_n}).$$

We know that $i_1 = 1$, and therefore that $a_1 \circ (a_{i_1})^{-1} = a_1 \circ (a_1)^{-1} = 1$ is the identity element. Similarly, we know that $i_n = n$ and therefore that $a_{i_n} = a_n = (x_1)^{(l_n)_1}\circ(x_2)^{(l_n)_2}\circ\cdots\circ(x_m)^{(l_n)_m} =  (x_1)^{0}\circ(x_2)^{0}\circ\cdots\circ(x_m)^{0} = 1$ is also the identity.

Thus we see that $$t = z_{m+i_1} \circ (a_{i_1} \circ (a_{i_2})^{-1}) \circ z_{m+i_2} \circ (a_{i_2} \circ (a_{i_3})^{-1}) \circ\cdots \circ (a_{i_{n-1}} \circ (a_{i_n})^{-1}) \circ z_{m+i_n}.$$

Consider the transformation $a_{i_p} \circ (a_{i_{p+1}})^{-1}$. This transformation can be written as $$a_{i_p} \circ (a_{i_{p+1}})^{-1} = (x_1)^{(l_{i_p})_1} \circ (x_2)^{(l_{i_p})_2} \circ\cdots \circ (x_m)^{(l_{i_p})_m} \circ (x_1)^{-(l_{i_{p+1}})_1} \circ (x_2)^{-(l_{i_{p+1}})_2} \circ\cdots \circ (x_m)^{-(l_{i_{p+1}})_m}.$$

Because $x_u$ always commutes with $x_v$, we can rewrite this as $$a_{i_p} \circ (a_{i_{p+1}})^{-1} = (x_1)^{(l_{i_p})_1 - (l_{i_{p+1}})_1} \circ (x_2)^{(l_{i_p})_2-(l_{i_{p+1}})_2} \circ\cdots \circ (x_m)^{(l_{i_p})_m-(l_{i_{p+1}})_m}.$$

Since $l_{i_p}$ differs from $l_{i_{p+1}}$ in only one position, call it $j_p$, we see that $(l_{i_p})_j-(l_{i_{p+1}})_j$ is zero unless $j = j_p$, and is $\pm 1$ in that final case. This is sufficient to show that $a_{i_p} \circ (a_{i_{p+1}})^{-1} = (x_{j_p})^{s_p}$ where $s_p = \pm 1$.

Thus we see that $$t = z_{m+i_1} \circ (x_{j_1})^{s_1} \circ z_{m+i_2} \circ (x_{j_2})^{s_2} \circ\cdots \circ (x_{j_{n-1}})^{s_{n-1}} \circ z_{m+i_n},$$
or (by left multiplying) that 
$$(z_{m+i_n})^{-1} \circ (x_{j_{n-1}})^{-s_{n-1}} \circ\cdots \circ (x_{j_2})^{-s_2} \circ (z_{m+i_2})^{-1} \circ (x_{j_1})^{-s_1} \circ (z_{m+i_1})^{-1} \circ t = 1.$$ 
We see that $t$ can be reversed by $k = 2n-1$ terms of the form $(z_i)^{-1}$, $x_j$, and $(x_j)^{-1}$, which are all SQTM moves. In other words, $(t, k)$ is a ``yes'' instance to the Group SQTM Rubik's Cube problem.
\end{proof}

\begin{corollary}
If $l_1, \ldots, l_n$ is a ``yes'' instance to the Promise Cubical Hamiltonian Path problem, then $(C_t, k)$ is a ``yes'' instance to the STM/SQTM Rubik's Cube problem and $(t, k)$ is a ``yes'' instance to the Group STM/SQTM Rubik's Cube problem.
\label{corollary:cube_first_direction}
\end{corollary}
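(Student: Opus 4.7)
The plan is to chain together the three earlier results—Theorem~\ref{thm:cube_first_direction}, Lemma~\ref{lemma:cube_types_1}, and Lemma~\ref{lemma:cube_types_2}—rather than redo any of the combinatorial work. Concretely, the proposal is to first invoke Theorem~\ref{thm:cube_first_direction} to deduce that $(t,k)$ is a ``yes'' instance to the Group SQTM Rubik's Cube problem; since every SQTM move is also an STM move, Lemma~\ref{lemma:cube_types_2} immediately upgrades this to the statement that $(t,k)$ is a ``yes'' instance to the Group STM Rubik's Cube problem as well. This handles both Group claims of the corollary.

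For the non-Group part, I would then apply Lemma~\ref{lemma:cube_types_1}, which was proved once in general form and states that any ``yes'' instance $(t,k)$ of the Group STM/SQTM Rubik's Cube problem yields a ``yes'' instance $(t(C_0),k) = (C_t,k)$ of the corresponding non-Group STM/SQTM Rubik's Cube problem. Applying this in both the STM and SQTM move models finishes the non-Group claim for both metrics.

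There is no real obstacle here: all of the genuine content (the commutation of the $b_i$'s, the cancellation telescope $a_{i_p}\circ(a_{i_{p+1}})^{-1}=(x_{j_p})^{\pm 1}$, the reduction of $t$ to $2n-1$ moves) was already done inside Theorem~\ref{thm:cube_first_direction}, which is phrased for SQTM and hence is the strongest such result. The only minor thing to be careful about is making sure the quoted ``SQTM $\Rightarrow$ STM'' and ``Group $\Rightarrow$ non-Group'' lemmas are applied in the right order and cover all four combinations (STM/SQTM $\times$ Group/non-Group) required by the corollary's wording. A two-sentence proof, essentially of the form ``Combine Theorem~\ref{thm:cube_first_direction} with Lemmas~\ref{lemma:cube_types_1} and~\ref{lemma:cube_types_2},'' should therefore suffice.
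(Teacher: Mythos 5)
Your proposal is correct and is essentially the paper's own proof: the paper dispatches this corollary with the one-line statement that it ``follows immediately from Theorem~\ref{thm:cube_first_direction} and Lemmas~\ref{lemma:cube_types_1} and~\ref{lemma:cube_types_2}.'' Your explicit ordering (SQTM Group first, then upgrade to STM via Lemma~\ref{lemma:cube_types_2}, then drop to the non-Group versions via Lemma~\ref{lemma:cube_types_1}) is exactly the intended chain of implications.
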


\begin{proof}
This follows immediately from Theorem~\ref{thm:cube_first_direction} and Lemmas~\ref{lemma:cube_types_1} and~\ref{lemma:cube_types_2}.
\end{proof}

\subsection{\boldmath Coloring of $C_t$}

As in the Rubik's Square case, it will be helpful for the second direction of the proof to know the coloring of the Cube's configuration. As before, we define $b = b_1 \circ \cdots \circ b_n$ (so that $t = a_1 \circ b$) and determine the colors of the stickers in configuration $C_b = b(C_0)$.

Consider the example instance of Promise Cubical Hamiltonian Path with $n = 5$ and $m = 3$ introduced in the Rubik's Square section and reproduced below:

\begin{align*}
l_1 &= 011 \\
l_2 &= 110 \\
l_3 &= 111 \\
l_4 &= 100 \\
l_5 &= 000
\end{align*}

For this example instance, the Rubik's Cube configuration produced by the reduction is an $s \times s \times s$ Rubik's Cube with $s = 2m+6n = 36$. Furthermore, the coloring of the stickers in $C_b$ for this example is shown in Figure~\ref{fig:cube_example_2}. Note that the $n\times m$ grid of bits comprising $l_1, \ldots, l_n$ is actually directly encoded in the coloring of each face.

\begin{figure}[h]
\centering
\includegraphics[width=\textwidth]{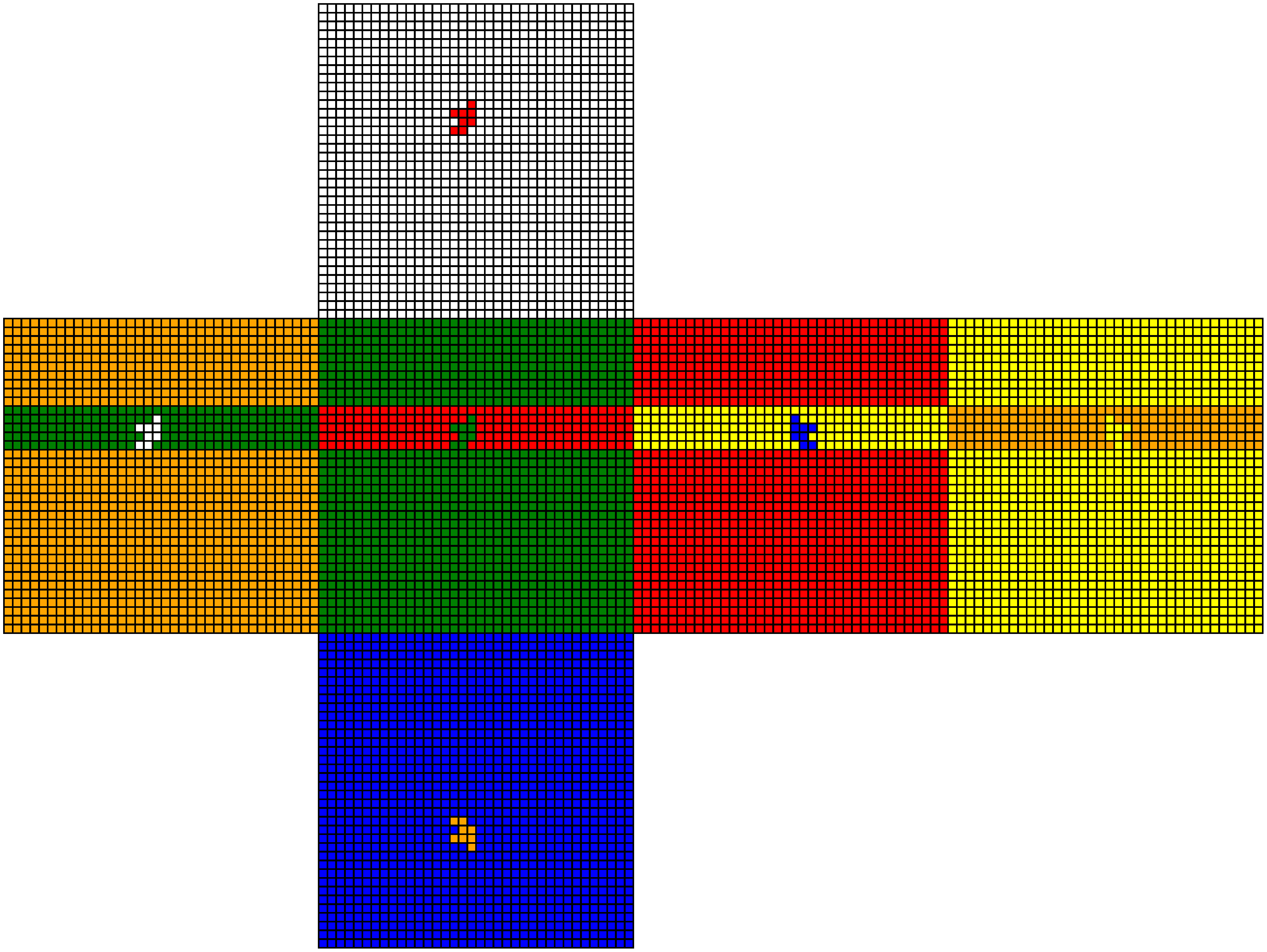}
\caption{The faces of $C_b$ for the example input $l_1, \ldots, l_n$. In this figure, the top and bottom faces are the $+z$ and $-z$ faces, while the faces in the vertical center of the figure are the $+x$, $+y$, $-x$, and $-y$ faces from left to right.}
\label{fig:cube_example_2}
\end{figure}

In this section, we prove the following useful theorem, which formalizes the pattern of colors from the example (Figure~\ref{fig:cube_example_2}):

\begin{restatable}{theorem}{cubecoloringthm}
\label{thm:cube_coloring}
In $C_b$, the stickers have the following coloring:
\begin{enumerate}
\item[$+z$:] The stickers on the $+z$ face with $(x, y)$ coordinates $(j, -(m+i))$ where $i \in \{1, \ldots, n\}$ and the $j$th bit of $l_i$ is one are all red. All other stickers are white.
\item[$-z$:] The stickers on the $-z$ face with $(x, y)$ coordinates $(j, -(m+i))$ where $i \in \{1, \ldots, n\}$ and the $j$th bit of $l_i$ is one are all orange. All other stickers are blue. 
\item[$+y$:] The stickers on the $+y$ face with $(x, z)$ coordinates $(j, (m+i))$ where $i \in \{1, \ldots, n\}$ and either $l_i$ doesn't have a $j$th bit (i.e. $j < 0$ or $j > m$) or the $j$th bit of $l_i$ is zero are all red. All other stickers are green.
\item[$-y$:] The stickers on the $-y$ face with $(x, z)$ coordinates $(j, (m+i))$ where $i \in \{1, \ldots, n\}$ and either $l_i$ doesn't have a $j$th bit (i.e. $j < 0$ or $j > m$) or the $j$th bit of $l_i$ is zero are all orange. All other stickers are yellow.
\item[$+x$:] The stickers on the $+x$ face with $(y, z)$ coordinates $(-j, (m+i))$ where $i \in \{1, \ldots, n\}$ and the $j$th bit of $l_i$ is one are all white. All other stickers with $z$ coordinate in $\{1, \ldots, n\}$ are green. All other stickers are orange.
\item[$-x$:] The stickers on the $-x$ face with $(y, z)$ coordinates $(-j, (m+i))$ where $i \in \{1, \ldots, n\}$ and the $j$th bit of $l_i$ is one are all blue. All other stickers with $z$ coordinate in $\{1, \ldots, n\}$ are yellow. All other stickers are red.
\end{enumerate}
\end{restatable}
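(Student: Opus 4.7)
The plan is to follow the template of the Rubik's Square coloring proof (Theorem~\ref{thm:square_coloring}). First I would establish a cube analog of Lemma~\ref{lemma:square_b_i_effect} that precisely describes the permutation induced by each individual $b_i$. Then, using the disjointness of the supports of the $b_i$ (essentially the content of Lemma~\ref{lemma:cube_b_i_commute}), the coloring of $C_b = b(C_0)$ can be computed position by position: for a sticker position $p$ lying in no support, the color is simply the $C_0$ color at $p$, while for $p$ in the support of the unique $b_i$ affecting it, the color is the $C_0$ color at $b_i^{-1}(p)$.

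For the per-$b_i$ analysis, I would trace each sticker through the three stages of $b_i = (a_i)^{-1} \circ z_{m+i} \circ a_i$. The key observation, analogous to the square case, is that $a_i$ only moves cubies with $x$-coordinate $j \in \{1, \ldots, m\}$ satisfying $(l_i)_j = 1$, and, in view of the choice $s = 6n+2m$ which makes $|m+i| < s/2$, one can verify that no visible sticker lies on a cubie moved by both $a_i$ and $(a_i)^{-1}$. This yields a clean orbit structure: a cubie untouched by $a_i$ or $(a_i)^{-1}$ in the $z = m+i$ slice is rotated purely by $z_{m+i}$ within that slice, while a cubie touched by $a_i$ (respectively $(a_i)^{-1}$) is shuttled from one of the $\pm z$ faces into the $z = m+i$ slice (respectively out of it) by a single $x_j$-turn.

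Once the orbit structure of each $b_i$ is in hand, the six items in the theorem statement follow by case analysis over the six faces. The $\pm z$ faces only receive foreign stickers from the $\pm x$ faces of $C_0$, giving red (from the $-x$ face) or orange (from the $+x$ face) exactly at positions $(j, -(m+i))$ with $(l_i)_j = 1$. The $\pm y$ faces receive $\pm x$-face stickers in their $z = m+i$ strips exactly at positions where the relevant bit of $l_i$ is zero or out of range, and retain their native green or yellow color otherwise within the strip. The $\pm x$ faces receive, along their $z = m+i$ strips, either $\pm z$-face stickers (white or blue, where $(l_i)_j = 1$) or $\pm y$-face stickers (green or yellow, where $(l_i)_j = 0$ or $j$ is out of range), depending on the relevant bit of $l_i$.

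The main obstacle I foresee is the careful bookkeeping of coordinate transformations through the individual $x$-turns composing $a_i$. Each $x_j$ is a clockwise quarter-turn viewed from $+x$, cycling the four faces via $+z \to +y \to -z \to -y \to +z$ and introducing sign changes that must be tracked in the $(x, y)$, $(y, z)$, and $(x, z)$ coordinates appearing in the statement. To manage this cleanly, I would exploit the fact that the $x$-slices with $(l_i)_j = 1$ are indexed by disjoint values of $j$, so the individual $x_j$-turns in $a_i$ act on disjoint column-strips of the faces and can be analyzed independently---to determine the orbit of any single sticker, one only ever needs to follow it through at most one $x_j$-turn in $a_i$ and at most one $x_{j'}^{-1}$-turn in $(a_i)^{-1}$.
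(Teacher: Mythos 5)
Your proposal matches the paper's proof essentially step for step: the paper establishes three lemmas (one per pair of opposite faces) describing the permutation induced by each individual $b_i$ by tracing stickers through the three stages of $(a_i)^{-1} \circ z_{m+i} \circ a_i$, and then combines them using the disjointness of the supports from Lemma~\ref{lemma:cube_b_i_commute}, exactly as you outline, arriving at the same face-by-face accounting of which foreign stickers land where. The only minor difference is the justification that no relevant sticker is moved by both $a_i$ and $(a_i)^{-1}$: the paper's mechanism is that any sticker moved by $a_i$ and then by $z_{m+i}$ lands on a $\pm x$ face, which the non-face $x$-slice rotations of $(a_i)^{-1}$ cannot reach, rather than an appeal to the slice-index bound.
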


The proof of this theorem is involved and uninsightful. In addition, no other result from this section will be used in the rest of this paper. As a result, the reader should feel free to skip the remainder of this section.

To formally derive the coloring of configuration $C_b$, we need to have a formal description of the effect of transformation $b_i$. For example, Figure~\ref{fig:cube_example_1} shows the $+x$, $+y$, and $+z$ faces of a Rubik's Cube in configurations $C_0$, $a_2(C_0)$, $(z_{m+2} \circ a_2)(C_0)$, and $b_2(C_0) = ((a_2)^{-1} \circ z_{m+2} \circ a_2)(C_0)$ where $a_2$ and $z_{m+2} = z_5$ are defined in terms of $l_2 = 110$ as in the reduction.

\begin{figure}[h]
\centering
\begin{subfigure}[b]{0.48\textwidth}
\includegraphics[width=\textwidth]{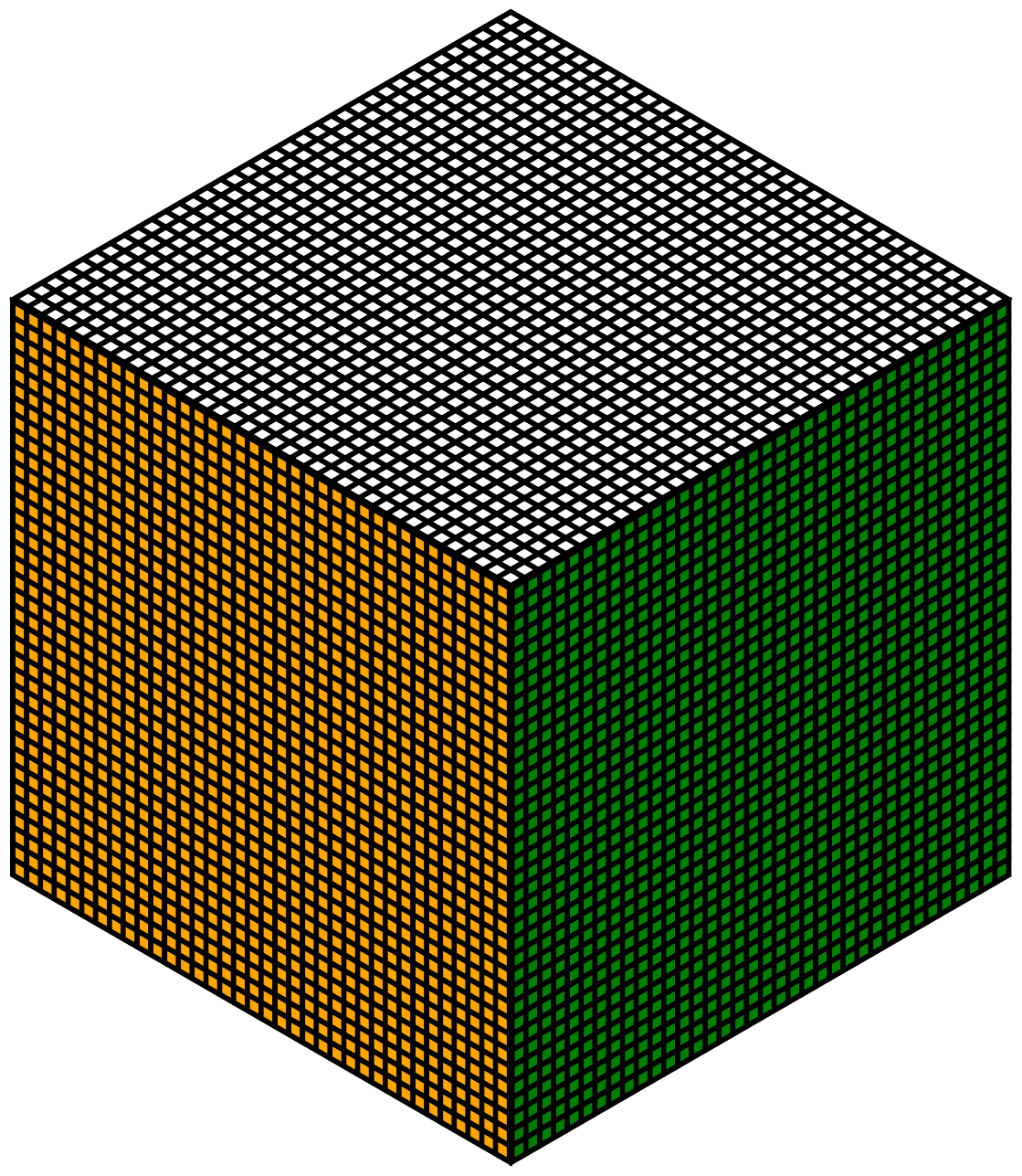}
\caption{\label{fig:cube_example_1a}}
\end{subfigure}
\hfill
\begin{subfigure}[b]{0.48\textwidth}
\includegraphics[width=\textwidth]{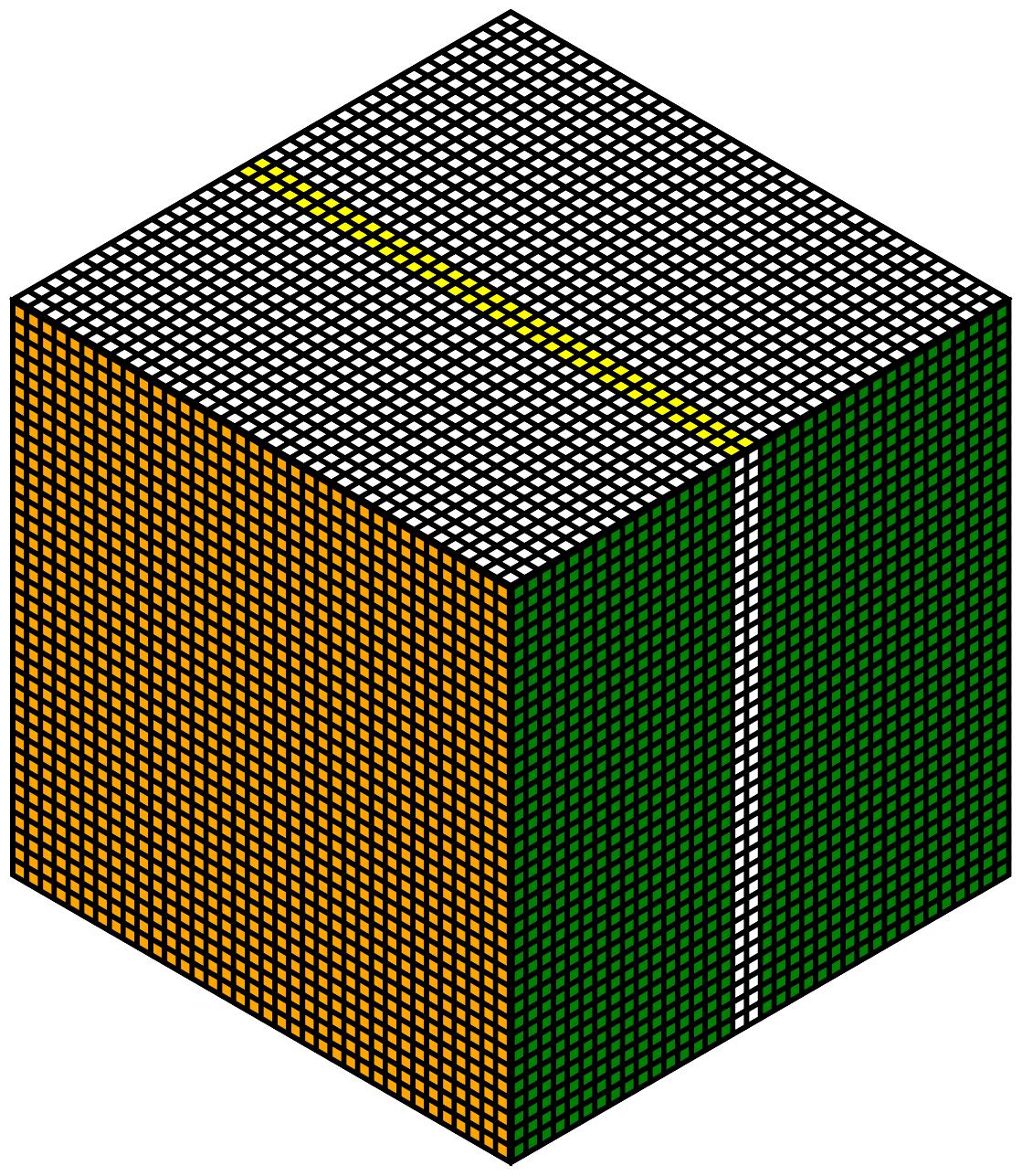}
\caption{\label{fig:cube_example_1b}}
\end{subfigure}

\begin{subfigure}[b]{0.48\textwidth}
\includegraphics[width=\textwidth]{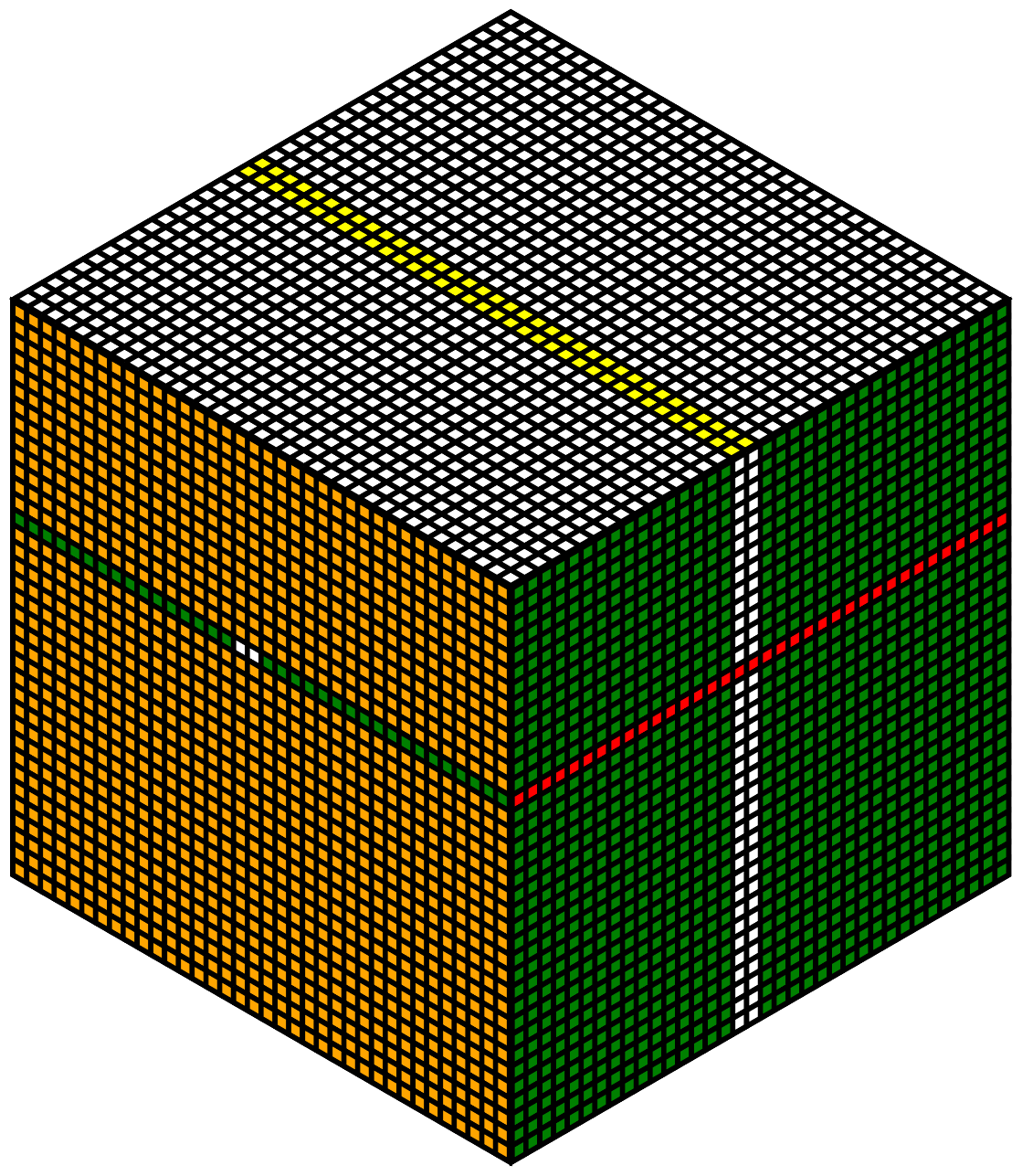}
\caption{\label{fig:cube_example_1c}}
\end{subfigure}
\hfill
\begin{subfigure}[b]{0.48\textwidth}
\includegraphics[width=\textwidth]{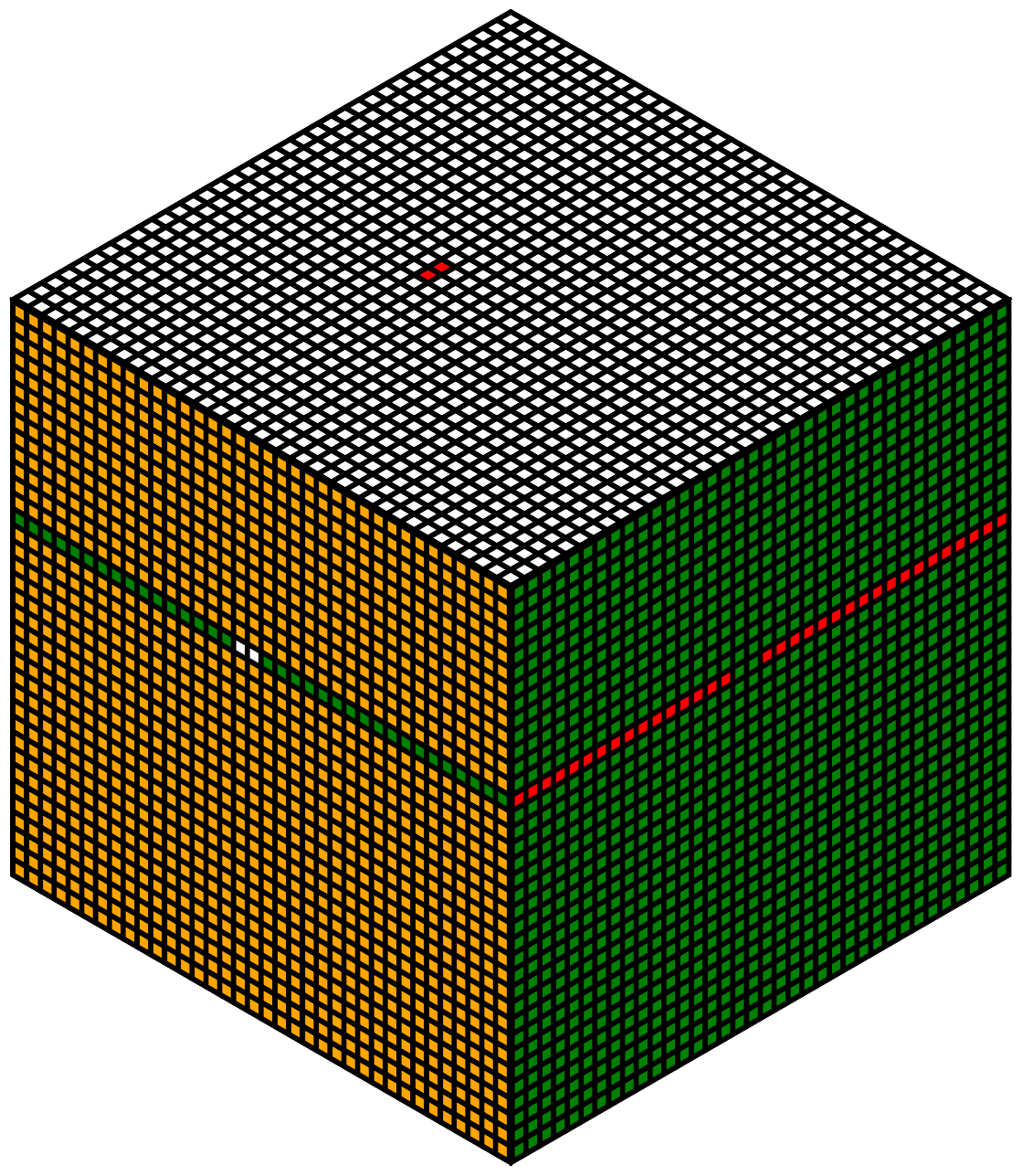}
\caption{\label{fig:cube_example_1d}}
\end{subfigure}
\caption{Applying $b_2$ to $C_0$ step by step.}
\label{fig:cube_example_1}
\end{figure}

The exact behavior of a Rubik's Cube due to $b_i$ is described by Lemmas~\ref{lemma:cube_b_i_effect_1} through~\ref{lemma:cube_b_i_effect_3}:

\begin{lemma}
\label{lemma:cube_b_i_effect_1}
Suppose $i \in \{1,\ldots,n\}$. Then the effect of $b_i$ on the stickers from the $\pm z$ faces of a Rubik's Cube can be described as follows:
\begin{itemize}
\item If the $j$th bit of $l_i$ is one, then the sticker starting on the $+z$ face with $(x, y)$ coordinates $(j, -(m+i))$ ends up on the $+x$ face with $(y, z)$ coordinates $(-j, (m+i))$.

\item If the $j$th bit of $l_i$ is one, then the sticker starting on the $-z$ face with $(x, y)$ coordinates $(j, -(m+i))$ ends up on the $-x$ face with $(y, z)$ coordinates $(-j, (m+i))$.

\item All other stickers on the $\pm z$ faces stay in place.
\end{itemize}
\end{lemma}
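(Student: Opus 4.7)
The plan is to track, for each sticker starting on the $\pm z$ face, its trajectory through the three stages of $b_i = (a_i)^{-1}\circ z_{m+i} \circ a_i$, exploiting that $a_i$ is a product of clockwise quarter turns $x_{j'}$ over the disjoint $x$-slices $\{j' : (l_i)_{j'} = 1\}$, so only the rotation $x_j$ can ever touch a sticker whose $x$-coordinate is $j$. The argument splits on (i) whether $j \in \{1,\ldots,m\}$ with $(l_i)_j = 1$, and (ii) whether the sticker's $y$-coordinate equals $-(m+i)$.

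First I would dispose of the stickers that never move. If $j \notin \{1,\ldots,m\}$, or $j \in \{1,\ldots,m\}$ but $(l_i)_j = 0$, then $a_i$ fixes the cubie at $(j, y_0, \pm s/2)$. Since $s/2 = 3n + m > m+i$, the slice $z = m+i$ does not contain this cubie, so $z_{m+i}$ also fixes it, and $(a_i)^{-1}$ fixes it for the same slice-disjointness reason. Hence the sticker stays put, as required by the third bullet of the statement.

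The main case is $j \in \{1,\ldots,m\}$ with $(l_i)_j = 1$. Using the paper's orientation convention (clockwise about $+x$, viewed from $+x$, sends the face normal $+z$ onto $+y$; clockwise about $+z$ sends $+y$ onto $+x$), the quarter turn $x_j$ moves the cubie at $(j, y_0, s/2)$ to $(j, s/2, -y_0)$ and carries its $+z$ sticker onto the cubie's new $+y$ face. This cubie lies in the $z_{m+i}$ slice iff $-y_0 = m+i$. In the subcase $y_0 = -(m+i)$, $z_{m+i}$ then sends the cubie to $(s/2, -j, m+i)$ with its sticker on the $+x$ face; since the new $x$-coordinate $s/2$ is not in $\{1,\ldots,m\}$, $(a_i)^{-1}$ fixes the cubie, and the sticker lands on the $+x$ face at $(y,z) = (-j, m+i)$, matching the first bullet. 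In the subcase $y_0 \neq -(m+i)$, the middle rotation $z_{m+i}$ is a no-op on this cubie, so $(a_i)^{-1} = x_j^{-1}$ exactly undoes $a_i$ and the sticker returns to its starting spot on the $+z$ face.

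The $-z$ case is fully symmetric: $x_j$ sends $(j, -(m+i), -s/2)$ to $(j, -s/2, m+i)$ with the $-z$ sticker now on the $-y$ face, then $z_{m+i}$ takes it to $(-s/2, -j, m+i)$ with the sticker on the $-x$ face at $(y,z) = (-j, m+i)$, and $(a_i)^{-1}$ fixes the result since again $-s/2 \notin \{1,\ldots,m\}$. The one real obstacle is bookkeeping: keeping the ``clockwise'' rotation conventions about $+x$ and $+z$ applied consistently to both cubie positions and sticker face normals, and making sure the case split (i)-(ii) is exhaustive; once those are pinned down, each subcase is a one-line $2\times 2$ rotation computation.
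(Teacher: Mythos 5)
Your proposal is correct and follows essentially the same approach as the paper: track each $\pm z$ sticker through the three stages of $b_i=(a_i)^{-1}\circ z_{m+i}\circ a_i$, using the fact that $a_i$ is a product of disjoint $x$-slice quarter turns so that only the slice $x=j$ with $(l_i)_j=1$ can move a given sticker, and observing that after $z_{m+i}$ the relocated sticker sits on a face slice ($x=\pm s/2$) untouched by $(a_i)^{-1}$. Your rotation conventions and coordinate computations agree with the paper's, and your case split is exhaustive, so no gaps.
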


\begin{proof}
As noted in the proof of Lemma~\ref{lemma:cube_b_i_commute}, a sticker is affected by $b_i = (a_i)^{-1} \circ z_{m+i} \circ a_i$ if and only if it is moved by the $z_{m+i}$ term.

Consider the stickers originally on the $+z$ face. $b_i$ starts with $a_i$, which rotates the $x$ slices with $x$ coordinates $j$ such that bit $j$ of $l_i$ is one. Therefore, the stickers on the $+z$ face with $x$ coordinates of this form are rotated to the $+y$ face, and all the other stickers are left in place. After that, the only stickers from the $+z$ face which are moved by the $z_{m+i}$ term of $b_i$ are the stickers which were on the $+y$ face with $z$ coordinate $(m+i)$ and $x$ coordinate $j$ such that bit $j$ of $l_i$ is one. In other words, the only stickers from the $+z$ face moved by the $z_{m+i}$ term, are those starting at $(x,y)$ coordinates $(j, -(m+i))$ where bit $j$ of $l_i$ is one. 

All other stickers starting on the $+z$ face are not affected by the $z_{m+i}$ term, and are therefore not moved by $b_i$. On the other hand, consider any sticker of this form: a sticker starting on the $+z$ face at $(x,y)$ coordinates $(j, -(m+i))$ where bit $j$ of $l_i$ is one. Such a sticker is moved by $a_i$ to $(x, z)$ coordinates $(j, (m+i))$ of face $+y$. It is then moved by $z_{m+i}$ to $(y, z)$ coordinates $(-j, (m+i))$ of face $+x$. Finally, $(a_i)^{-1}$ does not affect the sticker since it is on the $+x$ face at the time and $(a_i)^{-1}$ consists of rotations of $x$ slices.

Thus, if the $j$th bit of $l_i$ is one, then the sticker starting on the $+z$ face with $(x, y)$ coordinates $(j, -(m+i))$ ends up on the $+x$ face with $(y, z)$ coordinates $(-j, (m+i))$. All other stickers starting on the $+z$ face remain in place.

The exact same logic applies to the stickers originally on the $-z$ face, allowing us to conclude that the lemma statement holds, as desired.
\end{proof}

\begin{lemma}
\label{lemma:cube_b_i_effect_2}
Suppose $i \in \{1,\ldots,n\}$. Then the effect of $b_i$ on the stickers from the $\pm y$ faces of a Rubik's Cube can be described as follows:
\begin{itemize}
\item If the $j$th bit of $l_i$ does not exist (i.e. $j < 0$ or $j > m$) or if the $j$th bit of $l_i$ is zero, then the sticker starting on the $+y$ face with $(x, z)$ coordinates $(j, (m+i))$ ends up on the $+x$ face with $(y, z)$ coordinates $(-j, (m+i))$. 

\item If the $j$th bit of $l_i$ does not exist (i.e. $j < 0$ or $j > m$) or if the $j$th bit of $l_i$ is zero, then the sticker starting on the $-y$ face with $(x, z)$ coordinates $(j, (m+i))$ ends up on the $-x$ face with $(y, z)$ coordinates $(-j, (m+i))$. 

\item All other stickers on the $\pm y$ faces stay in place.
\end{itemize}
\end{lemma}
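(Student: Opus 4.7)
The plan is to mirror the proof of Lemma~\ref{lemma:cube_b_i_effect_1}, tracking each sticker originally on the $+y$ or $-y$ face through the composition $b_i = (a_i)^{-1}\circ z_{m+i}\circ a_i$, and case-splitting on whether its column coordinate $j$ is among the indices where $(l_i)_j=1$. The starting observation, carried over from the proof of Lemma~\ref{lemma:cube_b_i_commute}, is that $b_i$ moves a sticker if and only if the middle term $z_{m+i}$ moves it. So the work reduces to understanding, for each sticker initially on a $\pm y$ face, whether it is in the $z=m+i$ slice at the moment $z_{m+i}$ is applied.

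First, I would fix a sticker on the $+y$ face at position $(x,z)=(j,z_0)$ and apply $a_i$. Since $a_i$ is a product of clockwise $x$-slice turns whose slice indices $j'$ satisfy $(l_i)_{j'}=1$, the sticker is either (i) left in place by $a_i$ (exactly when $j$ is outside $\{1,\ldots,m\}$ or $(l_i)_j=0$), or (ii) rotated by the $x_j$ turn off the $+y$ face onto the $-z$ face, where its new $z$-coordinate becomes $-s/2\neq m+i$. In case (ii) the sticker is therefore not in the $z=m+i$ slice, so $z_{m+i}$ does not touch it, and $(a_i)^{-1}$ then returns it to its starting position; the net effect is the identity. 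In case (i) the sticker is still on $+y$ at $(j,z_0)$ when $z_{m+i}$ is applied, so it moves precisely when $z_0=m+i$; if $z_0\neq m+i$ it again stays in place, and if $z_0=m+i$ then a clockwise $z_{m+i}$ turn carries $+y$ to $+x$ and sends $(j,s/2,m+i)$ to $(s/2,-j,m+i)$, i.e.\ to the $+x$ face at $(y,z)=(-j,m+i)$. Finally, $(a_i)^{-1}$ is a product of $x$-slice turns of nonface slices (indices $j\in\{1,\ldots,m\}$, well away from $\pm s/2$), so it leaves the sticker in its destination on the $+x$ face. This establishes the first bullet and, together with the case analysis, shows that all remaining $+y$ stickers are fixed.

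For the second bullet I would repeat the argument verbatim for a sticker starting on the $-y$ face. The only changes are that a clockwise $x_j$ turn now sends $-y$ to $+z$ (still off the $z=m+i$ slice, so case (ii) is again a net identity), and a clockwise $z_{m+i}$ turn sends $-y$ to $-x$, so the surviving case lands the sticker on $-x$ at $(y,z)=(-j,m+i)$; the conclusion that $(a_i)^{-1}$ does nothing to a sticker already on a $\pm x$ face is identical.

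The main obstacle is really just bookkeeping: keeping the sign conventions for clockwise $x$- and $z$-turns (as fixed in the Notation subsection) consistent and verifying that a sticker displaced by $a_i$ off its home face always lands outside the $z=m+i$ slice, so that no unintended interaction with $z_{m+i}$ can occur. Once those two checks are in hand, the statement follows by direct composition, and no new ideas beyond those in Lemma~\ref{lemma:cube_b_i_effect_1} are required.
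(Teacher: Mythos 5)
Your proposal is correct and follows essentially the same route as the paper's proof: track each $\pm y$ sticker through $a_i$, then $z_{m+i}$, then $(a_i)^{-1}$, case-splitting on whether its column $j$ satisfies $(l_i)_j = 1$, and observe that a sticker displaced by $a_i$ leaves the $z = m+i$ slice while a sticker fixed by $a_i$ is moved by $z_{m+i}$ exactly when $z_0 = m+i$. The coordinate bookkeeping (including the landing faces $+x$ and $-x$ and the final coordinates $(-j, m+i)$) matches the paper's.
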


\begin{proof}
As noted in the proof of Lemma~\ref{lemma:cube_b_i_commute}, a sticker is affected by $b_i = (a_i)^{-1} \circ z_{m+i} \circ a_i$ if and only if it is moved by the $z_{m+i}$ term.

Consider the stickers originally on the $+y$ face. $b_i$ starts with $a_i$, which rotates the $x$ slices with $x$ coordinates $j$ such that bit $j$ of $l_i$ is one. Therefore, the stickers on the $+y$ face with $x$ coordinates of this form are rotated to the $-z$ face, and all the other stickers are left in place. After that, the only stickers from the $+y$ face which are moved by the $z_{m+i}$ term of $b_i$ are the stickers with $z$ coordinate $(m+i)$ which were not moved from the $+y$ face. In other words, the only stickers from the $+z$ face moved by the $z_{m+i}$ term, are those starting at $(x,y)$ coordinates $(j, -(m+i))$ where bit $j$ of $l_i$ either does not exist (i.e. $j < 0$ or $j > m$) or is zero. 

All other stickers starting on the $+y$ face are not affected by the $z_{m+i}$ term, and are therefore not moved by $b_i$. On the other hand, consider any sticker of this form: a sticker starting on the $+y$ face at $(x,z)$ coordinates $(j, (m+i))$ where bit $j$ of $l_i$ either does not exist or is zero. Such a sticker is not moved by $a_i$. It is then moved by $z_{m+i}$ to $(y, z)$ coordinates $(-j, (m+i))$ of face $+x$. Finally, $(a_i)^{-1}$ does not affect the sticker since it is on the $+x$ face at the time and $(a_i)^{-1}$ consists of rotations of $x$ slices.

Thus, if the $j$th bit of $l_i$ does not exist (i.e. $j < 0$ or $j > m$) or if the $j$th bit of $l_i$ is zero, then the sticker starting on the $+y$ face with $(x, z)$ coordinates $(j, (m+i))$ ends up on the $+x$ face with $(y, z)$ coordinates $(-j, (m+i))$.  All other stickers starting on the $+y$ face remain in place.

The exact same logic applies to the stickers originally on the $-y$ face, allowing us to conclude that the lemma statement holds, as desired.
\end{proof}

\begin{lemma}
\label{lemma:cube_b_i_effect_3}
Suppose $i \in \{1,\ldots,n\}$. Then the effect of $b_i$ on the stickers from the $\pm x$ faces of a Rubik's Cube can be described as follows:
\begin{itemize}
\item If the $j$th bit of $l_i$ is one, then the sticker starting on the $+x$ face with $(y, z)$ coordinates $(j, (m+i))$ ends up on the $-z$ face with $(x, y)$ coordinates $(j, -(m+i))$.

\item If the $j$th bit of $l_i$ does not exist (i.e. $j < 0$ or $j > m$) or if the $j$th bit of $l_i$ is zero, then the sticker starting on the $+x$ face with $(y, z)$ coordinates $(j, (m+i))$ ends up on the $-y$ face with $(x, z)$ coordinates $(j, (m+i))$. 

\item If the $j$th bit of $l_i$ is one, then the sticker starting on the $-x$ face with $(y, z)$ coordinates $(j, (m+i))$ ends up on the $+z$ face with $(x, y)$ coordinates $(j, -(m+i))$.

\item If the $j$th bit of $l_i$ does not exist (i.e. $j < 0$ or $j > m$) or if the $j$th bit of $l_i$ is zero, then the sticker starting on the $-x$ face with $(y, z)$ coordinates $(j, (m+i))$ ends up on the $+y$ face with $(x, z)$ coordinates $(j, (m+i))$. 

\item All other stickers on the $\pm x$ faces stay in place.
\end{itemize}
\end{lemma}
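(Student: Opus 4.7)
The plan is to follow exactly the template of Lemmas~\ref{lemma:cube_b_i_effect_1} and~\ref{lemma:cube_b_i_effect_2}: use the fact (established in the proof of Lemma~\ref{lemma:cube_b_i_commute}) that a sticker is affected by $b_i = (a_i)^{-1} \circ z_{m+i} \circ a_i$ if and only if it is moved by the $z_{m+i}$ term, and then carefully track each of the three composed transformations. The key preliminary observation is that the $\pm x$ face slices have index $\pm s/2 = \pm(3n+m)$, while $a_i$ consists only of rotations of $x$ slices with indices in $\{1,\ldots,m\}$; since $s/2 > m$, neither $a_i$ nor $(a_i)^{-1}$ moves any sticker that is currently on the $\pm x$ faces. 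Consequently, a sticker starting on the $\pm x$ face is moved by $b_i$ only if it already lies in the $z=m+i$ slice at the start, i.e.\ if its initial $z$-coordinate is $m+i$.

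Next I would consider a sticker starting on the $+x$ face at $(y,z)=(j,m+i)$. The opening $a_i$ leaves it alone, so $z_{m+i}$ (a clockwise rotation about the $z$-axis) carries it from the $+x$ face to the $-y$ face, landing at $(x,z)=(j,m+i)$. Now I would do a case split on $j$. If either $j\notin\{1,\ldots,m\}$ or $(l_i)_j = 0$, then $(a_i)^{-1}$ contains no factor rotating the $x=j$ slice, and the sticker stays on the $-y$ face at $(x,z)=(j,m+i)$, matching the second bullet of the lemma. If instead $j\in\{1,\ldots,m\}$ and $(l_i)_j=1$, then $(a_i)^{-1}$ contains $x_j^{-1}$, a counterclockwise $x_j$ rotation, which moves the sticker from the $-y$ face at $(x,z)=(j,m+i)$ to the $-z$ face at $(x,y)=(j,-(m+i))$, matching the first bullet. (Any other factors of $(a_i)^{-1}$ act on $x$ slices with index $\neq j$ and so do not affect this sticker.)

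The analysis for the $-x$ face is symmetric: a sticker starting there at $(y,z)=(j,m+i)$ is untouched by $a_i$, then carried by $z_{m+i}$ to the $+y$ face at $(x,z)=(j,m+i)$, and then, depending on whether $(l_i)_j$ exists and equals $1$, either left alone (giving the fourth bullet) or moved by $x_j^{-1}$ to the $+z$ face at $(x,y)=(j,-(m+i))$ (giving the third bullet). All other $\pm x$-face stickers have $z\neq m+i$ and so are untouched by $z_{m+i}$, hence by $b_i$, establishing the final bullet.

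I do not expect any real obstacle beyond bookkeeping: the structural logic is identical to the two preceding lemmas, and the only thing that needs care is the sign conventions for the clockwise rotations $z_{m+i}$ and $x_j$, together with the direction of $(a_i)^{-1}$, to make sure each sticker is sent to exactly the coordinates claimed in the statement. Writing out one fully worked trajectory for the $+x$ face and appealing to the symmetry of the $z_{m+i}$ rotation (which swaps $+x \leftrightarrow -y$ and $-x \leftrightarrow +y$) for the $-x$ face should suffice.
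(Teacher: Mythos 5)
Your proposal is correct and follows essentially the same argument as the paper's proof: both use the fact that a sticker is affected by $b_i$ iff it is moved by the $z_{m+i}$ term, note that $a_i$ cannot move stickers on the $\pm x$ faces, track the sticker through $z_{m+i}$ to the $\mp y$ face, and then split into cases according to whether $(a_i)^{-1}$ rotates the $x=j$ slice. The only difference is cosmetic (you make the sign conventions of the rotations slightly more explicit), so no further comment is needed.
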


\begin{proof}
As noted in the proof of Lemma~\ref{lemma:cube_b_i_commute}, a sticker is affected by $b_i = (a_i)^{-1} \circ z_{m+i} \circ a_i$ if and only if it is moved by the $z_{m+i}$ term.

Consider the stickers originally on the $+x$ face. $b_i$ starts with $a_i$, which affects none of the stickers on the $+x$ face. After that, the $z_{m+i}$ term moves exactly those stickers from the $+x$ face that had $z$ coordinate $(m+i)$. As a result, these stickers are all affected by $b_i$, and all others are not.

Consider a sticker starting on the $+x$ face with $(y, z)$ coordinates $(j, (m+i))$. This sticker is unaffected by $a_i$ and then moved to the $-y$ face by $z_{m+i}$. In particular, it is moved to $(x, z)$ coordinates $(j, (m+i))$. After that, there are two cases:

Case 1: If bit $j$ of $l_i$ does not exist (i.e. $j < 0$ or $j > m$) or if the $j$th bit of $l_i$ is zero, then the sticker is unaffected by $(a_i)^{-1}$. This shows that if the $j$th bit of $l_i$ does not exist or if the $j$th bit of $l_i$ is zero, then the sticker starting on the $+x$ face with $(y, z)$ coordinates $(j, (m+i))$ ends up on the $-y$ face with $(x, z)$ coordinates $(j, (m+i))$. 

Case 2: If bit $j$ of $l_i$ is one, then after being moved to the $-y$ face by $z_{m+i}$, the sticker in question is moved to the $-z$ face by $(a_i)^{-1}$. In particular, the sticker ends up at $(x, y)$ coordinates $(j, -(m+i))$. This shows that if the $j$th bit of $l_i$ is one, then the sticker starting on the $+x$ face with $(y, z)$ coordinates $(j, (m+i))$ ends up on the $-z$ face with $(x, y)$ coordinates $(j, -(m+i))$.

As previously mentioned, all stickers starting on the $+x$ face other than those addressed by the above cases stay in place due to $b_i$. Together with the statements shown in the two cases, this is exactly what we wished to show.

The same logic applies to the stickers originally on the $-x$ face, allowing us to conclude that the lemma statement holds, as desired.
\end{proof}

We can apply the above lemmas to figure out the effect of transformation $b_1 \circ b_2 \circ \cdots \circ b_n$ on configuration $C_0$. In particular, this allows us to learn the coloring of configuration $C_b$.

At this point, we can prove Theorem~\ref{thm:cube_coloring}, which is restated below for convenience:

\cubecoloringthm*

\begin{proof}
$C_b$ is obtained from $C_0$ by applying transformation $b_1 \circ b_2 \circ \cdots \circ b_n$. Each $b_i$ affects a disjoint set of stickers. Using this fact together with the description of the effect of one $b_i$, we can obtain the description of the coloring of $C_b$ given in the above theorem statement.

For example, consider the stickers that end up on the $+z$ face. According to Lemma~\ref{lemma:cube_b_i_effect_3}, if the $j$th bit of $l_i$ is one, then $b_i$ moves the sticker starting on the $-x$ face with $(y, z)$ coordinates $(-j, (m+i))$ to the $+z$ face with $(x, y)$ coordinates $(j, -(m+i))$. Since the $b_i$s each affect disjoint sets of stickers, the stickers on the $+z$ face with $(x, y)$ coordinates $(j, -(m+i))$ where $i \in \{1, \ldots, n\}$ and the $j$th bit of $l_i$ is one are all stickers that started on the $-x$ face. Since the $-x$ face is red in $C_0$, these stickers are all red. We know from Lemmas~\ref{lemma:cube_b_i_effect_1} through~\ref{lemma:cube_b_i_effect_3} that no stickers other than the ones that started there and those described above are moved to the $+z$ face by $b_i$. Therefore all other stickers on the $+z$ face started there. Since the $+z$ face is white in $C_0$, these stickers are all white. Putting this together, we obtain exactly the first bullet point of the theorem statement:

The stickers on the $+z$ face with $(x, y)$ coordinates $(j, -(m+i))$ where $i \in \{1, \ldots, n\}$ and the $j$th bit of $l_i$ is one are all red. All the other stickers are white.

The logic for the other five faces is exactly analogous, and is omitted here for brevity.
\end{proof}

This concludes the description of $C_b$ in terms of colors. The coloring of configuration $C_t$---the configuration that is actually obtained by applying the reduction to $l_1, \ldots, l_n$---can be obtained from the coloring of configuration $C_b$ by applying transformation $a_1$. This is shown for the previously given example in Figure~\ref{fig:cube_example_3}.

\begin{figure}[h]
\centering
\includegraphics[width=0.5\textwidth]{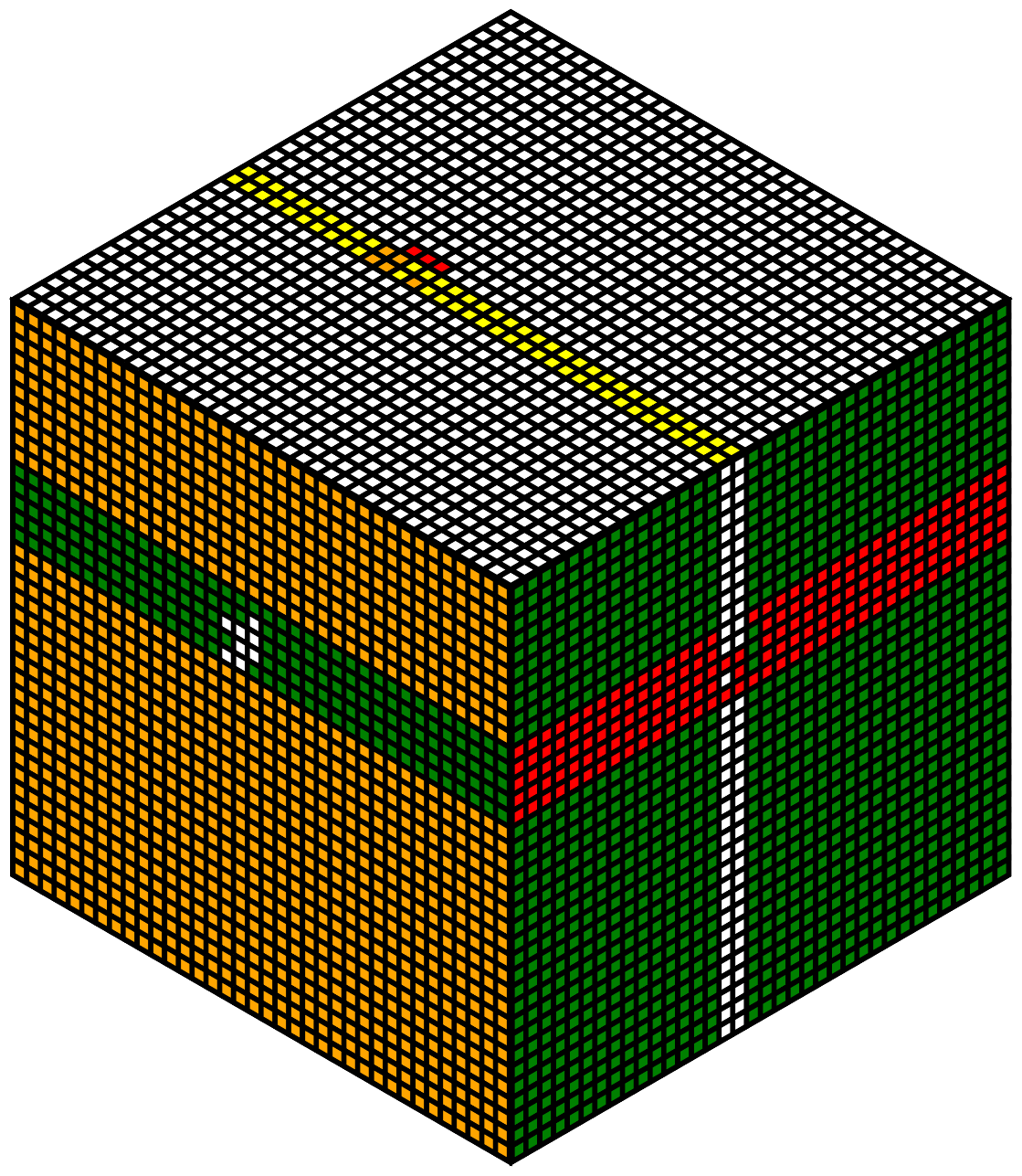}
\caption{The $+x$, $+y$, and $+z$ faces of $C_t$ for the example input $l_1, \ldots, l_n$.}
\label{fig:cube_example_3}
\end{figure}

\subsection{(Group) STM/SQTM Rubik's Cube solution $\to$ Promise Cubical Hamiltonian Path solution: proof outline}
\label{section:cube_second_direction}

We wish to prove the following:

\begin{theorem}
\label{thm:cube_second_direction}
If $(C_t, k)$ is a ``yes'' instance to the STM Rubik's Cube problem, then $l_1, \ldots, l_n$ is a ``yes'' instance to the Promise Cubical Hamiltonian Path problem.
\end{theorem}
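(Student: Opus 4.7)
The plan is to mirror the three-step argument of Section~\ref{section:square_second_direction}, adapting to the Cube. Fix a hypothetical solving sequence $m_1, \ldots, m_{k'}$ with $k' \le k = 2n-1$ STM moves, giving solved configuration $C' = (m_{k'} \circ \cdots \circ m_1)(C_t)$. First, I would locate an unused slice index: since the $2n$ indices $v \in \{m+n+1, \ldots, m+3n\}$ are untouched by $t$'s generators and there are only $k = 2n-1$ moves in the sequence, pigeonhole yields an index $u$ in this range such that no move rotates a slice of index $\pm u$ on any axis. The cubie at $(u,u,u)$ is then fixed throughout, forcing $C'$ to inherit $C_0$'s face-color assignment, and every cubie in a slice of index $\pm u$ remains within that slice, providing \emph{stationary observers} for the arguments to follow.

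Next I would establish analogs of the two Square lemmas. The first asserts that for each $i \in \{1, \ldots, n\}$ the solving sequence must include at least one index-$(m+i)$ $z$-move to cancel the clockwise quarter turn of $z_{m+i}$ contributed by $t$, while slices of all other indices $|v| \le m+n$ must have net rotations matching $t$'s contribution (trivial except for the $x_j$'s appearing in $a_1$). This is established by tracking specific cubies on the $+z$ face in the $y = \pm u$ strip: such cubies are only affected by $x$-slice moves and the lone $z_{m+i}$ factor of $t$, so their recovery in $C'$ pins down which index-$(m+i)$ $z$-moves occur. The second lemma defines $O \subseteq \{1, \ldots, n\}$ as the set of $i$ for which exactly one index-$(m+i)$ $z$-move occurs, and asserts that for distinct $i_1, i_2 \in O$ the number of index-$j$ $x$-slice moves (with $j \in \{1, \ldots, m\}$) between the two unique $z$-moves is at least the Hamming distance between $l_{i_1}$ and $l_{i_2}$. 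Its proof parallels the Square case: by Theorem~\ref{thm:cube_coloring}, the $+z$ face of $C_b$ encodes each $l_i$ as red/white stickers at positions $(j, -(m+i))$, and any bit disagreement between rows $i_1$ and $i_2$ must be resolved by an intervening index-$j$ $x$-slice move.

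The concluding counting argument partitions the $k'$ moves into: unique index-$(m+i)$ $z$-moves for $i \in O$ ($|O|$ of them); repeated such $z$-moves for $i \in \{1, \ldots, n\} \setminus O$ ($\ge 2$ each); index-$j$ $x$-moves with $j \in \{1, \ldots, m\}$ lying between consecutive unique $z$-moves ($\ge |O|-1$ by the Hamming lemma); and all others ($\ge 0$). Summing and comparing with $k' \le 2n-1$ should force $|O| = n$, no slack anywhere, and consecutive unique $z$-moves corresponding to Hamming-distance-one $l_i$'s, which is exactly a Hamiltonian path in the cubical graph.

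The main obstacle is making the counting tight. In the Square proof, the involutive nature of flips forces any $i \notin O$ to contribute at least three moves (an odd number other than one), yielding the crucial factor of $3$ that pins $|O| = n$. In STM, quarter, half, and three-quarter turns are all single moves, so a pair of $z$-moves on slices $\pm(m+i)$ can in principle achieve any needed net rotation mod $4$, and the na\"ive bound $|O| + 2(n-|O|) + (|O|-1) = 2n-1$ does not force $|O| = n$. Overcoming this will likely require either strengthening the Hamming-distance lemma to apply around repeated $z$-moves (so that each $i \notin O$ forces extra intervening $x$-moves beyond those already counted), or a more delicate orientation-tracking analysis of the two $z$-moves together with surrounding $x$-moves showing that each $i \notin O$ incurs at least one additional move somewhere in the sequence. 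Either route restores enough slack in the count to conclude $|O| = n$ and extract the desired Hamiltonian path.
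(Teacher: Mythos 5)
You have the right skeleton (it is the intended generalization of the Square argument), and you have correctly diagnosed the central difficulty: under STM, an index $i \notin O$ can be handled by exactly two $z$-moves of slice $m+i$ (a clockwise turn plus a flip nets the required counterclockwise quarter turn), so the count $|O| + 2(n-|O|) + (|O|-1) = 2n-1$ is tight for \emph{every} split of $\{1,\ldots,n\}$ into ``one-move'' and ``two-move'' indices, and the counting argument alone cannot force $|O| = n$. But diagnosing the obstacle is not the same as overcoming it, and your proposal stops at ``this will likely require either strengthening the Hamming-distance lemma \ldots or a more delicate orientation-tracking analysis.'' That is precisely the part of the proof that carries the new content in the Cube case, and it is missing. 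The paper devotes a separate final step to proving that the two-move set $T$ is empty: assuming $i_1 \in T$, it locates an $O$-move $m_\beta$ of some slice $m+i_2$ that is not separated from the second index-$(m+i_1)$ move $m_\alpha$ by any index-$j$ move with $j \in \{1,\ldots,m\}$ (such an $m_\beta$ must exist, since otherwise two of the $|O|-1$ gap moves would fall in one gap), then tracks \emph{pairs} of stickers sharing a column — formalized as $(m+i_1, m+i_2, j)$-paired stickers, which can only be separated by an index-$(m+i_1)$ or index-$(m+i_2)$ move hitting exactly one of them — to conclude that rows $i_1$ and $i_2$ of the $+x$ face of $C_b$ would have to carry identical color schemes, contradicting the distinctness of $l_{i_1}$ and $l_{i_2}$ encoded there by Theorem~\ref{thm:cube_coloring}. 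Without this (or an equivalent) argument, the reduction is not sound: a hypothetical $2n-1$-move solution with $|O| < n$ would yield only a Hamiltonian path on the subgraph induced by $\{l_i : i \in O\}$.

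A secondary gap: your counting silently assumes rows stay rows. Face moves of the $\pm x, \pm y$ faces can rotate a row of stickers into a column, which breaks your Hamming-distance lemma as stated (a bit disagreement between rows $i_1$ and $i_2$ need not be resolved by an index-$j$ $x$-slice move if a face turn has realigned the relevant stickers). The paper deals with this by proving the weaker statement that each gap between consecutive $O$-moves contains either one $J$-move or \emph{two} vertical face moves (two, because the face rotations must return to a consistent state at the times of the $O$-moves), folding $\tfrac{1}{2}c_{\mathrm{vertical}}$ into the count to force $c_{\mathrm{vertical}} = 0$, and only then recovering the exact Hamming-distance-one condition in a subsequent step. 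Your ``all others $\ge 0$'' bucket does not exclude face moves from sabotaging the gap lemma, so even the $|O|-1$ lower bound on intervening $x$-moves is not established as written. Relatedly, you also need to rule out that an index-$(m+i)$ move rotates the slice with index $-(m+i)$ rather than $+(m+i)$; the paper ties this to the face-rotation state as well.
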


By Lemmas~\ref{lemma:cube_types_1} and~\ref{lemma:cube_types_2}, this will immediately also imply the following corollary:

\begin{corollary}
If $(t, k)$ is a ``yes'' instance to the Group STM/SQTM Rubik's Cube problem or $(C_t, k)$ is a ``yes'' instance to the STM/SQTM Rubik's Cube problem, then $l_1, \ldots, l_n$ is a ``yes'' instance to the Promise Cubical Hamiltonian Path problem.
\label{corollary:cube_second_direction}
\end{corollary}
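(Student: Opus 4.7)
The plan is to adapt the proof of the second direction for the Rubik's Square (Section~\ref{section:rubiks_square}) to the 3D setting, following its three-stage structure: parity/count, Hamming-distance lemma, and counting bound. Given any solving sequence $m_1, \ldots, m_{k'}$ with $k' \leq k = 2n-1$ for $C_t$, I first apply a pigeonhole argument on the $2n$ candidate indices $\{m+n+1, \ldots, 3n+m\}$: since at most $2n-1$ slice indices can be touched by the sequence across all three directions, some index $u$ in this range has no move at $\pm u$ in any direction. Because $t$ uses only indices in $\{1, \ldots, m+n\}$, cubies whose coordinates are all drawn from $\{\pm u, \pm s/2\}$ are untouched by $t$ and in the sequence can only be moved by face rotations at $\pm s/2$; since face rotations preserve the color set of each face, this forces $C'$ to have the same face colors as $C_0$ (top white, bottom blue, $+x$ orange, and so on).

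Now the three stages are ported. First, for each $i \in \{1, \ldots, n\}$, argue that the sequence must contain at least one move at $z = \pm(m+i)$: track the cubie at position $(u, s/2, m+i)$, which by Theorem~\ref{thm:cube_coloring} has a red sticker on its $+y$ face in $C_b$ (since the sticker's 2D coordinates $(u, m+i)$ satisfy $u > m$), and hence also in $C_t$ since $a_1$ does not touch this cubie. The cubie's coordinates throughout any move sequence have absolute-value multiset $\{u, s/2, m+i\}$, so only moves at $\pm u$, $\pm s/2$, or $\pm(m+i)$ can affect it; since $\pm u$ moves are absent and $\pm s/2$ face moves alone keep the cubie on the $+y$ face with its red sticker visible, at least one $\pm(m+i)$-slice move is required. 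Let $O \subseteq \{1, \ldots, n\}$ be the set of indices with exactly one such move; for $i \notin O$ there are at least two. Second, establish the Hamming-distance lemma: for $i_1 \neq i_2$ in $O$, at least $d(l_{i_1}, l_{i_2})$ moves at $x$-slice indices in $\{1, \ldots, m\}$ appear between the unique $\pm(m+i_1)$ and $\pm(m+i_2)$ moves in the sequence. The proof mirrors the Square case's ``useful fact'': for each bit $j$ on which $l_{i_1}$ and $l_{i_2}$ differ, a pair of symmetric stickers in the $z = (m+i_1)$ and $z = (m+i_2)$ slices have different colors in $C_t$ by Theorem~\ref{thm:cube_coloring}, and without an intervening $x_{\pm j}$ move the two cubies carrying them would be transformed identically, leaving the color discrepancy in $C'$ and contradicting monochromaticity.

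Finally, the counting argument mirrors the Square case exactly: the sequence must contain $|O|$ moves at $\pm(m+i)$ for $i \in O$, at least $2(n - |O|)$ moves at $\pm(m+i)$ for $i \notin O$, at least $|O| - 1$ moves at $x$-slice indices in $\{1, \ldots, m\}$ (with equality only if consecutive $l_i$'s in the move order have Hamming distance exactly $1$), and at least $0$ moves of all other types. Summing gives at least $2n - 1 = k$, so equality is forced throughout, yielding $|O| = n$ and a Hamiltonian path in the cubical graph specified by $l_1, \ldots, l_n$, with $l_1$ and $l_n$ as endpoints by the promise. I expect the Hamming-distance lemma to be the main obstacle: unlike the Rubik's Square, where row flips are involutions, STM $z$-slice moves can be $90^\circ$, $180^\circ$, or $270^\circ$ rotations, and the sequence may contain interleaved $\pm s/2$ face moves that can shuttle the tracked cubies between slices. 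Verifying that every non-$x_{\pm j}$ move acts on the pair of symmetric stickers ``in parallel''---either moving them together or swapping them as a pair---requires a careful case analysis of each move type before the parallel-transport argument can be completed.
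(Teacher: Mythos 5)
Your overall architecture follows the paper's, but the counting argument at the end has a fatal arithmetic gap, and it is exactly the gap that forces the paper to do substantially more work in the cube case than in the square case. In the Rubik's Square proof, each row index $i \notin O$ contributes at least \emph{three} moves, because row flips are involutions and a parity argument forces an odd number of index-$i$ moves; the total then comes to $3n - 1 - |O| = k + (n - |O|)$, and $k' \le k$ forces $|O| = n$. In the cube, as you note, there is no parity constraint, so you can only charge \emph{two} moves to each $i \notin O$. But then your sum is $|O| + 2(n-|O|) + (|O|-1) = 2n-1 = k$ \emph{identically in $|O|$}: equality holds for every value of $|O|$, so ``equality is forced throughout'' does not yield $|O| = n$. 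A hypothetical solution in which some indices receive exactly two index-$(m+i)$ moves each is not excluded by counting. The paper's proof partitions $\{1,\ldots,n\}$ into $Z, O, T, M$ (zero, one, two, at-least-three moves); the counting argument kills $M$ and the vertical face moves but leaves $T$ alive, and an entire additional step (Step 5) is needed to show $T = \emptyset$. That step is not a formality: it requires first pinning down the exact move types (the single $O$-move must be a counterclockwise $z$ turn, the two $T$-moves a clockwise turn plus a flip, proved via tracking eight ``misplaced'' stickers that must travel one face counterclockwise around the $z$ axis), and then a color-scheme-tracking argument showing that if $i_1 \in T$ existed, two distinct rows of the $+x$ face of $C_b$ would have to carry identical color patterns, contradicting the distinctness of the $l_i$.

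Two secondary issues. First, your bound of $|O|-1$ on the number of $x$-slice moves in $\{1,\ldots,m\}$ between consecutive $O$-moves is not justified as stated: in STM, two consecutive $O$-moves could instead be separated by vertical face moves that re-orient the slices, which is why the paper proves the weaker inequality $c_J + \tfrac{1}{2}c_{vertical} \ge |O|-1$ and then separately eliminates $c_{vertical}$ in the accounting. Second, you correctly flag that the Hamming-distance lemma is the hard part, but the difficulty you anticipate is resolved in the paper only by combining the paired-sticker machinery with the Step-1 facts that $O$-moves are counterclockwise $z$ turns occurring when the vertical faces have rotation $0$ or $180^\circ$; without first establishing those move-type restrictions, the ``parallel transport'' case analysis you sketch cannot be closed. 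As written, the proposal would prove the statement only for solution sequences that happen to have $|O| = n$, which is not all of them.
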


The intuition behind the proof of this theorem is similar to that used in the Rubik's Square case, but there is added complexity due to the extra options available in a Rubik's Cube. Most of the added complexity is due to the possibility of face moves (allowing rows of stickers to align in several directions over the course of a solution).

Below, we describe an outline of the proof, including several high-level steps, each of which is described in more detail in an additional subsection.

To prove the theorem, we consider a hypothetical solution to the $(C_t, k)$ instance of the STM Rubik's Cube problem. A solution consists of a sequence of STM Rubik's Cube moves $m_1, \ldots, m_{k'}$ with $k' \le k$ such that $C' = (m_{k'} \circ \cdots \circ m_1)(C_t)$ is a solved configuration of the Rubik's Cube.

One very helpful idea that is used several times throughout the proof is the idea of an index $u$ such that no move $m_i$ is an index-$u$ move. 

\begin{definition}
Define $u \in \{m+n+1, m+n+2, \ldots, m+n+(2n)\}$ to be an index such that $m_1, \ldots, m_{k'}$ contains no index-$u$ move. 
\end{definition}

Notice that a value for $u$ satisfying this definition must exist because variable $u$ has $2n = k + 1 > k \ge k'$ possible values and each of the $k'$ moves $m_i$ disqualifies at most one possible value from being assigned to variable $u$.

Step 1 of the proof is a preliminary characterization of the possible index-$(m+i)$ moves among $m_1, \ldots, m_{k'}$ for $i \in \{1, \ldots, n\}$. Consider the following definition:

\begin{definition}
Partition the set $\{1, \ldots, n\}$ into four sets of indices $Z$, $O$, $T$, and $M$ (where $Z$, $O$, $T$, and $M$ are named after ``zero'', ``one'', ``two'', and ``more'') as follows:
\begin{itemize}
  \setlength{\itemsep}{1pt}
  \setlength{\parskip}{0pt}
  \setlength{\parsep}{0pt}
\item $i \in Z$ if and only if $m_1, \ldots, m_{k'}$ contains exactly zero index-$(m+i)$ moves
\item $i \in O$ if and only if $m_1, \ldots, m_{k'}$ contains exactly one index-$(m+i)$ move
\item $i \in T$ if and only if $m_1, \ldots, m_{k'}$ contains exactly two index-$(m+i)$ moves
\item $i \in M$ if and only if $m_1, \ldots, m_{k'}$ contains at least three index-$(m+i)$ moves
\end{itemize}
\end{definition}

In Step 1, we prove the following list of results, thereby restricting the set of possible index-$(m+i)$ moves (for $i \in \{1, \ldots, n\}$) among $m_1, \ldots, m_{k'}$:

\begin{itemize}
\item $Z$ is empty.
\item If $i \in O$, then the sole index-$(m+i)$ move must be a counterclockwise $z$ turn.
\item If $i \in T$, then the two index-$(m+i)$ moves must be a clockwise $z$ turn and a $z$ flip in some order.
\item If $i \in O \cup T$, then any move of $z$ slice $(m+i)$ must occur at a time when faces $+x$, $+y$, $-x$, and $-y$ all have zero rotation and any move of $z$ slice $-(m+i)$ must occur at a time when these faces all have rotation $180^\circ$.
\end{itemize}

Step 2 of the proof concerns the concept of paired stickers:

\begin{definition}
Suppose $p_1$, $p_2$, and $q$ are all distinct positive non-face slice indices. Then we say that two stickers are \emph{$(p_1, p_2, q)$-paired} if the two stickers are on the same index-$j$ slice, the two stickers are on the same quadrant of a face, one of the stickers has coordinates $\pm q$ and $\pm p_1$ within that face, and the second sticker has coordinates $\pm q$ and $\pm p_2$ within the face.  
\end{definition}

In particular, we prove the following useful properties of paired stickers:

\begin{itemize}
\item If two stickers are $(p_1, p_2, q)$-paired, then they remain $(p_1, p_2, q)$-paired after one move unless the move is an index-$p_1$ move or an index-$p_2$ move which moves one of the stickers.
\item Suppose $i_1, i_2 \in O$ and $j \in \{1, 2, \ldots, m\}$. Then consider any pair of stickers that are $(m+i_1, m+i_2, j)$-paired in $C_b$. If there are no face moves of faces $+x$, $+y$, $-x$, and $-y$ and no index-$j$ moves that affect either of the stickers between the index-$(m+i_1)$ $O$-move and the index-$(m+i_2)$ $O$-move, then the two stickers remain $(m+i_1, m+i_2, j)$-paired in $C'$.
\end{itemize}

Step 3 of the proof uses a counting argument to significantly restrict the possible moves in $m_1, \ldots, m_{k'}$. In particular, consider the following classification of moves into disjoint types:
\begin{itemize}
\item ``$O$-moves'': index-$(m+i)$ moves with $i \in O$
\item ``$T$-moves'': index-$(m+i)$ moves with $i \in T$
\item ``$M$-moves'': index-$(m+i)$ moves with $i \in M$
\item ``$J$-moves'': index-$j$ moves with $j \in J = \{1, \ldots, m\}$
\item ``vertical face moves'': face moves of faces $+x$, $+y$, $-x$, or $-y$
\item ``other moves'': all other moves
\end{itemize}

We show using the results from Steps 1 and 2 that there must be a $J$-move or two vertical face moves between each pair of $O$-moves in $m_1, \ldots, m_{k'}$. As a result, we can count the number of moves of each type as follows:

Let $c_{O}$, $c_{T}$, $c_{M}$, $c_{vertical}$, $c_{J}$, and $c_{other}$ be the number of moves of each type. We derive the following constraints:
\begin{itemize}
\item $c_{O} = |O|$
\item $c_{T} = 2|T|$
\item $c_{M} \ge 3|M|$
\item $c_{J} + \frac{1}{2}c_{vertical} \ge |O| - 1$
\item $c_{other} \ge 0$
\end{itemize}

Adding these together, we find that 
$$k' - \frac{1}{2}c_{vertical} = c_{O} + c_{T} + c_{M} + \frac{1}{2}c_{vertical} + c_{J} + c_{other} \ge |O| + 2|T| + 3|M| + (|O| - 1) = k + |M|.$$

The above shows that $k' \ge k$, but we also know that $k' \le k$. Thus, equality must hold at each step. Working out the details, we find that $c_{O} = |O|$, $c_{T} = 2|T|$, $c_{J} = |O| - 1$, and $c_{M} = c_{vertical} = c_{other} = 0$. Thus, the counting argument in this step shows that the only moves in $m_1, \ldots, m_{k'}$ other than $O$-moves and $T$-moves are the $|O| - 1$ quantity of $J$-moves which are between $O$-moves.

In Step 4, we further restrict the possibilities. In particular, we show the following:
\begin{itemize}
\item Since there are no face moves, the index-$(m+i)$ $O$-move for $i \in O$ can only be a counterclockwise $z$ turn of slice $(m+i)$. Similarly the index-$(m+i)$ $T$-moves for $i \in T$ are a clockwise $z$ turn and a $z$ flip of slice $(m+i)$.
\item Consider the elements $i \in O$ in the order in which their $O$-moves occur. We show that if $i_1$ is immediately before $i_2$ in this order, then it must be the case that $l_{i_1}$ differs from $l_{i_2}$ in exactly one bit.
\item Furthermore, the one $J$-move between two consecutive $O$-moves of slices $m+i_1$ and $m+i_2$ must rotate the $x$ slice whose index is the unique index $j$ at which strings $l_{i_1}$ and $l_{i_2}$ differ.
\end{itemize}

At this point, we are almost done. Consider the elements $i \in O$ in the order in which their $O$-moves occur. The corresponding bitstring $l_i$ in the same order have the property that each $l_i$ is at Hamming distance one from the next. In Step 5, we use the ideas of paired stickers to show that $T$ is empty, and as a result conclude that $O = \{1, \ldots, n\}$ and therefore that the above ordering of the $l_i$s is an ordering of all the $l_i$s in which each $l_i$ has Hamming distance one from the next. In other words, we show our desired result: that $l_1, \ldots, l_n$ is a ``yes'' instance to the Promise Cubical Hamiltonian Path problem.

\subsection{Step 1: restricting the set of possible index-$(m+i)$ moves}

As stated in the proof outline, we will prove the following list of results in this section

\begin{itemize}
\item $Z$ is empty.
\item If $i \in O$, then the sole index-$(m+i)$ move must be a counterclockwise $z$ turn.
\item If $i \in T$, then the two index-$(m+i)$ moves must be a clockwise $z$ turn and a $z$ flip in some order.
\item If $i \in O \cup T$, then any move of $z$ slice $(m+i)$ must occur at a time when faces $+x$, $+y$, $-x$, and $-y$ all have zero rotation and any move of $z$ slice $-(m+i)$ must occur at a time when these faces all have rotation $180^\circ$.
\end{itemize}

We begin with a preliminary result concerning the coloring of the solved configuration $C' = (m_{k'} \circ \cdots \circ m_1)(C_t)$. 

\begin{lemma}
The solved Rubik's Cube configuration $C'$ has the same face colors as $C_0$.
\end{lemma}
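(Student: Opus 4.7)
The plan is to exploit the ``unused index'' $u \in \{m+n+1, \ldots, m+3n\}$ introduced just before the lemma. Since $C'$ is a solved configuration, each of its six faces is monochromatic, so it suffices to exhibit, for each face $F$ of the cube, a single sticker that (i) starts on $F$ in $C_0$ and (ii) is guaranteed to still lie on $F$ in $C'$. The color of that sticker then forces $F$ to have the same color in $C'$ as in $C_0$.

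To carry this out, for each face I would pick the sticker on that face whose two in-face coordinates are both equal to $u$. Concretely, for the $+z$ face I use the sticker at $(u,u)$ on $+z$ (on cubie $(u,u,s/2)$); for the $+y$ face I use $(u,u)$ on $+y$ (on cubie $(u,s/2,u)$); and analogously for $-z$, $-y$, $+x$, $-x$. A sticker of this kind can only be moved off its face by a move whose index has absolute value equal to one of its two in-face coordinates, i.e.\ an index-$u$ move in one of the two directions transverse to the face. (Face moves about the normal direction of $F$ rotate the sticker within $F$, which is harmless.)

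I would then verify that no operation applied to $C_0$ in producing $C' = (m_{k'} \circ \cdots \circ m_1)(C_t) = (m_{k'} \circ \cdots \circ m_1 \circ t)(C_0)$ is an index-$u$ move. Unpacking $t = a_1 \circ b_1 \circ \cdots \circ b_n$, every factor is of the form $x_j$ with $1 \le j \le m$ or $z_{m+i}$ with $1 \le i \le n$, so all indices appearing in $t$ have absolute value at most $m+n < u$. On the other side, by definition of $u$ no move in $m_1, \ldots, m_{k'}$ is an index-$u$ move. Hence each of the six chosen stickers is never moved off its initial face throughout the entire transformation.

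I expect this to be the whole argument; there is no real obstacle beyond bookkeeping. The one point that merits care is observing that although face moves in $m_1, \ldots, m_{k'}$ (e.g.\ a $\pm z$ face rotation) may permute the chosen sticker to a different location on its face, the sticker remains on the same face, and that is all that is needed: since $C'$ is solved, the entire face takes the color of any sticker it contains, which is the color of that face in $C_0$. Applying this for all six faces yields the lemma.
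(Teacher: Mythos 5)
Your proposal is correct and is essentially the paper's own argument: both track the sticker with both in-face coordinates equal to $u$ on each face, note that neither $t$ (whose slice indices have absolute value at most $m+n < u$) nor $m_1,\ldots,m_{k'}$ (by choice of $u$) contains an index-$u$ move, and conclude the sticker never leaves its face, which pins down the face colors of the solved configuration $C'$. Your extra remark that face moves about the normal only permute the sticker within its face is a detail the paper leaves implicit but does not change the argument.
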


\begin{proof}
Consider the sticker with both coordinates $u$ on any face of $C_0$. No index-$u$ moves occur within $m_{k'}\circ \cdots \circ m_1$ by definition of $u$. No index-$u$ moves occur within $t = a_1 \circ b_1 \circ \cdots \circ b_n$ because $t$ is defined entirely using moves of slices whose indices have absolute values at most $m+n$ and $u > m+n$. As a result, the sticker in question is never moved off of the face it starts on by the transformation $m_{k'}\circ \cdots \circ m_1 \circ t$. Applying transformation $m_{k'}\circ \cdots \circ m_1 \circ t$ to $C_0$ yields $C'$, so the sticker is on the same face in $C'$ as it is in $C_0$. Since both $C_0$ and $C'$ are solved configurations, we conclude that configuration $C'$ has the same face colors as $C_0$.
\end{proof}

Using this, we can show the first desired result:

\begin{lemma}
$Z$ is empty.
\end{lemma}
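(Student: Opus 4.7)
My plan is to prove $Z = \emptyset$ by contradiction. Suppose some $i \in \{1,\ldots,n\}$ lies in $Z$, so that no index-$(m+i)$ move appears in $m_1,\ldots,m_{k'}$ and, by the choice of $u$, no index-$u$ move does either. I will track a single sticker from $C_0$ through $t$ and then through $m_1,\ldots,m_{k'}$ and show it necessarily ends up on the wrong face of $C'$, contradicting the previous lemma's conclusion that $C'$ has the same face colours as $C_0$.

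The sticker I will follow is the orange $+x$-face sticker of the cubie at position $(s/2, -u, m+i)$ in $C_0$. First I track this cubie through $t = a_1 \circ b_1 \circ \cdots \circ b_n$. Every factor $a_j$ is a composition of $x$-slice turns at indices in $\{1,\ldots,m\}$, and since $m < s/2$ none of them rotates the $x = s/2$ slice; and for $i' \ne i$ the middle factor $z_{m+i'}$ of $b_{i'}$ does not rotate the $z = m+i$ slice containing our cubie. So the only nontrivial action on the cubie is the $z_{m+i}$ inside $b_i$, which (a clockwise rotation about the $z$-axis) sends $(s/2, -u, m+i) \mapsto (-u, -s/2, m+i)$ and carries the $+x$ face of the cubie onto its $-y$ face. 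Hence in $C_t$ this cubie sits at $(-u, -s/2, m+i)$ with the orange sticker now on the $-y$ face of the cube.

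Next I analyse the effect of $m_1,\ldots,m_{k'}$. The three slices through this cubie have indices $u$, $s/2$, and $m+i$; the first and third are forbidden, so the only move that can ever touch the cubie is a $-y$ face turn. Such a turn cycles the cubie through the four positions $(-u, -s/2, m+i)$, $(m+i, -s/2, u)$, $(u, -s/2, -(m+i))$, and $(-(m+i), -s/2, -u)$, and at each of these positions the other two slice indices again belong to $\{u, m+i\}$, so the forbidden-index argument continues to apply. Since a rotation about the $y$-axis leaves the cubie's $-y$-facing sticker facing $-y$, the orange sticker remains on the $-y$ face of the cube in $C'$, contradicting the fact that the $-y$ face of $C'$ is yellow.

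The main obstacle is making the bookkeeping airtight: I must check the forbidden-index constraint at all four visited positions, note that $u \ne m+i$ (which holds because $u \ge m+n+1 > m+i$), and handle the degenerate case $u = s/2$, where the chosen cubie is an edge cubie of $C_0$ with an additional yellow $-y$ sticker but where face moves themselves become index-$u$ moves and hence forbidden, so the cubie cannot move at all and the contradiction is only sharper.
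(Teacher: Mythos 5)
Your proposal is correct and follows essentially the same strategy as the paper: both arguments fix a sticker whose cubie lies in the $x$- (or $y$-) slice of index $\pm u$ and the $z$-slice of index $m+i$, observe that $t$ strands it on a $\pm y$ face of the wrong colour, and note that with index-$u$ and index-$(m+i)$ moves forbidden only face turns can touch it, which never change its face. The only cosmetic difference is that you re-derive the relevant colour by tracking the cubie from $C_0$ through $t$ (and treat the $u=s/2$ edge case explicitly), whereas the paper reads it off from Theorem~\ref{thm:cube_coloring} starting at $C_b$.
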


\begin{proof}
Suppose for the sake of contradiction that $m_1, \ldots, m_{k'}$ contains no index-$(m+i)$ move for some $i \in \{1, \ldots, n\}$.

Then consider the sticker with coordinates $(x, z) = (u, m+i)$ on the $+y$ face of $C_b$. Configuration $C'$ can be obtained from configuration $C_b$ by transformation $m_{k'} \circ \cdots \circ m_1 \circ a_1$. We know, however, that moves $m_1, \ldots, m_{k'}$ include no index-$(m+i)$ or index-$u$ moves. Similarly, since $a_1 = (x_1)^{(l_1)_1} \circ \cdots \circ (x_m)^{(l_1)_m}$, we see that $a_1$ consists of no index-$j$ moves with $j > m$. Since both $m+i$ and $u$ are greater than $m$, we can conclude that transformation $m_{k'} \circ \cdots \circ m_1 \circ a_1$ can be built without any index-$(m+i)$ or index-$u$ moves. As a result, this transformation does not move the sticker in question to a different face.

We then see that the sticker with coordinates $(x, z) = (u, m+i)$ on the $+y$ face of $C_b$ is also on the $+y$ face of $C'$. By Theorem~\ref{thm:cube_coloring}, we see that the color of this sticker in $C_b$ is red. However, the $+y$ face of $C'$ is supposed to be the same color as the $+y$ face of $C_0$: green. 

By contradiction, we see as desired that $m_1, \ldots, m_{k'}$ must contain some index-$(m+i)$ move for all $i \in \{1, \ldots, n\}$
\end{proof}

The rest of what we wish to show concerns index-$(m+i)$ moves where $i \in O \cup T$. 

For any $i$, we can restrict our attention to a specific set of stickers as in the following definition:

\begin{definition}
Define the \emph{special stickers} to be the $48$ stickers in $C_b$ with coordinates $\pm u$ and $\pm (m+i)$ (eight special stickers per face). 
\end{definition}

Notice that by Theorem~\ref{thm:cube_coloring}, all but $8$ special stickers have the same color as the color of their starting face in $C_0$. This motivates the following further definition:

\begin{definition}
Define the \emph{correctly placed stickers} to be the $40$ special stickers which have the same color as their starting face has in $C_0$. Define the \emph{misplaced stickers} to be the other $8$ special stickers. 
\end{definition}

Of the $8$ misplaced stickers, the two on the $+y$ face have the color of the $-x$ face in $C_0$, the two on the $-x$ face have the color of the $-y$ face in $C_0$, the two on the $-y$ face have the color of the $+x$ face in $C_0$, and the two on the $+x$ face have the color of the $+y$ face in $C_0$. In short, starting at $C_b$, the $8$ misplaced stickers must each move one face counterclockwise around the $z$ axis in order to end up on the face whose color in $C_0$ matches the color of the sticker. 

Next, consider the effect that move sequence $m_1, \ldots, m_{k'}$ must have on the special stickers

\begin{lemma}
When starting in configuration $C_b$, move sequence $m_1, \ldots, m_{k'}$ must move the misplaced stickers one face counterclockwise around the $z$ axis and must return each of the correctly placed stickers to the face that sticker started on.
\end{lemma}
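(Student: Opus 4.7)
The plan is to transfer the claim from the solved configuration $C'$ (whose face colors are pinned down by the previous lemma) back to the configuration $(m_{k'} \circ \cdots \circ m_1)(C_b)$ by exploiting the fact that the intermediate transformation $a_1$ acts trivially on every special sticker. Once that invariance is in hand, the two halves of the claim just read off the color matching forced by $C' \cong C_0$.

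First I would show that $a_1$ fixes each special sticker. By construction $a_1 = (x_1)^{(l_1)_1} \circ \cdots \circ (x_m)^{(l_1)_m}$ is a product of rotations $x_j$ with $j \in \{1, \ldots, m\}$, and each $x_j$ acts only on cubies whose current $x$-coordinate has absolute value $j$. By the coordinate convention adopted in Section~\ref{section:puzzle_problems}, the multiset of absolute values of a cubie's coordinates is preserved up to permutation and negation under any single move. A special-sticker cubie has coordinate absolute values $\{s/2, u, m+i\}$, each of which strictly exceeds $m$ (since $s/2 = 3n + m$, $u \geq m+n+1$, and $i \geq 1$); hence at every intermediate stage of applying $a_1$ the cubie's $x$-coordinate has absolute value greater than $m$, and so no sub-term of $a_1$ ever relocates the cubie. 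Thus $a_1$ acts as the identity on every special sticker.

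Combining this with the identity $C' = (m_{k'} \circ \cdots \circ m_1)(C_t) = (m_{k'} \circ \cdots \circ m_1 \circ a_1)(C_b)$, the special stickers in $(m_{k'} \circ \cdots \circ m_1)(C_b)$ sit in exactly the same positions as in $C'$. By the preceding lemma, $C'$ has the same face colors as $C_0$, so in $C'$ each sticker lies on the face whose $C_0$-color equals its own. Reading this off for special stickers yields both halves of the claim: a correctly placed sticker bears the color of its starting face in $C_0$, so it returns to that face; a misplaced sticker bears the color of a neighboring face (the enumeration preceding the lemma gives $+y \to -x$, $-x \to -y$, $-y \to +x$, $+x \to +y$), which is precisely the face one step counterclockwise around the $z$ axis when viewed from $+z$, matching the paper's rotation-naming convention.

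I do not expect a real obstacle here. The entire argument collapses to the single observation that $a_1$ does not touch any special sticker, which rests on the elementary numerical separation $m < \min(m+i, u, s/2)$ combined with the absolute-value invariance of Rubik's Cube moves. The only subtlety worth double-checking is that the $+y \to -x \to -y \to +x \to +y$ cycle in the color description really is counterclockwise around the $z$ axis under the paper's ``viewed from the positive axis'' convention, which a direct check against the $C_0$ face colors ($-x$ is red, $-y$ is yellow, $+x$ is orange, $+y$ is green) confirms.
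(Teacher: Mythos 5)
Your proposal is correct and follows essentially the same route as the paper: both reduce the claim to the transformation $m_{k'} \circ \cdots \circ m_1 \circ a_1$ taking $C_b$ to $C'$, observe that $a_1$ fixes every special sticker (you via the coordinate-absolute-value invariance, the paper by noting $a_1$ contains no index-$u$, index-$(m+i)$, or face moves), and then read off the required face displacements from the color matching with $C_0$.
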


\begin{proof}
Configuration $C'$, which has the same coloring scheme as configuration $C_0$, can be reached from configuration $C_b$ by applying transformation $m_{k'} \circ \cdots \circ m_1 \circ a_1$. Therefore, the $8$ misplaced stickers must be moved counterclockwise one face around the $z$ axis and the $40$ correctly placed stickers must stay on the same face due to this transformation. Notice that the only moves which transfer special stickers between faces are index-$u$ and index-$(m+i)$ moves. The only other moves that even affect special stickers are face moves. As previously argued, $a_1$ contains no index-$u$ moves. In fact, $a_1$ does not contain face moves or index-$(m+i)$ moves either and so $a_1$ does not move any of the special stickers at all. 

In other words, the effect of this transformation ($m_{k'} \circ \cdots \circ m_1 \circ a_1$) on the special stickers is the same as the effect of just the transformation $m_1, \ldots, m_{k'}$. Thus, $m_1, \ldots, m_{k'}$ must move the misplaced stickers one face counterclockwise around the $z$ axis and must return each of the correctly placed stickers to the face that sticker started on. 
\end{proof}

This allows us to directly prove the next two parts of our desired result:

\begin{lemma}
\label{lemma:cube_O_move_types}
If $i \in O$, then the sole index-$(m+i)$ move must be a counterclockwise $z$ turn.
\end{lemma}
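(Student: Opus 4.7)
The plan is to combine the previous lemma---which requires the 8 misplaced stickers in $C_b$ to end up one face counterclockwise around the $z$ axis in $C'$---with the already-established fact (noted in the proof of the preceding lemma) that only index-$u$ and index-$(m+i)$ moves can transfer special stickers between faces. Since no index-$u$ move exists by choice of $u$ and $i \in O$ guarantees exactly one index-$(m+i)$ move, that single move must simultaneously transfer every one of the 8 misplaced stickers from its starting vertical face to the face one step CCW around $z$. In particular, at the moment the move is applied, each of the 8 carrying cubies must lie in the slice being rotated.

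First I would rule out $x$- and $y$-slice moves of index $m+i$. Consider a misplaced sticker on the $+y$ face; its carrying cubie has $y = s/2$, and this persists under $+y$ face rotations. Prior to the sole index-$(m+i)$ move, the only moves that can act on this cubie are $+y$ face rotations: any $x$- or $z$-slice through any orbit position of the cubie has index in $\{u, m+i\}$, and index-$u$ moves are absent. Hence the cubie never enters a $y = \pm(m+i)$ slice, since $m+i \ne s/2$. Symmetrically, misplaced stickers on $\pm x$ faces stay pinned to $x = \pm s/2$ and never enter any $x = \pm(m+i)$ slice. Because the single index-$(m+i)$ move must simultaneously catch a misplaced sticker on each of $+x, +y, -x, -y$, it must rotate a slice of the form $z = \pm(m+i)$.

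Finally I would pin down the direction. For the $+y$-face misplaced sticker, the four orbit positions on $+y$ under $+y$ face rotations have $z$-coordinate cycling through $m+i, -u, -(m+i), u$; hence exactly one orbit position lies in the $z = m+i$ slice and exactly one in the $z = -(m+i)$ slice. Whichever of the two slices the move rotates, the cubie must be at the corresponding orbit position when the move is applied. A $90^\circ$ CCW rotation around $z$ then sends the cubie from $+y$ to $-x$ (matching the $-x$-face color carried by the misplaced sticker in $C_b$), a CW turn sends it to $+x$, and a $180^\circ$ flip to $-y$. After the sole move, no further index-$(m+i)$ or index-$u$ move remains to repair a wrong face, so only the CCW turn succeeds. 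The main subtlety is verifying that one choice of slice and direction is consistent for all 8 misplaced stickers at once; this is routine because all four vertical faces admit identical orbit structures under their face rotations and CCW rotation around $z$ acts symmetrically on $+x, +y, -x, -y$. The edge case $u = s/2$ only strengthens the argument, since then index-$u$ moves include every face move, and the cubies of the misplaced stickers cannot move at all before the sole index-$(m+i)$ move.
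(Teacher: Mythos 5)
Your proposal is correct and follows essentially the same route as the paper: both arguments rest on the preceding lemma (the $8$ misplaced special stickers must travel one face counterclockwise about the $z$ axis while the $40$ correctly placed ones must not change faces) combined with the fact that, in the absence of index-$u$ moves, only the sole index-$(m+i)$ move can transfer a special sticker between faces. The only difference is bookkeeping: the paper gets the slice orientation from the count that every index-$(m+i)$ slice contains exactly $8$ special stickers, whereas you pin the misplaced cubies to their faces and read off the direction from the required destination of the $+y$-face misplaced sticker --- both are sound.
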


\begin{proof}
Consider the result of move sequence $m_1, \ldots, m_{k'}$ when starting in configuration $C_b$. We showed above that the $8$ misplaced stickers must each move one face counterclockwise around the $z$ axis and the correctly placed stickers must stay on the same face. Furthermore, the only moves which cause special stickers to change faces are index-$u$ or index-$(m+i)$ moves. Since $m_1, \ldots, m_{k'}$ includes no index-$u$ moves and includes exactly one index-$(m+i)$ move (for $i \in O$), we see that the special stickers only change faces during the sole index-$(m+i)$ move in $m_1, \ldots, m_{k'}$. 

Every slice with index $\pm (m+i)$ contains exactly $8$ special stickers; therefore the sole index-$(m+i)$ move must cause exactly $8$ of the special stickers to change faces. 

In order for the $8$ misplaced stickers to change faces and for the correctly placed stickers not to, it must be the case that the single index-$(m+i)$ move relocates exactly the $8$ misplaced stickers. These stickers are on the $\pm x$ and $\pm y$ faces. Since the single index-$(m+i)$ move affects $8$ stickers on the $\pm x$ and $\pm y$ faces and moves each of these stickers exactly one face counterclockwise around the $z$ axis, it must be the case that this move is a counterclockwise $z$ slice turn. As desired, the sole index-$(m+i)$ move is a counterclockwise $z$ turn.
\end{proof}

\begin{lemma}
\label{lemma:cube_T_move_types}
If $i \in T$, then the two index-$(m+i)$ moves must be a clockwise $z$ turn and a $z$ flip in some order.
\end{lemma}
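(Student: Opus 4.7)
The plan is to mirror the proof of Lemma~\ref{lemma:cube_O_move_types}. Face moves permute the special stickers within each face and hence preserve the set of special stickers on every face; since no index-$u$ moves occur, the only moves in $m_1,\ldots,m_{k'}$ that can change which face any special sticker lies on are the two index-$(m+i)$ moves. Their net effect on the $48$ special stickers must therefore move each of the $8$ misplaced stickers one face counterclockwise around the $z$-axis while returning each of the $40$ correctly placed stickers to its starting face.

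First I would rule out that either of the two moves is an $x$- or $y$-slice move at $\pm(m+i)$. Any such move sends two correctly placed special stickers off each of the $+z$ and $-z$ faces onto adjacent side faces, where they land at $z$-coordinate $\pm u$. The remaining index-$(m+i)$ move cannot bring them back: a $z$-slice move at $\pm(m+i)$ only touches side-face stickers at $z=\pm(m+i)$, and any other $x$- or $y$-slice move at $\pm(m+i)$ acts only at that specific $x$- or $y$-coordinate, which excludes the relocated stickers. Hence both moves must be $z$-slice moves.

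Next I would eliminate $z$-slice moves at $-(m+i)$. Such a move leaves all $8$ misplaced stickers (which lie at $z=m+i$) untouched but nontrivially permutes the $8$ correctly placed special stickers at $z=-(m+i)$ among the four side faces, forcing at least one of them onto a face of the wrong color. So both moves must act on the slice $z=m+i$; their composition then acts only on the $8$ misplaced stickers, and its restriction to them must be a counterclockwise quarter turn around the $z$-axis. Enumerating the nine ordered pairs of nontrivial rotations of a single slice, the only compositions equal to a counterclockwise quarter turn are (clockwise quarter turn, flip) and (flip, clockwise quarter turn), which yields the claim.

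The most delicate step is the first, where one must carefully track the $z$-coordinate to which $\pm z$-face special stickers are sent by an $x$- or $y$-slice move at $\pm(m+i)$ in order to confirm that the remaining index-$(m+i)$ move cannot return them.
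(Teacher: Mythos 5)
Your overall strategy matches the paper's: work with the $48$ special stickers, note that only the two index-$(m+i)$ moves can transfer them between faces, and force the $8$ misplaced ones to travel one face counterclockwise about the $z$-axis while the $40$ correctly placed ones return to their faces. But two of your steps have genuine holes. First, in ruling out $x$- and $y$-slice moves, the claim that the remaining index-$(m+i)$ move ``cannot bring them back'' fails when the second move rotates the \emph{same} slice as the first: an $x$-slice rotation preserves the $x$-coordinate of every sticker it moves, so the displaced $\pm z$-face stickers still have $x$-coordinate $\pm(m+i)$ and a second rotation of that slice can (and, when the two rotations compose to the identity, does) return them to their faces. In that sub-case there is no contradiction from the correctly placed stickers at all; the contradiction is that the misplaced stickers (which sit at $x=\pm u$ on the $\pm y$ faces or on the $\pm x$ face slices, hence outside every non-face $x$-slice of index $\pm(m+i)$) are never transferred between faces. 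This is exactly how the paper's casework disposes of ``both moves are $x$ moves'' and ``both are $y$ moves'', with a separate itinerary argument for ``one $x$ move and one $y$ move''.

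Second, your elimination of the slice $z=-(m+i)$ is both unnecessary for the lemma (which asserts only the move types, not the slice sign) and not available at this stage: it presumes the special stickers still occupy their $C_b$ positions when the index-$(m+i)$ moves occur, but a $180^\circ$ rotation of the vertical faces beforehand sends each side-face special sticker from $z=\pm(m+i)$ to $z=\mp(m+i)$, after which two moves of slice $-(m+i)$ act precisely on the misplaced stickers. (Face moves are not excluded until Step~3 of the paper's argument; this lemma sits in Step~1.) The paper therefore proves here only that the two moves are a clockwise $z$ turn and a $z$ flip, via a different route: each index-$(m+i)$ move transfers exactly $8$ special stickers between faces, so the two moves must move the \emph{same} $8$ stickers (otherwise at least $16$ special stickers change faces, far more than the $8$ misplaced ones), those $8$ must be the misplaced ones, and the two rotation angles must sum to a counterclockwise quarter turn; the determination that the slice is $+(m+i)$ is deferred to Step~4. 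Your final enumeration of angle pairs is correct once ``both moves move the same $8$ misplaced stickers'' is established, but you need the paper's same-$8$-stickers argument rather than the $-(m+i)$ elimination to get there.
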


\begin{proof}
As before, consider the result of move sequence $m_1, \ldots, m_{k'}$ when starting in configuration $C_b$. The $8$ misplaced stickers must each move one face counterclockwise around the $z$ axis and the correctly placed stickers must stay on the same face. Since the only moves which cause special stickers to change faces are index-$u$ or index-$(m+i)$ moves, the only moves among $m_1, \ldots, m_{k'}$ which move special stickers between faces are the two index-$(m+i)$ moves (for $i \in T$).

Notice that every slice with index $\pm (m+i)$ contains exactly $8$ special stickers, so each of the two index-$(m+i)$ moves must cause exactly $8$ of the special stickers to change faces. 

We proceed by casework: 
\begin{itemize}
\item If exactly one of the two index-$(m+i)$ moves is an $x$ or $y$ move, then at least one of the correctly placed stickers from the $+z$ face is moved from that face and never returned there. Note that correctly placed stickers are supposed to end up on their starting faces.
\item If both index-$(m+i)$ moves are $x$ moves, then the misplaced stickers from the $+y$ face never leave that face. Note that misplaced stickers are supposed to move from their starting faces.
\item If both index index-$(m+i)$ moves are $y$ moves, then the misplaced stickers from the $+x$ face never leave that face. Note that misplaced stickers are supposed to move from their starting faces.
\item If the first index-$(m+i)$ move is an $x$ move and the second is a $y$ move, then each of the misplaced stickers from the $+x$ face end up on the $\pm x$ or $\pm z$ faces. Note that misplaced stickers from the $+x$ face are supposed to move to the $+y$ face.
\item If the first index-$(m+i)$ move is a $y$ move and the second is an $x$ move, then each of the misplaced stickers from the $+y$ face end up on the $\pm y$ or $\pm z$ faces. Note that misplaced stickers from the $+y$ face are supposed to move to the $-x$ face.
\end{itemize}
Since all these cases lead to contradiction, we can conclude that the only remaining case holds: both index-$(m+i)$ moves must be $z$ moves.

Next suppose for the sake of contradiction that the $8$ special stickers which are moved by one index-$(m+i)$ move are not the same as the special stickers moved by the other index-$(m+i)$ move. Any special sticker moved by exactly one of these moves will change faces and must therefore be a misplaced sticker. That sticker must move one face counterclockwise around the $z$ axis. Since each of the two index-$(m+i)$ moves includes at least one sticker that is not moved by the other index-$(m+i)$ move we can conclude that the two index-$(m+i)$ moves are both counterclockwise $z$ turns. Then any sticker moved by both index-$(m+i)$ moves is moved two faces counterclockwise around the $z$ axis. This is not the desired behavior for any of the special stickers so none of the stickers can be moved by both index-$(m+i)$ moves. Thus there are a total of $16$ different special stickers, each of which is moved by exactly one of the two index-$(m+i)$ moves. All $16$ of these stickers end up on a different face from the one they started at. This is a contradiction since there are only $8$ misplaced stickers.

We conclude that the two moves affect the same $8$ stickers. The only way to rotate a total of one quarter rotation counterclockwise with two moves is using one clockwise turn and one flip. Thus, as desired, the two index-$(m+i)$ moves for $i \in T$ must be a clockwise $z$ turn and a $z$ flip in some order.
\end{proof}

Finally, we have only one thing left to prove in this section:

\begin{lemma}
If $i \in O \cup T$, then any move of $z$ slice $(m+i)$ must occur at a time when faces $+x$, $+y$, $-x$, and $-y$ all have zero rotation and any move of $z$ slice $-(m+i)$ must occur at a time when faces $+x$, $+y$, $-x$, and $-y$ all have rotation $180^\circ$. 
\end{lemma}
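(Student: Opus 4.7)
The plan is to argue that at the moment of any index-$(m+i)$ move with $i\in O\cup T$, each of the eight misplaced cubies of group $i$ must lie in the $z$-slice being rotated; the only way this happens is for each vertical face to have net rotation $0^\circ$ or $180^\circ$, and the requirement that all eight cubies land in a \emph{single} $z$-slice then forces the four rotations to agree and to match the slice sign.

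First I would establish that every index-$(m+i)$ move with $i\in O\cup T$ acts on all eight misplaced cubies at once. For $i\in O$, this is because the unique such move is the only move capable of moving those stickers off their current faces: index-$u$ moves are excluded by the choice of $u$, face moves preserve each sticker's face, and every other slice move has its index outside the set $\{a,u,m+i\}$ of absolute values of the cubies' three coordinates, so it cannot act on them at all. It then follows that the slice actually rotated at the move must contain the eight misplaced cubies (not eight correctly placed ones, which would then be stranded off their home face). For $i\in T$, Lemma~\ref{lemma:cube_T_move_types} already shows both index-$(m+i)$ moves affect the same eight special stickers, and by the same exclusion argument those eight must be the misplaced ones.

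I would then fix a representative misplaced cubie, say the one at $(a,u,m+i)$ carrying the green sticker on the $+x$ face, and use the same exclusion argument together with Lemmas~\ref{lemma:cube_O_move_types} and~\ref{lemma:cube_T_move_types} to conclude that the only moves that can act on this cubie are face moves of whichever face it currently sits on, and $z$-slice moves of index $m+i$. For the $i\in O$ case (and for the first of the two moves when $i\in T$), the cubie has not yet been moved by any $z$-slice move and hence still sits on the $+x$ face with position determined purely by the net $+x$-face rotation. A four-case check over rotations $0^\circ,90^\circ,180^\circ,270^\circ$ shows the cubie lies in slice $z=\pm(m+i)$ only in the $0^\circ$ and $180^\circ$ cases, landing in slice $z=+(m+i)$ and $z=-(m+i)$ respectively. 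Running the identical argument for the analogous misplaced cubies tied to $-x$, $+y$, and $-y$ forces each of the four vertical-face rotations into $\{0^\circ,180^\circ\}$, and the requirement that all eight lie in one common $z$-slice pins the four rotations together with the slice sign.

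The main obstacle is the second index-$(m+i)$ move when $i\in T$, because by then the cubies have been permuted onto other vertical faces by the first move and intermediate face moves may have occurred. I would handle it by reusing the same invariant: since no index-$u$ move and no non-$z$-axis index-$(m+i)$ move can act on the cubie between the two index-$(m+i)$ moves, the cubie's $z$-coordinate at the second move is again determined by the net rotation of whichever vertical face currently carries it, and the same four-way case analysis forces that rotation into $\{0^\circ,180^\circ\}$. Applying this in parallel to all eight cubies and again using the same-slice requirement yields the conclusion at the second move as well.
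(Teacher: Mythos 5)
Your proof is correct, but it runs the paper's argument in the dual direction. The paper tracks the six \emph{correctly placed} special stickers on each vertical face: since the index-$(m+i)$ moves relocate exactly the eight misplaced stickers (as established in the proofs of Lemmas~\ref{lemma:cube_O_move_types} and~\ref{lemma:cube_T_move_types}), the rotated $z$ slice must \emph{miss} every correctly placed sticker, and a four-case analysis over the face's total rotation shows this is possible only at rotation $0$ (for slice $m+i$) or $180^\circ$ (for slice $-(m+i)$). You instead track the eight \emph{misplaced} stickers and require the rotated slice to \emph{contain} all of them; the same four-case analysis then forces each vertical face into rotation $0$ or $180^\circ$, and the single-slice requirement synchronizes the four faces and fixes the sign. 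Both routes rest on the same two inputs, but the paper's choice buys a real simplification: correctly placed stickers never change faces at any point in the solution, so ``in-face position is a function of total face rotation'' holds for them uniformly and the first and second index-$(m+i)$ moves of a $T$-index are handled by the identical sentence. Your misplaced stickers do change faces at the first $T$-move, which is why you need the extra invariant-preservation step for the second move. That step is sound --- the first move is a $z$-slice rotation, so it preserves each sticker's $z$-coordinate and deposits it on a face whose rotation at that instant has just been forced to match, so the correspondence between a face's rotation and its misplaced stickers' $z$-coordinates survives --- but it is the one place where your route is genuinely more delicate than the paper's, and it should be spelled out as an explicit induction over the index-$(m+i)$ moves in order rather than asserted.
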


\begin{proof}
As before, consider the result of move sequence $m_1, \ldots, m_{k'}$ when starting in configuration $C_b$. The $8$ misplaced stickers must each move one face counterclockwise around the $z$ axis and the correctly placed stickers must stay on the same face. The only moves which cause special stickers to change faces are index-$u$ or index-$(m+i)$ moves, though face moves also move special stickers. The only moves among $m_1, \ldots, m_{k'}$ which move special stickers between faces are the one or two index-$(m+i)$ moves (for $i \in O \cup T$). Furthermore, as shown in the proofs of Lemmas~\ref{lemma:cube_O_move_types} and~\ref{lemma:cube_T_move_types}, the special stickers which are affected by these moves are exactly the misplaced stickers. In other words, throughout the entire move sequence $m_1, \ldots, m_{k'}$, the only moves which affect the correctly placed stickers are the face moves.

Let $m_j$ be any index-$(m+i)$ move. Note that according to Lemmas~\ref{lemma:cube_O_move_types} and~\ref{lemma:cube_T_move_types}, $m_j$ rotates a $z$ slice.

Consider the six correctly placed stickers on one of the $\pm x$ or $\pm y$ faces. Since these stickers are only ever affected by face moves, their coordinates within the face are completely determined by the total rotation of the face so far. If the total rotation so far is $0$, then the six correctly placed stickers are in the positions with coordinates $\pm u$ and $\pm (m+i)$ and with $z \ne (m+i)$. If the total rotation so far is $90^\circ$, then the six correctly placed stickers are in the positions with coordinates $\pm u$ and $\pm (m+i)$ and with $x \ne -(m+i)$ for the $\pm y$ faces or $y \ne (m+i)$ for the $\pm x$ faces. If the total rotation so far is $180^\circ$, then the six correctly placed stickers are in the positions with coordinates $\pm u$ and $\pm (m+i)$ and with $z \ne -(m+i)$. If the total rotation so far is $270^\circ$, then the six correctly placed stickers are in the positions with coordinates $\pm u$ and $\pm (m+i)$ and with $x \ne (m+i)$ for the $\pm y$ faces or $y \ne -(m+i)$ for the $\pm x$ faces. 

The only way for move $m_j$ to avoid affecting these stickers if $m_j$ rotates $z$ slice $(m+i)$ is for the stickers to be in the positions with $z \ne (m+i)$. In other words, the total rotation of the face must be $0$. The only way for move $m_j$ to avoid affecting these stickers if $m_j$ rotates $z$ slice $-(m+i)$ is for the stickers to be in the positions with $z \ne -(m+i)$. In other words, the total rotation of the face must be $180^\circ$. Note that this logic applies to each of the $\pm x$ and $\pm y$ faces. In other words, if $m_j$ is some move with index $(m+i)$, then each of the $\pm x$ and $\pm y$ faces must have rotation $0$ and if $m_j$ is some move with index $-(m+i)$, then each of the $\pm x$ and $\pm y$ faces must have rotation $180^\circ$.
\end{proof}

\subsection{Step 2: exploring properties of paired stickers}

As stated in the proof outline, this step of the proof explores the properties of paired stickers.

\begin{lemma}
\label{lemma:cube_simple_pairing_result}
If two stickers are $(p_1, p_2, q)$-paired, then they remain $(p_1, p_2, q)$-paired after one move unless the move is an index-$p_1$ move or an index-$p_2$ move which moves one of the stickers.
\end{lemma}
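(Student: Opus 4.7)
The plan is to do casework on the move, exploiting the geometric characterization of the paired stickers. Because the definition of $(p_1, p_2, q)$-paired is symmetric under reorientations of the cube, I may assume without loss of generality that both stickers lie on the $+z$ face and that their shared coordinate of absolute value $q$ is the $x$-coordinate. The two stickers then sit at positions $(q', p_1', s/2)$ and $(q', p_2', s/2)$ where $|q'| = q$, $|p_i'| = p_i$, and $\operatorname{sgn}(p_1') = \operatorname{sgn}(p_2')$, this last condition encoding the same-quadrant requirement. The symmetric case where the shared coordinate is along $y$ is handled identically, and all other choices of face/shared axis are related by cube symmetries.

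I then classify the move by its rotation axis. For an $x$-slice move of signed index $r$, both stickers have $x$-coordinate $q'$, so either both are affected (exactly when $r = q'$, which is necessarily a non-face index since $q$ is non-face) or neither is; in the former subcase both stickers rotate rigidly about the $x$-axis by the same angle, which preserves the shared coordinate $q'$ and transforms the pair $(y,z)$ identically for both stickers. For a $z$-slice move, both stickers have $z = s/2$, so the only affecting move is the $+z$ face move, which rotates both stickers together within the $+z$ face. Finally, for a $y$-slice move of signed index $r$, face moves leave the stickers alone because $|p_i'| < s/2$, while a non-face $y$-move affects the first sticker iff $r = p_1'$ and the second iff $r = p_2'$; since $p_1 \ne p_2$, at most one is affected, and when one is, the move's index has absolute value $p_1$ or $p_2$, which is precisely the exception in the statement.

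The main---and really only---obstacle is the routine bookkeeping in the two subcases where both stickers rotate together (the index-$q$ $x$-move and the $+z$ face move): I must verify that the image pair still satisfies all four pairing conditions (same face, same quadrant, shared coordinate of absolute value $q$, differing coordinates of absolute values $p_1$ and $p_2$ with matching signs). This reduces to a direct check over the three possible rotation angles on each axis, and works cleanly because the rotation acts identically on both stickers, preserving their relative position while shuffling the three coordinate slots together. Combining the three axis cases covers every possible move, which establishes the lemma.
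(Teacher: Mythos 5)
Your proposal is correct and follows essentially the same argument as the paper: the paper's proof also rests on the two facts that a move displacing both stickers acts on them as a rigid rotation of the whole cube (hence preserves the pairing), and that face moves and index-$q$ moves necessarily displace both stickers or neither, so only an index-$p_1$ or index-$p_2$ move can displace exactly one. The only cosmetic difference is that the paper organizes the cases by how many of the two stickers are moved rather than by the rotation axis, and argues the rigid-rotation case abstractly instead of via your explicit coordinate check after the WLOG normalization.
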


\begin{proof}
Consider the effect of any move on the two stickers. 

If the move doesn't affect either sticker, then the two stickers maintain their coordinates (and therefore also stay on the same face quadrant and slice). Thus the two stickers remain $(p_1, p_2, q)$-paired.

If the move moves both stickers, then they both rotate the same amount. In other words, as far as those two stickers are concerned, the effect of the move is the same as the effect of rotating the entire Rubik's Cube. When rotating the Rubik's Cube, two stickers sharing a slice continue to share a slice, two stickers sharing a face quadrant continue to share a face quadrant, and each sticker maintains the same set of coordinate absolute values as it had before. Thus, the two stickers remain $(p_1, p_2, q)$-paired.

Clearly, the only way for the stickers to no longer be $(p_1, p_2, q)$-paired is for the move to affect exactly one of the stickers. The possible moves affecting the stickers are face moves, index-$q$ moves, index-$p_1$ moves, and index-$p_2$ moves. Among these, face moves and index-$q$ moves necessarily affect either both stickers in the pair or neither. Thus, the only way for the stickers to stop being $(p_1, p_2, q)$-paired is via an index-$p_1$ move or an index-$p_2$ move which moves one of the stickers.
\end{proof}

Using this, we prove the following lemma:

\begin{lemma}
\label{lemma:cube_complex_pairing_result}
Suppose $i_1, i_2 \in O$ and $j \in \{1, 2, \ldots, m\}$. Then consider any pair of stickers that are $(m+i_1, m+i_2, j)$-paired in $C_b$. If there are no face moves of faces $+x$, $+y$, $-x$, and $-y$ and no index-$j$ moves that affect either of the stickers between the index-$(m+i_1)$ $O$-move and the index-$(m+i_2)$ $O$-move, then the two stickers remain $(m+i_1, m+i_2, j)$-paired in $C'$.
\end{lemma}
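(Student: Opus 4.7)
The plan is to track the pair $(s_1, s_2)$ through the transformation $m_{k'} \circ \cdots \circ m_1 \circ a_1$ that takes $C_b$ to $C'$. I would split this transformation into three segments: a prefix consisting of $a_1$ together with the moves of $m_1, \ldots, m_{k'}$ preceding the first of the two $O$-moves (without loss of generality, $i_1$'s); a middle segment containing both $O$-moves and everything between them; and a suffix after the second $O$-move. Neither the prefix nor the suffix contains any index-$(m+i_1)$ or index-$(m+i_2)$ move: $a_1$ uses only $x$-slice turns at indices in $\{1, \ldots, m\}$, and because $i_1, i_2 \in O$ the two $O$-moves are the only such moves in all of $m_1, \ldots, m_{k'}$ and both sit inside the middle. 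Lemma~\ref{lemma:cube_simple_pairing_result} then preserves pairing move-by-move through the prefix and the suffix, so the work reduces to analyzing the middle.

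To handle the middle segment, I would first invoke Step 1: each $O$-move is a counterclockwise $z$-slice turn at some slice $\pm(m+i_1)$ or $\pm(m+i_2)$, and such a move requires the $+x, +y, -x, -y$ faces to have rotation $0$ (when the slice is positive) or $180^\circ$ (when negative). Because the hypothesis forbids $\pm x, \pm y$ face moves between the two $O$-moves, those face rotations remain constant throughout, and so both $O$-moves must act on slices of the same sign. I would then case-split on which face the pair inhabits at the start of the middle segment and on which in-face position carries the coordinate of absolute value $m+i_1$ or $m+i_2$. In every configuration in which that coordinate does not occupy the $z$ slot with a sign matching the common slice sign---including any pair on a $\pm z$ face, as well as pairs on $\pm x$ or $\pm y$ faces whose shared $\pm j$ coordinate sits in the $z$ slot---neither $O$-move affects either sticker, and Lemma~\ref{lemma:cube_simple_pairing_result} again handles the middle move-by-move.

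The only genuinely nontrivial case is the one in which $s_1$ has $z$-coordinate $\pm(m+i_1)$ and $s_2$ has $z$-coordinate $\pm(m+i_2)$, both with sign matching the common $O$-slice sign, so that each sticker is moved by exactly the $O$-move sharing its index (and by no other, since $m+i_1 \ne m+i_2$ rules out cross-affecting). Here I would exploit the hypothesis directly: the remaining in-cube coordinates of each sticker lie in $\{\pm j, \pm s/2\}$, so the only moves that can touch $s_1$ or $s_2$ between the $O$-moves are $\pm x, \pm y$ face moves (excluded) and index-$j$ moves that actually move one of the stickers (also excluded). Hence both stickers remain still between the $O$-moves, and a short coordinate computation shows that applying one counterclockwise $z$-turn to each sticker carries the paired configuration on the current $\pm x$ (or $\pm y$) face to the corresponding paired configuration on the adjacent face. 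The main obstacle is keeping this case analysis tidy, but the hypothesis is calibrated precisely to forbid the moves that could desynchronize $s_1$ from $s_2$ between their respective $O$-moves, so pairing is restored by the second $O$-move in every case.
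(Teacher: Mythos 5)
Your proposal is correct and follows essentially the same route as the paper's proof: reduce to the segment between the two $O$-moves via Lemma~\ref{lemma:cube_simple_pairing_result}, use the Step~1 results and the no-vertical-face-move hypothesis to conclude both $O$-moves are counterclockwise $z$-turns of slices of the same sign, and then check that the pair is either untouched by both $O$-moves or that each sticker receives exactly one synchronized counterclockwise quarter-turn about $z$ (which the paper phrases as being ``equivalent to a rotation of the entire Rubik's Cube'' where you substitute an explicit coordinate computation). The only cosmetic difference is that the paper splits cases by which face the pair occupies ($\pm z$ versus vertical) rather than by whether the $O$-moves touch the pair, but the content is identical.
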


\begin{proof}
Consider two $(m+i_1, m+i_2, j)$-paired stickers in $C_b$. Suppose that there exists neither an index-$j$ move affecting one of the stickers nor a face move of face $+x$, $+y$, $-x$, or $-y$ between the index-$(m+i_1)$ and index-$(m+i_2)$ $O$-moves. Let $m_\alpha$ be the index-$(m+i_1)$ $O$-move and let $m_\beta$ be the index-$(m+i_2)$ $O$-move. Without loss of generality, suppose $m_\alpha$ occurs before $m_\beta$.

Since there are no $+x$, $+y$, $-x$, or $-y$ face moves between $m_\alpha$ and $m_\beta$, we know that the rotations of these faces remain the same at the times of both moves. Applying the results from Step 1, either $m_\alpha$ and $m_\beta$ are both counterclockwise turns of $z$ slices $(m+i_1)$ and $(m+i_2)$ or $m_\alpha$ and $m_\beta$ are counterclockwise turns of $z$ slices $-(m+i_1)$ and $-(m+i_2)$.

Configuration $C'$ can be obtained from configuration $C_b$ by applying transformation $m_{k'} \circ \cdots \circ m_2 \circ m_1 \circ a_1$. Since $a_1$ consists of some number of $x$-slice turns, we can represent this transformation as a sequence of moves. We know that since the stickers are $(m+i_1, m+i_2, j)$-paired in $C_b$, they must remain $(m+i_1, m+i_2, j)$-paired until immediately before the first index-$(m+i_1)$ or index-$(m+i_2)$ move: $m_\alpha$. We will show below that because of our assumption, the stickers will also end up $(m+i_1, m+i_2, j)$-paired immediately after $m_\beta$ in all cases.

The first case is that the stickers are on face $+z$ or face $-z$ immediately before $m_\alpha$. In that case, move $m_\alpha$, which is a $z$ move, will not affect either sticker. As a result, the two stickers will remain $(m+i_1, m+i_2, j)$-paired after $m_\alpha$. With the exception of $m_\alpha$ and $m_\beta$, the only moves in $m_1, \ldots, m_{k'}$ which move these two stickers between faces are index-$j$ moves. But by assumption, there are no index-$j$ moves occurring between $m_\alpha$ and $m_\beta$ which affect the stickers. Thus, immediately before $m_\beta$, the two stickers will still be $(m+i_1, m+i_2, j)$-paired and will still be on face $+z$ or face $-z$. As a result, $m_\beta$ will also not affect the stickers. Therefore, they will remain $(m+i_1, m+i_2, j)$-paired immediately after $m_\beta$.

The second case is that the stickers are on face $+x$, $+y$, $-x$, or $-y$ immediately before $m_\alpha$. Between $m_\alpha$ and $m_\beta$, the only moves in $m_1, \ldots, m_{k'}$ which move these two stickers are face moves and index-$j$ moves. No matter how $m_\alpha$ affects the two stickers, they will both remain on the four faces $+x$, $+y$, $-x$, and $-y$. By assumption, neither sticker will be moved by an index-$j$ move until $m_\beta$. Since the stickers are on faces $+x$, $+y$, $-x$, or $-y$, the assumption tells us that neither sticker will be moved by a face move until $m_\beta$ either. Thus, the next move after $m_\alpha$ which affects either sticker is $m_\beta$. 

Note that immediately before $m_\alpha$, the first sticker has $z$ coordinate $(m+i_1)$ if and only if the second sticker has $z$ coordinate $(m+i_2)$. Similarly, the first sticker has $z$ coordinate $-(m+i_1)$ if and only if the second sticker has $z$ coordinate $-(m+i_2)$. This is simply a consequence of the definition of $(m+i_1, m+i_2, j)$-paired stickers. We know that $m_\alpha$ and $m_\beta$ together either rotate $z$ slices $(m+i_1)$ and $(m+i_2)$ counterclockwise one turn or rotate $z$ slices $-(m+i_1)$ and $-(m+i_2)$ counterclockwise one turn. Thus in any case we see that over the course of the moves from $m_\alpha$ to $m_\beta$, either both stickers are rotated counterclockwise one turn around the $z$ axis or neither is. As far as the two stickers are concerned, that is equivalent to a rotation of the entire Rubik's Cube. That means that in this case as well, the two stickers remain $(m+i_1, m+i_2, j)$-paired immediately after $m_\beta$.

We see that in both cases the two stickers remain $(m+i_1, m+i_2, j)$-paired immediately after $m_\beta$. Since there are no index-$(m+i_1)$ or index-$(m+i_2)$ moves after $m_\beta$, we know that the two stickers will continue to be $(m+i_1, m+i_2, j)$-paired until $C'$. 
\end{proof}

\subsection{Step 3: classifying possible moves with a counting argument}

As stated in the proof outline, this step uses a counting argument to restrict the possible moves in $m_1, \ldots, m_{k'}$. 

To begin, we show the following:

\begin{lemma}
There must be a $J$-move or two vertical face moves between each pair of $O$-moves in $m_1, \ldots, m_{k'}$. 
\end{lemma}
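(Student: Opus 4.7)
The plan is to argue by contradiction. Suppose some consecutive pair of $O$-moves $m_\alpha$ (occurring first, an index-$(m+i_1)$ $O$-move) and $m_\beta$ (occurring second, an index-$(m+i_2)$ $O$-move) has strictly fewer than $1$ $J$-move and strictly fewer than $2$ vertical face moves between them; the only possibilities are $0$ $J$-moves together with either $0$ or $1$ vertical face move. Since the bitstrings $l_1, \ldots, l_n$ are all distinct, I can fix some $j \in \{1, \ldots, m\}$ at which $l_{i_1}$ and $l_{i_2}$ disagree, and I will split on the vertical-face-move count.

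In the $0$-vertical-face-move subcase, I would apply Lemma~\ref{lemma:cube_complex_pairing_result} directly to the pair of stickers at $(x, y) = (j, -(m+i_1))$ and $(j, -(m+i_2))$ on the $+z$ face of $C_b$. These are $(m+i_1, m+i_2, j)$-paired---both lie in the same $+z$ quadrant, both are on the $x = j$ slice, and their face coordinates have the required form---and Theorem~\ref{thm:cube_coloring} forces them to have different colors, since exactly one of them (the one whose $l$-bitstring has bit $j$ equal to $1$) is red and the other is white. The lemma's hypothesis holds because ``no $J$-moves'' covers ``no index-$j$ moves affecting the stickers.'' Hence the pair remains $(m+i_1, m+i_2, j)$-paired in $C'$, forcing the two stickers onto a common face of $C'$. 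But every face of $C'$ is monochromatic, contradicting the color mismatch.

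In the $1$-vertical-face-move subcase, I would instead invoke Step 1's timing constraint: at the moment of $m_\alpha$, each of the four vertical faces $\pm x, \pm y$ has total rotation $0^\circ$ or $180^\circ$ (uniformly across the four faces, determined by the sign of $m_\alpha$'s slice), and likewise at $m_\beta$. Thus the net change in each vertical face's rotation between $m_\alpha$ and $m_\beta$ is a single value in $\{0^\circ, 180^\circ\}$ that is identical across all four faces. Only vertical face moves alter vertical-face rotations, so with exactly one such move in the interval, three of the four faces change by $0^\circ$ while the remaining face changes by $90^\circ$ or $180^\circ$. The three $0^\circ$ changes force the required uniform change to be $0^\circ$, but then the nonzero change on the fourth face yields a contradiction.

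I expect the first subcase to be the main obstacle, since one must identify a suitable paired pair of differently-colored stickers and carefully verify the hypothesis of Lemma~\ref{lemma:cube_complex_pairing_result}; the second subcase then reduces to a clean face-rotation count. A small bookkeeping note: one should check that $T$-moves, $M$-moves, horizontal face moves, and all other non-vertical-face moves neither affect vertical-face rotations (for the second subcase) nor unpair the chosen $+z$-face pair (for the first subcase, via Lemma~\ref{lemma:cube_simple_pairing_result}), but both facts are immediate from the definitions.
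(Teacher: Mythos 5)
Your proposal is correct and follows essentially the same route as the paper: it invokes the contrapositive of Lemma~\ref{lemma:cube_complex_pairing_result} on an $(m+i_1,m+i_2,j)$-paired, differently-colored sticker pair (you pick the pair on the $+z$ face where the paper picks the one on the $+y$ face, but Theorem~\ref{thm:cube_coloring} supports either choice), and then rules out a single vertical face move via Step 1's constraint that all four vertical faces have uniform rotation $0^\circ$ or $180^\circ$ at the times of both $O$-moves. The contradiction framing with an explicit two-way case split is just a cosmetic repackaging of the paper's direct argument.
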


\begin{proof}
Consider any pair of $O$-moves $m_\alpha$ and $m_\beta$ which occur in that order. Suppose $m_\alpha$ is an index-$(m+i_1)$ move and $m_\beta$ is an index-$(m+i_2)$ move. Let $j$ be an index such that $(l_{i_1})_j$ differs from $(l_{i_2})_j$. 

Notice that the $(m+i_1, m+i_2, j)$-paired stickers on face $+y$ with $(x,z)$ coordinates $(j, m+i_1)$ and $(j, m+i_2)$ have different colors in $C_b$ (see Theorem~\ref{thm:cube_coloring}). Therefore they cannot be $(m+i_1, m+i_2, j)$-paired in $C'$. By the contraposative of Lemma~\ref{lemma:cube_complex_pairing_result}, we see that at least one index-$j$ move affecting one of these stickers or at least one face move of faces $+x$, $+y$, $-x$, or $-y$ must occur between $m_\alpha$ and $m_\beta$.

We know from the results of Step 1 that at the times of $m_\alpha$ and $m_\beta$, the four faces $\pm x$ and $\pm y$ must either each have total rotation of $0$ or total rotation of $180^\circ$. Thus between the two moves, either the rotations of all four faces must change, or the rotation of any face that changes must also change back. Therefore it is impossible for exactly one face move of faces $+x$, $+y$, $-x$, and $-y$ to occur between these two moves. 

In other words, we have shown that at least one $J$-move or at least two vertical face moves must occur between $m_\alpha$ and $m_\beta$.
\end{proof}

\begin{corollary}
\label{corollary:cube_J_move_between}
If 
\begin{itemize}
\item $m_\alpha$ and $m_\beta$ are index-$(m+i_1)$ and index-$(m+i_2)$  $O$-moves,
\item $l_{i_1}$ and $l_{i_2}$ differ in bit $j$, and 
\item there are no vertical face moves between $m_\alpha$ and $m_\beta$,
\end{itemize}
then there must be an index-$j$ $J$-move between $m_\alpha$ and $m_\beta$.
\end{corollary}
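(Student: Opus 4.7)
The plan is to observe that this corollary is essentially what the preceding lemma already establishes: examining the proof of that lemma, the $J$-move whose existence is shown is specifically an index-$j$ move for the index $j$ at which $l_{i_1}$ and $l_{i_2}$ differ. So I would recast that argument with the specific $j$ from the corollary's hypothesis kept explicit throughout.

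More concretely, I would begin by identifying a canonical witness pair: the two stickers on the $+y$ face of $C_b$ with $(x,z)$ coordinates $(j, m+i_1)$ and $(j, m+i_2)$. By the definition of paired stickers, these are $(m+i_1, m+i_2, j)$-paired in $C_b$. Using Theorem~\ref{thm:cube_coloring}, and using that $l_{i_1}$ and $l_{i_2}$ differ in bit $j$, these two stickers have different colors in $C_b$ (one red, one green). Because $C'$ is solved and therefore has monochromatic faces, the two witness stickers cannot sit on a common face of $C'$; in particular they cannot be $(m+i_1, m+i_2, j)$-paired in $C'$.

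Now I would apply the contrapositive of Lemma~\ref{lemma:cube_complex_pairing_result}. Since the pair $(m+i_1,m+i_2,j)$ is broken by the move sequence, at least one of the following must occur strictly between $m_\alpha$ and $m_\beta$: a face move of one of $+x, +y, -x, -y$, or an index-$j$ move that actually moves one of the witness stickers. The hypothesis of the corollary rules out any vertical face move in that interval, so an index-$j$ move must occur between $m_\alpha$ and $m_\beta$. Finally, since $j$ indexes a bit of $l_{i_1}$ (and $l_{i_2}$), we have $j \in \{1, \ldots, m\} = J$, so this index-$j$ move is by definition a $J$-move, which is exactly what the corollary asserts.

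The only step that requires any care is invoking Lemma~\ref{lemma:cube_complex_pairing_result} cleanly: one must confirm that its hypotheses ($i_1, i_2 \in O$ and $j \in \{1,\ldots,m\}$) are satisfied by the corollary's setup, which they are, and then convert "no vertical face move and no index-$j$ move moves one of these stickers between $m_\alpha$ and $m_\beta$" into "the pair remains $(m+i_1,m+i_2,j)$-paired in $C'$"—a direct contrapositive. There is no genuine obstacle; the work was already done in the preceding lemma, and this corollary simply records the sharpened conclusion in the form needed by Steps 4 and 5 of the proof outline.
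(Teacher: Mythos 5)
Your proposal is correct and follows exactly the argument the paper uses: the paper's proof of this corollary simply points back to the relevant case of the preceding lemma, whose proof is precisely your argument (the differently-colored $(m+i_1,m+i_2,j)$-paired witness stickers on the $+y$ face of $C_b$, the contrapositive of Lemma~\ref{lemma:cube_complex_pairing_result}, and the hypothesis excluding vertical face moves). Nothing is missing.
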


\begin{proof}
This follows directly from one of the cases in the previous proof.
\end{proof}

With that done, we can count the number of moves of each type as follows:

There is exactly one $O$-move for each $i \in O$ (the sole index-$(m+i)$ move), so therefore $c_{O} = |O|$. 

There are exactly two $T$-move for each $i \in T$ (the two index-$(m+i)$ moves), so therefore $c_{T} = 2|T|$. 

There are at least three $M$-moves for each $i \in M$ (the index-$(m+i)$ moves), so therefore $c_{M} \ge 3|M|$. 

Consider the $O$-moves in order. Between the $c_{O} = |O|$ different $O$-moves there are $|O| - 1$ gaps. As shown above, each such gap must contain either at least one $J$-move or at least two vertical face moves. Therefore the number of $J$-moves plus half the number of vertical face moves upper-bounds the number of gaps: $c_{J} + \frac{1}{2}c_{vertical} \ge |O| - 1$.

Finally, $c_{other} \ge 0$.

Putting this together, we see the following:
\begin{align*}
k' &= c_{O} + c_{T} + c_{M} + c_{vertical} + c_{J} + c_{other} \\
&= c_{O} + c_{T} + c_{M} + \left(c_{J} + \frac{1}{2}c_{vertical}\right) + \frac{1}{2}c_{vertical} + c_{other} \\
&\ge |O| + 2|T| + 3|M| + (|O| - 1) + \frac{1}{2}c_{vertical} \\
&= 2|O| + 2|T| + 3|M| - 1 + \frac{1}{2}c_{vertical} \\
&= 2n - 1 + |M| + \frac{1}{2}c_{vertical} \\
&= k + |M| + \frac{1}{2}c_{vertical} \\
&\ge k
\end{align*}

The above shows that $k' \ge k$, but we also know that $k' \le k$. Thus, equality must hold at each step. In particular, $c_{M}$ must equal $3|M|$, $\left(c_{J} + \frac{1}{2}c_{vertical}\right)$ must equal $|O| - 1$, $c_{other}$ must equal $0$, and $|M| + \frac{1}{2}c_{vertical}$ must equal $0$.

Since $|M| + \frac{1}{2}c_{vertical} = 0$, we can conclude that both $|M|$ and $c_{vertical}$ are equal to $0$. Thus $c_{M} = 3|M| = 0$ also holds. All together, this shows that $c_{O} = |O|$, $c_{T} = 2|T|$, $c_{J} = |O| - 1$, and $c_{M} = c_{vertical} = c_{other} = 0$.

\subsection{Step 4: further restricting possible move types}

As stated in the proof outline, we will prove the following list of results in this section:

\begin{itemize}
\item Since there are no face moves, the index-$(m+i)$ $O$-move for $i \in O$ can only be a counterclockwise $z$ turn of slice $(m+i)$. Similarly the index-$(m+i)$ $T$-moves for $i \in T$ are a clockwise $z$ turn and a $z$ flip of slice $(m+i)$.
\item Consider the elements $i \in O$ in the order in which their $O$-moves occur. We show that if $i_1$ is immediately before $i_2$ in this order, then it must be the case that $l_{i_1}$ differs from $l_{i_2}$ in exactly one bit.
\item Furthermore, the one $J$-move between two consecutive $O$-moves of slices $m+i_1$ and $m+i_2$ must rotate the $x$ slice whose index is the unique index $j$ at which strings $l_{i_1}$ and $l_{i_2}$ differ.
\end{itemize}

\begin{lemma}
If $i \in O$, the single index-$(m+i)$ move in $m_1, \ldots, m_{k'}$ is a counterclockwise $z$ turn of slice $(m+i)$.
\end{lemma}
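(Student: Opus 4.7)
The plan is to combine Lemma~\ref{lemma:cube_O_move_types} with the face-rotation constraint from Step~1 (the fourth bullet) and the bound $c_{vertical}=0$ derived in Step~3. Lemma~\ref{lemma:cube_O_move_types} already tells us that the sole index-$(m+i)$ move is a counterclockwise $z$ turn, but since ``index-$(m+i)$'' only constrains the absolute value of the slice's index, that turn could a priori rotate either the $z=m+i$ slice or the $z=-(m+i)$ slice. The remaining task is to rule out the negative option.

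First I would observe that throughout the solution each of the four vertical faces ($+x,+y,-x,-y$) has rotation $0$. The transformation $t = a_1 \circ b_1 \circ \cdots \circ b_n$ is built from moves $x_j$ with $j \le m$ and $z_{m+i}$ with $1 \le i \le n$, all conjugated by further $x_j$ moves with $j \le m$; every slice index appearing has absolute value at most $m+n$. Since the face slices have index $\pm s/2 = \pm(m+3n)$, $t$ contains no face moves whatsoever, so $C_t$ inherits from $C_0$ the property that every face has rotation $0$. The counting argument of Step~3 established $c_{vertical}=0$, so the sequence $m_1,\ldots,m_{k'}$ also contains no vertical face move. Hence the rotations of $+x,+y,-x,-y$ remain $0$ from $C_t$ all the way to $C'$, and in particular at the moment the lone index-$(m+i)$ move is performed.

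Finally I would apply the fourth bullet of Step~1, which states that any move of the $z=-(m+i)$ slice must occur at a moment when each of the four vertical faces has rotation $180^\circ$. Our observation rules this out, so the sole index-$(m+i)$ move must rotate the $z=m+i$ slice. Combined with Lemma~\ref{lemma:cube_O_move_types}, it is a counterclockwise $z$ turn of slice $(m+i)$, exactly as claimed. I do not anticipate any substantive obstacle; the proof is a short application of previously established facts, the only minor bookkeeping being the comparison of $m+n$ with $s/2 = m+3n$ to confirm that $t$ itself contains no face moves.
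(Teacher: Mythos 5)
Your proof is correct and follows essentially the same route as the paper: combine the Step-1 characterization (the sole index-$(m+i)$ move is a counterclockwise $z$ turn whose slice sign is determined by the vertical faces' rotation) with the Step-3 conclusion that there are no (vertical) face moves, so the rotation is always $0$ and the negative-index option is excluded. The extra check that $t$ itself contains no face moves is harmless bookkeeping that the paper leaves implicit.
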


\begin{proof}
We have already seen that the move in question must be either a counterclockwise $z$ turn of slice $(m+i)$ or a counterclockwise $z$ turn of slice $-(m+i)$. Furthermore, the slice being rotated is slice $(m+i)$ if at the time of the move each vertical face ($\pm x$ and $\pm y$) has the total rotation $0$. We have already seen, however, that none of the moves in $m_1, \ldots, m_{k'}$ are face moves. Thus the total rotation of each face is always $0$, and as desired, the move in question is a counterclockwise $z$ turn of slice $(m+i)$.
\end{proof}

\begin{lemma}
If $i \in T$, the two index-$(m+i)$ moves in $m_1, \ldots, m_{k'}$ are a clockwise $z$ turn of slice $(m+i)$ and a $z$ flip of slice $(m+i)$.
\end{lemma}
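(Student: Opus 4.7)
The plan is to combine two facts already established in Steps 1 and 3. From Lemma~\ref{lemma:cube_T_move_types} in Step 1, we already know that whenever $i\in T$ the two index-$(m+i)$ moves in $m_1,\ldots,m_{k'}$ consist of one clockwise $z$ turn and one $z$ flip, in some order. What remains to pin down is which $z$ slice each of these two moves rotates: slice $(m+i)$ or slice $-(m+i)$. That is precisely the content of the lemma.

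First I would invoke the other part of Step 1, which says that for $i \in O\cup T$, any move that rotates $z$ slice $(m+i)$ must occur at a moment when each of the four vertical faces $+x,+y,-x,-y$ has total rotation $0$, and any move rotating slice $-(m+i)$ must occur at a moment when each of these vertical faces has total rotation $180^\circ$. Thus, to determine the slice of each of the two index-$(m+i)$ moves, it suffices to track the rotation state of the vertical faces at the times those moves occur.

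Next I would use the key counting conclusion of Step 3, namely $c_{vertical}=0$: no vertical face moves appear in $m_1,\ldots,m_{k'}$ at all. Consequently every vertical face retains rotation $0$ throughout the entire move sequence, and in particular at the two times when the index-$(m+i)$ moves are performed. By the characterization from Step 1, this forces both index-$(m+i)$ moves to act on slice $(m+i)$ rather than slice $-(m+i)$.

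Combining the two observations completes the proof: the pair of index-$(m+i)$ moves for $i\in T$ must consist of a clockwise $z$ turn and a $z$ flip (by Lemma~\ref{lemma:cube_T_move_types}) and both must rotate slice $(m+i)$ (by the vertical-face argument above), which is exactly the lemma's statement. There is essentially no hard step here, since the real work was done in Steps 1 and 3; this lemma is a short deduction whose only role is to name the slice on which the $T$-moves act so that later steps (in particular Step 5) can reason cleanly about their effect on paired stickers.
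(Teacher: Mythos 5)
Your proposal is correct and follows essentially the same route as the paper: the paper proves the corresponding $O$-move lemma by combining the Step‑1 characterization (move type, plus the face-rotation condition distinguishing slice $(m+i)$ from slice $-(m+i)$) with the Step‑3 conclusion $c_{vertical}=0$, and then dispatches the $T$-move case as ``analogous.'' Your write-up is just a slightly more explicit version of that same deduction.
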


\begin{proof}
This proof follows analagously to the previous.
\end{proof}

\begin{lemma}
Suppose that $m_\alpha$ and $m_\beta$ are two $O$-moves of slices $(m+i_1)$ and $(m+i_2)$ with no other $O$-moves between them. It must be the case that $l_{i_1}$ differs from $l_{i_2}$ in exactly one bit.
\end{lemma}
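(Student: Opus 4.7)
The plan is to combine the exact count $c_{J} = |O| - 1$ and the equality $c_{vertical} = 0$ established in Step 3 with Corollary~\ref{corollary:cube_J_move_between}. That corollary yields a distinct index-$j$ $J$-move for every bit $j$ at which two $l_i$'s (from a pair of $O$-moves) differ, so the total $J$-move budget is tight enough to force every adjacent pair of bitstrings, in the order in which $O$-moves occur, to have Hamming distance exactly one.

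Concretely, I would first list the $O$-moves in the order they appear in $m_1, \ldots, m_{k'}$, say $m_{\alpha_1}, \ldots, m_{\alpha_{|O|}}$, where $m_{\alpha_k}$ acts on slice $m + i_{p_k}$ with $i_{p_k} \in O$. Because Step 3 gives $c_{vertical} = 0$, there are no vertical face moves anywhere in the solution, so Corollary~\ref{corollary:cube_J_move_between} applies between every consecutive pair of $O$-moves. For each bit $j \in \{1,\ldots,m\}$ at which $l_{i_{p_k}}$ and $l_{i_{p_{k+1}}}$ differ, the corollary produces an index-$j$ $J$-move in the gap between $m_{\alpha_k}$ and $m_{\alpha_{k+1}}$; two such moves arising from different differing bits carry different slice indices and are therefore genuinely distinct. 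Hence the $k$th gap contains at least $d_k$ $J$-moves, where $d_k$ denotes the Hamming distance between $l_{i_{p_k}}$ and $l_{i_{p_{k+1}}}$.

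Next I would sum these lower bounds across the $|O|-1$ gaps and compare to the exact total from Step 3:
$$|O| - 1 \;=\; c_{J} \;\ge\; \sum_{k=1}^{|O|-1} d_k.$$
Since the bitstrings $l_1, \ldots, l_n$ are distinct (they are the vertices of the input cubical graph), we have $d_k \ge 1$ for every $k$. These $|O|-1$ unit lower bounds already saturate the budget, so equality must hold throughout and $d_k = 1$ for every $k$. Because the lemma's $m_\alpha$ and $m_\beta$ are $O$-moves with no other $O$-move between them, they are some consecutive pair $m_{\alpha_k}, m_{\alpha_{k+1}}$ in this list, so $l_{i_1}$ and $l_{i_2}$ differ in exactly one bit, as required. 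There is no real obstacle here; the only point to be careful about is to note explicitly that different differing bits contribute $J$-moves with different slice indices, which is what pushes the per-gap lower bound from $1$ up to $d_k$ and makes the counting tight.
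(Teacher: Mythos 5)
Your proof is correct and uses essentially the same argument as the paper: both rely on the Step 3 equalities $c_J = |O|-1$ and $c_{vertical}=0$, on Corollary~\ref{corollary:cube_J_move_between} supplying a distinct index-$j$ $J$-move for each differing bit, and on distinctness of the $l_i$'s for the lower bound. The only cosmetic difference is that you sum the Hamming distances over all gaps at once, whereas the paper first concludes that each gap contains exactly one $J$-move and then applies the corollary locally.
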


\begin{proof}
We have already seen that there must be at least one $J$-move between $m_\alpha$ and $m_\beta$. In fact, there has to be exactly one $J$-move in each of the $|O| - 1$ ``gaps'' between $O$-moves, so there can only be one $J$-move between $m_\alpha$ and $m_\beta$.

We saw in Corollary~\ref{corollary:cube_J_move_between}, however, that if $l_{i_1}$ and $l_{i_2}$ differ in bit $j$, then there must be an index-$j$ $J$-move between $m_\alpha$ and $m_\beta$. As desired, we conclude that $l_{i_1}$ and $l_{i_2}$ must differ in at most one bit $j$. Since the bitstrings are all distinct, this is exactly what we were trying to show.
\end{proof}

\begin{lemma}
Suppose that $m_\alpha$ and $m_\beta$ are two $O$-moves of slices $(m+i_1)$ and $(m+i_2)$ with no other $O$-moves between them. If $l_{i_1}$ differs from $l_{i_2}$ in bit $j$, then it must be the case that the one $J$-move between $m_\alpha$ and $m_\beta$ must rotate the $x$ slice with index $j$.
\end{lemma}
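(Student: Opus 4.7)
The plan is to apply Lemma~\ref{lemma:cube_complex_pairing_result} to two carefully chosen $(m+i_1, m+i_2, j)$-paired pairs of stickers simultaneously, and then argue that only a rotation of the $x=j$ slice is consistent with the resulting constraints. We already know from Corollary~\ref{corollary:cube_J_move_between} (and from the fact that there is exactly one $J$-move between $m_\alpha$ and $m_\beta$) that $m_\gamma$ is an index-$j$ move; what remains is to pin down both its direction and its sign.

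I first fix Pair~A to be the two stickers on the $+y$ face at $(x,z)=(j,m+i_1)$ and $(j,m+i_2)$, and Pair~B to be the two stickers on the $+z$ face at $(x,y)=(j,-(m+i_1))$ and $(j,-(m+i_2))$. Both pairs are $(m+i_1,m+i_2,j)$-paired in $C_b$. Since $l_{i_1}$ and $l_{i_2}$ disagree at bit $j$, Theorem~\ref{thm:cube_coloring} tells us that Pair~A consists of one red and one green sticker, and Pair~B of one red and one white sticker. Because $C'$ is solved with the same face colors as $C_0$, differently-colored stickers must lie on different faces in $C'$, so neither pair remains $(m+i_1,m+i_2,j)$-paired in $C'$. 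Since Step~3 ruled out any vertical face moves, the contrapositive of Lemma~\ref{lemma:cube_complex_pairing_result} forces $m_\gamma$ to affect at least one sticker of Pair~A \emph{and} at least one sticker of Pair~B at the moment it is applied.

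I then track the four stickers through $a_1, m_1, \ldots, m_{\gamma-1}$. Each starts on the $x=j$ slice, and throughout this prefix each sticker's $|y|$ and $|z|$ stay confined to $\{s/2, m+i_1, m+i_2\}$, none of which has absolute value in $J=\{1,\ldots,m\}$; consequently the only moves in $a_1, m_1, \ldots, m_{\gamma-1}$ that can act on any of these four stickers are the $x_j$ rotations (from $a_1$ and from $x_j$-valued $J$-moves in earlier gaps) together with the single move $m_\alpha$. Letting $R\in\{0,1,2,3\}$ record the net number of clockwise quarter-turns of the $x=j$ slice accumulated before $m_\alpha$, a short four-case check shows that at the moment $m_\gamma$ is applied every sticker still has $x$-coordinate $j$, except that $m_\alpha$ may have relocated whichever single sticker happened to lie on the $z=m+i_1$ slice just before $m_\alpha$, sending it to $(-s/2, j, m+i_1)$ on the $-x$ face; moreover this displaced sticker, when it exists, is a member of exactly one of Pair~A, Pair~B (of Pair~A when $R=0$ and of Pair~B when $R\equiv 1\pmod 4$).

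To finish, I enumerate the five remaining index-$j$ slices. The $x=-j$ slice contains none of the four stickers; each of the slices $y=\pm j$ and $z=\pm j$ contains at most the displaced sticker on the $-x$ face (which has $y=j$, not $\pm z = \pm j$ nor $y=-j$). If $m_\gamma$ were a rotation of any one of these five slices, then whichever of Pair~A, Pair~B does not contain the displaced sticker would have no sticker affected by $m_\gamma$, contradicting the conclusion forced in the second paragraph. The only remaining possibility is that $m_\gamma$ rotates the $x=j$ slice, which is precisely the claim. The main obstacle is that the rotation $R$ is a priori unknown (earlier $J$-moves in other gaps could conceivably be $x_j^{\pm 1}$ or $x_j^2$), but the observation that only $x_j$ rotations can act on the four stickers confines $R$ to a cyclic group of order four and reduces the analysis to the short case split above.
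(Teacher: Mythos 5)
Your proof is correct and follows essentially the same route as the paper: both arguments hinge on the same two differently-colored $(m+i_1,m+i_2,j)$-paired pairs on the $+y$ and $+z$ faces of $C_b$, and both track how the accumulated rotation of the $x=j$ slice positions those four stickers before $m_\alpha$. The only divergence is the finish --- the paper notes that whichever pair lies on $\pm z$ is untouched by $m_\alpha$ and so must be broken by $m_\beta$, forcing the lone $J$-move to lift that pair off the $\pm z$ faces, whereas you force the $J$-move to touch a sticker of each pair and verify that only the $x=j$ slice meets both --- and these finishes are interchangeable.
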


\begin{proof}
We know that the $J$-move in question must rotate a slice with index $\pm j$. We want to show that the move rotates the $x$ slice with index $j$ in particular.

Consider the pair of stickers in $C_b$ at $(x, z)$ coordinates $(j, m+i_1)$ and $(j, m+i_2)$ on the $+y$ face and also the pair of stickers in $C_b$ at $(x, y)$ coordinates $(j, -(m+i_1))$ and $(j, -(m+i_2))$ on the $+z$ face. These two pairs of stickers are both $(m+i_1, m+i_2, j)$-paired. Furthermore, each of these two pairs contain stickers of two different colors (see Theorem~\ref{thm:cube_coloring}). 

To transition from $C_b$ to $C'$, we apply transformation $m_k \circ \cdots \circ m_1 \circ a_1$. In other words, we apply a sequence of moves starting with some number of $x$ turns (making up $a_1$) and then proceeding through move sequence $m_1, \ldots, m_k$. Because the solution contains no face moves, the only moves in this list before $m_\alpha$ which affect the four stickers in question are rotations of the $x$ slice with index $j$. No matter how much or how little this slice rotates, one of the two pairs of stickers will be on face $+z$ or $-z$.

Consider that pair. Move $m_\alpha$ will be a counterclockwise $z$ turn and therefore will not affect either sticker in the pair. That pair of stickers cannot be $(m+i_1, m+i_2, j)$-paired in $C'$ since they have different colors. Since $m_\beta$ is the only other index-$(m+i_1)$ or index-$(m+i_2)$ move, we can conclude from Lemma~\ref{lemma:cube_simple_pairing_result} that one of the two stickers must be affected by $m_\beta$. In order for that to be the case, however, the sole $J$-move between $m_\alpha$ and $m_\beta$ must move the stickers in this pair off of the $\pm z$ face. Notice that the $J$-move between $m_\alpha$ and $m_\beta$ must rotate a slice with index $\pm j$. Since there are no face moves in the solution, the only option which meets the requirements is to have the $J$-move rotate the $x$ slice with index $j$.
\end{proof}

\subsection{Step 5: showing $T$ is empty}

As stated in the proof outline, the purpose of this step is to show that $T$ is empty. That on its own is sufficient to complete the proof.

\begin{lemma}
\label{lemma:cube_O_move_pattern}
When applying the move sequence $a_1, m_1, \ldots, m_k$ to $C_b$, the stickers with $z = i$ and $1 \le x \le n$ of face $+y$ for $i \in O$ immediately after the $O$-move of slice $(m + i)$ are the ones which started in the corresponding positions $z =i$ and $1 \le -y \le n$ of the face $+x$ in $C_b$.
\end{lemma}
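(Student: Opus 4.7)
The plan is to track one sticker at a time through the partial move sequence and observe that nothing moves it until the specific $O$-move of slice $(m+i)$, which then carries it to precisely the claimed destination. Fix $i \in O$ and an integer $y_0 \in \{-n, \ldots, -1\}$, and let $\sigma$ be the sticker that sits at $(y,z) = (y_0, m+i)$ on the $+x$ face of $C_b$. This sticker is attached to the cubie at coordinates $(s/2, y_0, m+i)$, which belongs to exactly three slices: the $+x$ face slice (index $s/2$), the $y = y_0$ slice, and the $z = m+i$ slice, so only moves of one of these three slices can affect $\sigma$.

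Next I would catalogue every move occurring in the prefix of $a_1, m_1, \ldots, m_{k'}$ up to and including the $O$-move of slice $(m+i)$. By construction, $a_1$ contains only $x$-slice rotations with indices in $\{1, \ldots, m\}$. From Step~3 we have $c_{vertical} = c_{other} = 0$, so $m_1, \ldots, m_{k'}$ contains no face moves and no $y$-slice moves; by Steps~3 and~4 its remaining moves come in three flavors only: $O$-moves (counterclockwise $z$-turns of slice $m+i'$ for $i' \in O$), $T$-moves ($z$-slice $m+i'$ moves for $i' \in T$), and $J$-moves ($x$-slice rotations of slices in $\{1,\ldots,m\}$). Comparing this catalogue to the three slices through $\sigma$'s cubie: no $x$-slice move has index $s/2 = 3n + m > m$, so the $+x$ face slice is never rotated; there are no $y$-slice moves at all; and since $O \cap T = \emptyset$, the only $z$-slice move rotating slice $m+i$ during this prefix is the $O$-move of slice $(m+i)$ itself.

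Consequently, $\sigma$'s cubie is undisturbed until that single $O$-move is applied. That move is a counterclockwise turn of slice $z = m+i$, and by the convention that counterclockwise $z$-rotation sends $(x,y) \mapsto (-y, x)$, the cubie moves from $(s/2, y_0, m+i)$ to $(-y_0, s/2, m+i)$ while $\sigma$ transitions from the $+x$ face of the cubie to its $+y$ face. Hence, immediately after the $O$-move of slice $(m+i)$, the sticker $\sigma$ sits at $(x, z) = (-y_0, m+i)$ on the $+y$ face of the cube. As $y_0$ ranges over $\{-n, \ldots, -1\}$, the value $-y_0$ ranges over $\{1, \ldots, n\}$, so the $n$ starting $+x$-face positions with $z = m+i$ and $1 \le -y \le n$ map bijectively onto the $n$ target $+y$-face positions with $z = m+i$ and $1 \le x \le n$; since there are exactly $n$ such target positions, these must be all the stickers sitting there.

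The only real obstacle is assembling the restricted move catalogue correctly from Steps~1--4 (in particular, using $c_{vertical} = 0$ and the disjointness of $O$ and $T$ to rule out every move other than the targeted $O$-move). Once that catalogue is in hand, the remainder is just the direct bookkeeping of a single counterclockwise $z$-turn.
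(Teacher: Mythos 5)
Your proof is correct and follows essentially the same route as the paper's: both arguments use the Step 3--4 move catalogue (no face moves, no $M$- or other moves, $J$-moves are low-index $x$-slice turns, $O$- and $T$-moves are $z$-slice turns of distinct indices) to conclude that nothing disturbs the relevant stickers before the $O$-move of slice $(m+i)$, which then carries the row from face $+x$ to face $+y$ as a counterclockwise $z$ turn. The only cosmetic difference is that you trace the stickers forward from $C_b$ while the paper traces them backward from the moment after the $O$-move; your explicit bijectivity remark correctly bridges the two phrasings.
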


\begin{proof}
Let $m_\alpha$ be the $O$-move of slice $(m + i)$.

Consider the stickers in positions $z = i$ and $1 \le x \le n$ of face $+y$ for $i \in O$ immediately after the move $m_\alpha$. These stickers were moved there by $m_\alpha$ from positions $z =i$ and $1 \le -y \le n$ of the face $+x$. 

All $O$-moves and $T$-moves prior to $m_\alpha$ affect $z$ slices whose indices are not $i$. All $J$-moves and all moves comprising $a_1$ affect non-face $x$ slices and therefore don't affect the $+x$ face. As a result, no move in $a_1, m_1, \ldots, m_k$ before $m_\alpha$ affects the stickers with $z = i$ and $-n \le y \le -1$ on the $+x$ face. Thus, the stickers in positions $z =i$ and $1 \le -y \le n$ of the face $+x$ immediately before $m_\alpha$ are the same as the stickers in those positions in configuration $C_b$.

As desired, the stickers with $z = i$ and $1 \le x \le n$ of face $+y$ for $i \in O$ immediately after the move $m_\alpha$ are the ones which started in the corresponding positions $z =i$ and $1 \le -y \le n$ of the face $+x$ in $C_b$.
\end{proof}

\begin{lemma}
\label{lemma:cube_T_move_pattern}
When applying the move sequence $a_1, m_1, \ldots, m_k$ to $C_b$, the stickers with $z = i$ and $1 \le x \le n$ of face $+y$ for $i \in T$ after the second $T$-move rotating a slice with index $(m+i)$ are the ones which started in the corresponding positions $z =i$ and $1 \le -y \le n$ of the face $+x$ in $C_b$.
\end{lemma}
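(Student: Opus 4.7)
The plan is to track each of the $n$ stickers starting at position $(s/2,-y_0,m+i)$ on the $+x$ face in $C_b$ (for $y_0 \in \{1,\ldots,n\}$) through the move sequence $a_1, m_1, \ldots, m_k$, and to show that each such sticker ends up at $(y_0, s/2, m+i)$ on the $+y$ face immediately after the second $T$-move of slice $(m+i)$. Since these $n$ tracked stickers land at $n$ distinct positions $(x,s/2,m+i)$ with $x \in \{1,\ldots,n\}$, and the $+y$ face contains exactly $n$ sticker positions with $z=m+i$ and $1\le x\le n$, this will exhaust all target positions and give the conclusion.

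Let $m_\alpha$ and $m_\beta$ be the two $T$-moves of slice $(m+i)$ with $\alpha<\beta$. By Step 4, these are (in some order) a clockwise turn of the $z$-slice $(m+i)$ and a flip of that same slice; moreover $m_1,\ldots,m_{k'}$ contains no face moves, no $M$-moves, and no $y$-slice moves, and every $J$-move rotates the $x$-slice of some positive index $j \in \{1,\ldots,m\}$. I would first argue that no move before $m_\alpha$ (including any factor of $a_1$) affects a tracked sticker: $a_1$ is composed of non-face $x$-slice turns that leave the $+x$ face fixed; every $O$- or $T$-move preceding $m_\alpha$ rotates a $z$-slice whose index differs from $m+i$; and every preceding $J$-move rotates a non-face $x$-slice, again leaving the $+x$ face fixed.

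Next I would apply $m_\alpha$ and split into cases. If $m_\alpha$ is the clockwise turn, the tracked sticker moves to $(-y_0,-s/2,m+i)$ on the $-y$ face; if $m_\alpha$ is the flip, it moves to $(-s/2,y_0,m+i)$ on the $-x$ face. The key subtlety is that no move strictly between $m_\alpha$ and $m_\beta$ disturbs it. In the clockwise subcase the sticker sits on the $-y$ face with $x=-y_0<0$, so no $J$-move (which rotates an $x$-slice of positive index) can touch it, no $O$- or $T$-move of a different $z$-slice can touch it (wrong $z$), and there are no face or $y$-slice moves at all. In the flip subcase the sticker sits at $x=-s/2$ on the $-x$ face, which lies in no non-face $x$-slice, so again no $J$-move touches it, and the remaining move types are excluded as before.

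Finally I would apply $m_\beta$: a direct coordinate computation, using $(x,y)\mapsto(-x,-y)$ for the flip and $(x,y)\mapsto(y,-x)$ for the clockwise turn, shows that in \emph{both} orderings the sticker ends at $(y_0,s/2,m+i)$ on the $+y$ face, as desired. The main obstacle is the intermediate step: verifying that neither a positive-index $J$-move nor any other surviving move type can reach the sticker while it temporarily sits on the $-y$ or $-x$ face. Once that is established, the agreement of both $T$-move orderings on the final position follows from a routine calculation, and the pigeonhole argument on the $n$ target slots completes the proof.
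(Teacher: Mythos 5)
Your proposal is correct and matches the paper's argument in substance: both rely on the Step 3--4 classification (only $O$-, $T$-, and positive-index $x$-slice $J$-moves remain) to show the tracked stickers are untouched before $m_\alpha$ and between $m_\alpha$ and $m_\beta$, and both use that the two $T$-moves compose to a net counterclockwise quarter turn carrying the $+x$ face to the $+y$ face. The only difference is cosmetic --- you trace the stickers forward from $C_b$ with explicit coordinates, while the paper traces the target positions backward from just after $m_\beta$.
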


\begin{proof}
Let $m_\alpha$ and $m_\beta$ be the two $T$-moves of slice $(m + i)$.

Consider the stickers in positions $z = i$ and $1 \le x \le n$ of face $+y$ immediately after $m_\beta$. These stickers were moved there by $m_\beta$ either from positions $z =i$ and $1 \le y \le n$ of the face $-x$ or from positions $z = i$ and $1 \le -x \le n$ of face $-y$ (depending on whether the second $T$-move is the turn or the flip). 

In either case, none of the moves between $m_\alpha$ and $m_\beta$ could have affected any of these stickers (since the moves in that interval are all either $O$- or $T$- moves moving $z$ slices of other indices or $J$-moves moving $x$ slices with indices $1$ through $n$). Therefore immediately before $m_\alpha$, these stickers were in positions $z =i$ and $1 \le -y \le n$ of the face $+x$. Once again, no moves before that could affect these stickers, so these stickers must have started in that position in $C_b$.

As desired, the stickers with $z = i$ and $1 \le x \le n$ of face $+y$ immediately after the move $m_\beta$ are the ones which started in the corresponding positions $z =i$ and $1 \le -y \le n$ of the face $+x$ in $C_b$.
\end{proof}

\begin{theorem}
$T$ is empty.
\end{theorem}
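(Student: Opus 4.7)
I would prove $T = \emptyset$ by contradiction. Suppose $T \neq \emptyset$ and fix some $i \in T$; let $m_\alpha$ and $m_\beta$ denote the two $T$-moves of slice $(m+i)$, with $m_\alpha$ occurring first. By the results of Step 4, one of $\{m_\alpha, m_\beta\}$ is a clockwise $z$-turn of slice $(m+i)$ and the other is a $z$-flip of that slice; we split into Case A (the clockwise turn is $m_\alpha$) and Case B (the flip is $m_\alpha$).

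The first half of the plan is to invoke Lemma~\ref{lemma:cube_T_move_pattern} to locate a specific family of stickers. Immediately after $m_\beta$, the stickers at $(x, z) = (x_0, m+i)$ on the $+y$ face for each $x_0 \in \{1, \ldots, n\}$ are exactly those that started at $(y, z) = (-x_0, m+i)$ on the $+x$ face in $C_b$. By Step~3's classification of the move types in $m_1, \ldots, m_{k'}$ (no face moves, no $y$-slice moves, $J$-moves restricted to $x$-slices of index in $\{1,\ldots,m\}$, and no further $z$-slice moves of index $m+i$), no subsequent move disturbs the positions $(x_0, m+i)$ on $+y$ with $x_0 \in \{m+1, \ldots, n\}$, so those stickers survive into $C'$. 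Since $+y$ must be green in $C'$ by Step~1, and Theorem~\ref{thm:cube_coloring} identifies the $+x$-face stickers at $(-x_0, m+i)$ with $x_0 > m$ as green, this surviving portion is color-consistent. The same analysis via Lemma~\ref{lemma:cube_O_move_pattern} works for $i \in O$, so the contradiction cannot come from the tracked stickers alone.

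To extract a contradiction that distinguishes $T$ from $O$, I would track a cubie that is carried into the $J$-range during the interval $[m_\alpha, m_\beta]$. The natural candidate is the cubie originally at $(s/2, x_0, m+i)$ in $C_b$ for an appropriate $x_0 \in \{1, \ldots, m\}$; this cubie has a green sticker on its $+x$-facing side (from the $+y$ face of $C_0$, via $b_i$, by Lemma~\ref{lemma:cube_b_i_effect_2} applied at $j = -x_0$). In Case A the clockwise turn $m_\alpha$ transports it to $(x_0, -s/2, m+i)$, whose $x$-coordinate is in the $J$-range, so an intermediate $J$-move of the $x$-slice $x_0$ pulls it out of the $z = m+i$ slice and onto a $\pm z$ face. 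By Step~4 the $J$-move between two consecutive $O$-moves of slices $(m+i_1)$ and $(m+i_2)$ is forced to rotate the $x$-slice with index equal to the unique bit where $l_{i_1}$ and $l_{i_2}$ differ, so for suitably chosen $i$ and neighboring $O$-moves the required $J$-move of slice $x_0$ really does occur in the interval. The subsequent $m_\beta$ (the flip) leaves this cubie alone, and the remaining $J$- and $O/T$-moves keep it trapped in the $x$-slice $x_0$ and cannot restore its outward sticker to the green $+y$ face of $C'$. Case~B is handled by the symmetric statement in which the disturbing $J$-move occurs during $[m_\alpha, m_\beta]$ on the mirror cubie at $(-s/2, x_0, m+i)$ that $m_\beta$ then sweeps into a $J$-slice.

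The hard part will be packaging the case analysis into a uniform statement. One has to control both orderings of $m_\alpha$ and $m_\beta$, the position of $[m_\alpha, m_\beta]$ within the ambient $O$-move sequence (which determines which intermediate $J$- and other-$T$-moves can appear), and the possibility that several $T$-indices coexist so that the interval for one $i \in T$ already contains $T$-moves for other indices of $T$. I expect the cleanest resolution to exploit the tightness $k' = k$ from Step~3: every move is essential, so once a cubie is driven out of the $z = m+i$ slice during the intermediate interval, no later move is free to correct it, and the discrepancy persists into $C'$, contradicting the solved face colors. Having shown $T = \emptyset$, combined with $Z = \emptyset$ from Step~1 we get $O = \{1, \ldots, n\}$, and Step~4's conclusion that the $l_i$ for $i \in O$ in $O$-move order have consecutive Hamming distance $1$ then exhibits a Hamiltonian path through all of $l_1, \ldots, l_n$, so $l_1, \ldots, l_n$ is a yes-instance of Promise Cubical Hamiltonian Path.
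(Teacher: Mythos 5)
Your proposal has a genuine gap in its core argument. The entire second half hinges on an ``intermediate $J$-move of the $x$-slice $x_0$'' occurring strictly between the two $T$-moves $m_\alpha$ and $m_\beta$ of slice $(m+i)$, but nothing forces such a move to exist: the counting from Step~3 only places one $J$-move in each gap between consecutive \emph{$O$-moves}, and the two $T$-moves of a given slice may well be adjacent in the move sequence, or separated only by other $O$- and $T$-moves. You acknowledge this yourself (``for suitably chosen $i$ and neighboring $O$-moves the required $J$-move \ldots really does occur''), but no choice is actually exhibited, and the subsequent claim that a cubie driven out of the $z = m+i$ slice ``cannot be restored'' is asserted only via a vague appeal to the tightness $k' = k$ rather than proved. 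So the contradiction is never actually derived.

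The paper's proof goes the other way around: it exploits the \emph{absence}, not the presence, of an intervening $J$-move. Since there are exactly $|O|-1$ $J$-moves, one per gap between consecutive $O$-moves, the second $T$-move $m_\alpha$ of slice $(m+i_1)$ cannot have a $J$-move on both sides separating it from every $O$-move; hence there is an $O$-move $m_\beta$ of some slice $(m+i_2)$ with no $J$-move between $m_\alpha$ and $m_\beta$. Lemma~\ref{lemma:cube_complex_pairing_result}-style reasoning then shows the stickers at $(j, m+i_1)$ and $(j, m+i_2)$ on the $+y$ face right after both moves are $(m+i_1, m+i_2, j)$-paired and stay paired into the solved $C'$, so each pair is monochromatic; combined with Lemmas~\ref{lemma:cube_O_move_pattern} and~\ref{lemma:cube_T_move_pattern}, this forces the rows $z = i_1$ and $z = i_2$ of the $+x$ face of $C_b$ to carry identical color schemes, contradicting the distinctness of $l_{i_1}$ and $l_{i_2}$ via Theorem~\ref{thm:cube_coloring}. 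Your first half (tracking rows through Lemmas~\ref{lemma:cube_O_move_pattern} and~\ref{lemma:cube_T_move_pattern}) is on the right track, but the missing idea is precisely this comparison of a $T$-row against a nearby $O$-row through the pairing mechanism, rather than tracking a single cubie into the $J$-range.
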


\begin{proof}
Note that $O$ cannot be empty since then the number of $J$-moves would be $|O| - 1 = -1$. 

Suppose for the sake of contradiction that $i_1 \in T$. Consider the second $T$-move of the $z$ slice with index $(m+i_1)$ in move sequence $a_1, m_1, \ldots, m_k$. Call this move $m_\alpha$. The move $m_\alpha$ cannot be seperated from every $O$-move by $J$-moves because if that were the case, there would be two $J$-moves without an $O$-move between them (or in other words there would be two $O$-moves with at least two $J$-moves between them). Thus there must be some $O$-move $m_\beta$ of slice $(m
+i_2)$ that is not seperated from $m_\alpha$ by any $J$-move. 

Consider what happens if we apply the move sequence $a_1, m_1, \ldots, m_k$ to $C_b$ until right after both $m_\alpha$ and $m_\beta$ have occurred. Call this configuration $C_{mid}$. For every $j \in \{1, \ldots, m\}$, the stickers that are in $(x, z)$ coordinates $(j, m+i_1)$ and $(j, m+i_2)$ of face $+y$ in $C_{mid}$ are $(m+i_1, m+i_2, j)$-paired. When transitioning from $C_{mid}$ to $C'$, no index-$(m+i_1)$ or index-$(m+i_2)$ moves occur, and so these stickers are also $(m+i_1, m+i_2, j)$-paired in $C'$. Thus we conclude that the stickers in each pair are the same color. 

Therefore we have that in $C_{mid}$, the stickers on face $+y$ with $z = i_2$ and $1 \le x \le n$ have the same color scheme, call it $S$, as the stickers on face $+y$ with $z = i_1$ and $1 \le x \le n$. Before we reach the configuration $C_{mid}$, the final few moves are a sequence of $O$-moves and $T$-moves including $m_\alpha$ and $m_\beta$. Furthermore, among these $O$-moves and $T$-moves, none that occur after $m_\alpha$ affect the stickers with $z = i_1$ and none that occur after $m_\beta$ affect the stickers with $z = i_2$. Therefore the color scheme of the stickers in positions $z = i_2$ and $1 \le x \le n$ of face $+y$ immediately after $m_\beta$ is the same as $S$: the color scheme of those stickers in $C_{mid}$. Similarly, the color scheme of the stickers in positions $z = i_1$ and $1 \le x \le n$ of face $+y$ immediately after $m_\alpha$ is also $S$. Using Lemmas~\ref{lemma:cube_O_move_pattern} and~\ref{lemma:cube_T_move_pattern}, we conclude that the color scheme of the stickers in positions $z = i_2$ and $1 \le -y \le n$ of face $+x$ in configuration $C_b$ is $S$ and that the color scheme of the stickers in positions $z = i_1$ and $1 \le -y \le n$ of face $+x$ in configuration $C_b$ is also $S$. This, however, is a contradiction, since those two color schemes in $C_b$ are different for any two different $i_1$ and $i_2$ (see Theorem~\ref{thm:cube_coloring}).

We conclude that $i_1 \in T$ cannot exist, and therefore that $T$ is empty.
\end{proof}

This completes the proof of Theorem~\ref{thm:cube_second_direction} outlined in Section~\ref{section:cube_second_direction}.

\subsection{Conclusion}

Theorems~\ref{thm:cube_first_direction} and~\ref{thm:cube_second_direction} and Corollaries~\ref{corollary:cube_first_direction} and~\ref{corollary:cube_second_direction} show that the polynomial-time reductions given are answer preserving. As a result, we conclude that 

\begin{theorem}
The STM/SQTM Rubik's Cube and Group STM/SQTM Rubik's Cube problems are NP-complete.
\end{theorem}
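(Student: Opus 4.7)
The plan is to assemble the final NP-completeness statement by combining three ingredients that have already been established in the paper. First, membership in NP for all four variants (STM Rubik's Cube, SQTM Rubik's Cube, and their Group versions) was proved in the NP-membership subsection using the polynomial diameter bound of $\Theta(n^2/\log n)$ on the Cayley graph of $RC_n$ from \cite{demaine}: a nondeterministic machine guesses at most $\min(k, p(n))$ moves and verifies that they invert $t$ or solve $C_t$. Second, the reduction from Promise Cubical Hamiltonian Path produces $(C_t, k)$ or $(t, k)$ in time polynomial in $n$ and $m$, since $k = 2n-1$ and $t$ is an explicit product of $O(nm)$ generators of $RC_s$ with $s = 6n + 2m$. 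Third, the reduction is answer-preserving in both directions for all four variants, by Theorem~\ref{thm:cube_first_direction} and Corollary~\ref{corollary:cube_first_direction} (forward direction) together with Theorem~\ref{thm:cube_second_direction} and Corollary~\ref{corollary:cube_second_direction} (reverse direction).

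With these pieces in hand, the proof itself will be only a few sentences. I would begin by invoking NP-membership from the earlier subsection. Then I would observe that the Promise Cubical Hamiltonian Path problem is NP-hard by the theorem in Section~\ref{section:promise_problems}, so any polynomial-time answer-preserving reduction from it to one of our Rubik's Cube problems suffices to establish NP-hardness of that problem. Finally, I would cite the four answer-preservation results above to confirm that, for any instance $l_1, \ldots, l_n$ of Promise Cubical Hamiltonian Path, the produced instance $(C_t, k)$ (respectively $(t, k)$) is a ``yes'' instance of the STM/SQTM Rubik's Cube problem (respectively the Group STM/SQTM Rubik's Cube problem) if and only if $l_1, \ldots, l_n$ is a ``yes'' instance. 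Together with NP-membership, this yields the four desired NP-completeness conclusions.

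Since all the technical work has been carried out in the preceding sections, there is no real obstacle in this final theorem; the only thing to be careful about is to cover both move count metrics (STM and SQTM) and both puzzle variants (plain and Group) in a single sweep. Lemma~\ref{lemma:cube_types_1} relates the Group and non-Group variants for each metric, and Lemma~\ref{lemma:cube_types_2} relates the SQTM and STM variants, so these two lemmas are what let a single chain of implications cover all four problems simultaneously, as already packaged inside Corollaries~\ref{corollary:cube_first_direction} and~\ref{corollary:cube_second_direction}. Thus the proof amounts essentially to one line citing Theorems~\ref{thm:cube_first_direction} and~\ref{thm:cube_second_direction}, Corollaries~\ref{corollary:cube_first_direction} and~\ref{corollary:cube_second_direction}, and the NP-membership argument.
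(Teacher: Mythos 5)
Your proposal is correct and follows essentially the same route as the paper: the paper's own conclusion likewise combines the NP-membership argument from the earlier subsection with the answer-preservation results (Theorem~\ref{thm:cube_first_direction}, Theorem~\ref{thm:cube_second_direction}, Corollary~\ref{corollary:cube_first_direction}, and Corollary~\ref{corollary:cube_second_direction}) and the NP-hardness of Promise Cubical Hamiltonian Path. Your explicit remark that Lemmas~\ref{lemma:cube_types_1} and~\ref{lemma:cube_types_2} are what let a single chain of implications cover all four variants matches exactly how the paper packages those corollaries.
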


\section{Future work}
\label{section:next_steps}

In this paper, we resolve the complexity of optimally solving Rubik's Cubes under move count metrics for which a single move rotates a single slice. It could be interesting to consider the complexity of this problem under other move count metrics.

Of particular interest are the Wide Turn Metric (WTM) and Wide Quarter Turn Metric (WQTM), in which the puzzle solver can rotate any number of contiguous layers together provided they include one of the faces. These move count metrics are the closest to how one would physically solve a real-world $n \times n \times n$ Rubik's Cube: by grabbing some of the slices in the cube (including a face) from the side and rotating those slices together. We can also consider the $1 \times n \times n$ analogue of the Rubik's Cube with WTM move count metric: this would be a Rubik's Square in which a single move flips a contiguous sequence of rows or columns including a row or column at the edge of the Square. Solving this toy model could help point us in the right direction for the WTM and WQTM Rubik's Cube problems. If even the toy model resists analysis, it could be interesting to consider this toy model with missing stickers.

%\section*{Acknowledgments}

\bibliography{cube}
\bibliographystyle{plain}

\end{document}